\g@addto@macro{\UrlBreaks}{\UrlOrds}
\theoremstyle{plain}
\newtheorem{theorem}{Theorem}[section]
\newtheorem{lemma}[theorem]{Lemma}
\newtheorem{claim}[theorem]{Claim}
\newtheorem{corollary}[theorem]{Corollary}
\newtheorem{proposition}[theorem]{Proposition}
\theoremstyle{definition}
\newtheorem{definition}[theorem]{Definition}
\newtheorem{property}[theorem]{Property}
\theoremstyle{remark}
\newtheorem{assumption}[theorem]{Assumption}
\newcommand{\bI}{\mathds{1}}
\newcommand{\bR}{\mathbb{R}}
\newcommand{\bE}{\mathbb{E}}
\newcommand{\ii}{\textit}
\newcommand{\cI}{\mathcal{J}}
\newcommand{\cJ}{\mathcal{J}}
\newcommand{\Exp}{\text{Exp}}
\newcommand{\Bern}{\text{Bern}}
\newcommand{\relerror}{\mathrm{RelError}}
\newcommand{\tone}{t_{\rho_1}}
\newcommand{\ttwo}{t_{\rho_2}}
\newcommand{\tC}{\tilde{C}}
\newcommand{\tS}{\tilde{S}}
\newcommand{\ti}{\tilde{i}}
\newcommand{\partN}{\frac{\partial}{\partial N}}
\newcommand{\fourth}{\frac{1}{4}}
\newcommand{\hb}{\hat{\beta}}
\newcommand{\hg}{\hat{\gamma}}
\newcommand{\pop}{N}
\newcommand{\basspeak}{t_{k^*}}
\newcommand{\basscr}{t_{k^{\rm CR}}}
\newcommand{\sircr}{t^d_{\rm CR}}
\newcommand{\sirpeak}{t^d_{*}}
\newcommand{\sirrtpeak}{t_2^d}
\newcommand{\edit}[1]{#1}
\DeclareMathOperator*{\argmin}{\mathrm{argmin}}
\tikzstyle{every picture} += [>=stealth]
\def\@seccntformat#1{\csname the#1\endcsname.\quad}
\newcommand{\hidefastcompile}[1]{\ifthenelse{\boolean{fastcompile}}{}{#1}}
\newtheorem*{le:ell}{Lemma~\ref{le:ell}}
\newtheorem*{le:feas}{Lemma~\ref{le:feas}}
\newtheorem*{le:sample_complexity}{Lemma~\ref{le:sample_complexity}}
\NewDocumentEnvironment{myproof}{o}
  {\IfNoValueTF{#1}{\paragraph{{\normalfont \textit{Proof.}}}} {\paragraph{{\normalfont \textit{#1.}}}} }
  {\hfill$\square$}
\title{\textsf{\textbf{The Limits to Learning a Diffusion Model}}}
\author{%
 Jackie Baek\thanks{Stern School of Business, NYU.} \\
  \texttt{\small baek@stern.nyu.edu} \\
  \and
  Vivek F. Farias\thanks{Sloan School of Management, MIT.} \\
  \texttt{\small vivekf@mit.edu} \\
  \and
  Andreea Georgescu\thanks{Operations Research Center, MIT.} \\
  \texttt{\small andreeag@mit.edu} \\
  \and
  Retsef Levi\footnotemark[2] \\
  \texttt{\small retsef@mit.edu} \\
  \and
  Tianyi Peng\thanks{Department of Aeronautics and Astronautics, MIT.} \\
  \texttt{\small tianyi@mit.edu} \\
  \and
  Deeksha Sinha\footnotemark[3] \\
  \texttt{\small deeksha.sinha7@gmail.edu} \\
  \and
  Joshua Wilde\footnotemark[3] \\
  \texttt{\small jtwilde@mit.edu} \\
  \and
  Andrew Zheng\footnotemark[3] \\
  \texttt{\small atz@mit.edu} \\
}
\date{\vspace{-12mm}}
\begin{document}
\maketitle

\begin{abstract}
\setstretch{1}
\noindent 
\vspace{-2mm}

This paper provides the first sample complexity lower bounds for the estimation of simple diffusion models, including the Bass model (used in modeling consumer adoption) and the SIR model (used in modeling epidemics). 
We show that one cannot hope to learn such models until quite late in the diffusion. Specifically, we show that the time required to collect a number of observations that exceeds our sample complexity lower bounds is large. 
For Bass models with low innovation rates, our results imply that one cannot hope to predict the eventual number of adopting customers until one is at least two-thirds of the way to the time at which the rate of new adopters is at its peak. In a similar vein, our results imply that in the case of an SIR model, one cannot hope to predict the eventual number of infections until one is approximately two-thirds of the way to the time at which the infection rate has peaked. 
This lower bound in estimation further translates into a lower bound in regret for decision-making in epidemic interventions.
Our results formalize the challenge of accurate forecasting and highlight the importance of incorporating additional data sources.
To this end, we analyze the benefit of a seroprevalence study in an epidemic, 
where we characterize the size of the study needed to improve SIR model estimation.
Extensive empirical analyses on product adoption and epidemic data support our theoretical findings.

\vskip 5pt
\end{abstract}

\setstretch{1.5}


\section{Introduction} \label{sec:intro}

Diffusion models are simple reduced form models (typically described by a system of differential equations) that seek to explain the diffusion of an epidemic in a network. The Susceptible-Infected-Recovered (SIR) model is a classic example, proposed nearly a century ago \citep{kermack1927contribution}. The SIR model remains a cornerstone for the forecasting of epidemics. The so-called Bass model \citep{bass1969new}, proposed over fifty years ago, is similarly another example that remains a basic building block in forecasting consumer adoption of new products and services. The durability of these models arises from the fact that they have shown an excellent fit to data, in numerous studies spanning both the epidemiology and marketing literatures. Somewhat paradoxically, using these same models as reliable forecasting tools presents a challenge. 


While we are ultimately motivated by the problem of forecasting a diffusion model, this paper asks a more basic question that is surprisingly unanswered: {\it What are the limits to learning a diffusion model?} We answer this question by characterizing sample complexity lower bounds for a class of stochastic diffusion models that encompass both the Bass model and the SIR model. 
We show that the time to collect a number of observations that exceeds these lower bounds is too large to allow for accurate forecasts early in the process. 
In the context of the Bass model, our results imply that when adoption is driven by imitation, one cannot hope to predict the eventual number of adopting customers until one is at least two-thirds of the way to the time at which the rate of new adopters is at its peak. In a similar vein, our results imply that in the case of an SIR model, one cannot hope to predict the eventual number of infections until one is approximately two-thirds of the way to the time at which the infection rate has peaked. Our analysis is conceptually simple and relies on the Cramer-Rao bound. The core technical difficulty in our analysis rests in characterizing the Fisher information in the observations available due to the fact that they have a non-trivial correlation structure.

\edit{
Specifically, the SIR and Bass models are each characterized by two parameters that determine the rate of diffusion, as well as a population parameter, denoted by $N$.
Our analysis finds that the bottleneck in learning these models is in estimating the parameter $N$.
In the Bass model, $N$ represents the eventual total number of adopters, while in the SIR model, $N$ represents the `effective population', which is unknown in scenarios where an unknown fraction of infections are reported \citep{li2020substantial,lau2021evaluating,pullano2021underdetection} or an unknown fraction of the population is susceptible. An accurate estimation of $N$ is essential, as several important statistics such as the total number of eventual infections scale with $N$ \citep{weiss2013sir}.

Our main result shows that an accurate estimation of $N$ requires at least $\Omega(N^{2/3})$ observations of the stochastic diffusion model.
The point at which $\Omega(N^{2/3})$ observations are collected corresponds to two-thirds of the time to peak for the SIR model, as well as the Bass model with low innovation rates.
We show that the other parameters of the diffusion models, including those related to the `rate of imitation' (in the Bass model) or the `reproduction number' (in the SIR model) are relatively easy to learn. 

We then establish a lower bound on the regret of an intervention decision problem.
Specifically, we formalize generic decision problem where the decision is whether to impose a `drastic intervention' that is associated with a cost, but will immediately stop all further infections.
The difficulty in the estimation of $N$ translates to the difficulty of this
decision problem --- we show that any policy that makes this decision based on the observations of the diffusion model will incur a regret of $\Omega(N^{2/3})$.

Our results highlight the challenges of using infection trajectories for accurate forecasting and underscore the need to incorporate additional data sources. 
In the context of an epidemic, an example of such a data source can come from 
a seroprevalence study \citep{havers2020seroprevalence,bendavid2021covid}.
We investigate the benefit of such a data source by characterizing the necessary size of the seroprevalence test to meaningfully improve the estimation accuracy of $N$.
Our results show that one can improve upon the $\Omega(N^{2/3})$ lower bound via a sublinear size of the test:
after $\Theta(N^b)$ samples of the diffusion process, a campaign of size $\omega(N^{1-b})$ will lead to an accurate estimation of $N$.

We conduct extensive simulations to corroborate the theoretical results.
We demonstrate that maximum likelihood estimation (MLE) of diffusion models on product adoption datasets (for products on Amazon.com), and epidemic data (COVID-19) illustrate precisely the behavior predicted by our theory. 
We show that our results are robust to versions of the SIR model that capture heterogeneously mixing subpopulations.
Lastly, we describe a heuristic method that was deployed for a real-world COVID-19 forecasting tool for US counties, that used a complex variant of the SIR model that accounted for non-stationarities and rich county-level covariates.
We show that, even in this complex variant of the SIR model, the estimation of $N$ remains a first-order issue.
We develop a heuristic to construct a biased estimator of $N$ that leverages the plurality of counties, which substantially reduces the forecasting error compared to a naive MLE estimator.

}

\subsection{Related Literature}
Diffusion models find broad application in at least two key domains: epidemiology and marketing science. While there is surprisingly little literature that cuts across the two application domains, the dominant themes are quite similar. 

\subsubsection{SIR Model.}
The SIR model \citep{kermack1927contribution} is perhaps the best known and most widely analyzed and used diffusion model in the epidemiology literature. For instance, the plurality of COVID-19 modeling efforts are founded on SIR-type models (eg. \cite{calafiore2020modified, gaeta2020simple, giordano2020modelling, binti2020coronatracker, kucharski2020early, wu2020nowcasting, anastassopoulou2020data, biswas2020covid, chikina2020modeling, massonnaud2020covid, goel2020mobility}). It is common to consider generalizations to the SIR model that add additional states or `compartments' (\cite{giordano2020modelling} is a nice recent example); not surprisingly, learning gets harder as the number of states increases \citep{roosa2019assessing}.

The identifiability of the stochastic SIR model \citep{bartlett1949some, darling2008differential} is not well understood in the literature. In fact, even identification of the deterministic model is a non-trivial matter \citep{evans2005structural}.  
Specifically, calibrating a vanilla SIR model to data requires learning the so-called infectious period and basic reproduction rate. Both these parameters are relatively easy to calibrate with limited data; this is supported both by the present paper, but also commonly observed empirically; see for instance \cite{roosa2019assessing}. 
\edit{
For COVID-19, several empirical works have demonstrated the limitations of using SIR-based models for forecasting \citep{moein2021inefficiency,castro2020turning,bertozzi2020challenges}. 
These works cite several possible reasons for these limitations, from behavioral changes, variations in air pollution, to mixing heterogeneities \citep{moein2021inefficiency}.
Our work raises a fundamental estimation issue that arises even when all of the model assumptions are satisfied, the difficulty in estimating the parameter $N$.

\textbf{Unknown $N$.} 
While assuming that $N$ is `unknown' is not the \textit{default} assumption in the SIR model, because under-reporting is a prevalent issue, existing works incorporate this issue in slightly different ways. For example, several papers explicitly split the `infection' compartment in the SIR model into two, which  represent observed and unobserved infections, and a new parameter is introduced which denotes the probability of an infection being reported \citep{giordano2020modelling,gaeta2020simple,ivorra2020mathematical,mit_delphi-covid}. 
\cite{calafiore2020modified} does not explicitly create new compartments, but simply writes  $I(t) = \alpha \tilde{I}(t)$ for $\alpha > 1$, where $I$ and $\tilde{I}$ represent the true and observed infections respectively.
These approaches are mathematically equivalent to assuming that $N$ is unknown.
An alternative approach is to fit a model to deaths \citep{ihme2021modeling}, 
which suffers less from under-reporting bias.
From deaths, one can recover infections using the so-called infection-fatality ratio (IFR), the fraction of cases that lead to fatalities.
This approach relies on an accurate estimation of the IFR.
Overall, while `unknown $N$' is not the default assumption in the SIR model,
it represents a prevalent issue that many of the existing epidemic forecasting works have incorporated in slightly different ways.
}


\subsubsection{Bass Model.}
The Bass model \citep{bass1969new} remains the best known and most widely analyzed diffusion model in the marketing science literature. The model has found applications in a staggering variety of industries over the past fifty years. Surveys such as \cite{bass2004comments, mahajan2000new, hauser2006research} provide a sense of this breadth, showing that the model and its generalizations have found application in tasks ranging from forecasting the adoption of technologies, brands and products to describing information cascades on services such as Twitter \citep{bakshy2011everyone}. Just as in the case of the SIR model, a number of generalizations of the Bass model have been proposed over the years, including \cite{peterson1978multi, bass1994bass, van2007new}. Similar random processes related to Bass model have also been studied in mathematical immunology \citep{hawkins2007model,duffy2012activation}. 

The Bass model has traditionally been estimated using a variety of weighted least squares estimators; \cite{srinivasan1986nonlinear, jain1990effect} are popularly used examples. The key parameters that must be estimated here are the so-called coefficient of imitation (the analogue of the reproduction number in the SIR model) and the coefficient of innovation (which does not have an analogue in the SIR model). In addition one must estimate the size of the eventual population that will adopt (arguably one of the key quantities one would care to forecast). It has been empirically observed that existing estimation approaches are `unstable' in the sense that estimates of the size of the population that adopts can vary dramatically even half-way through the diffusion model \citep{van1997bias, hardie1998empirical} among other undesirable features. This has been viewed as a limitation of the estimators employed, and has led to corrections to the estimators that purport to address some of these issues \citep{boswijk2005econometrics}. In contrast, our results imply that this behavior is fundamental; as one example we show that no unbiased estimator of the Bass model can hope to learn the population size until at least two-thirds of the way through the diffusion model.


\section{Model}

We first define a general deterministic diffusion model using a system of ODEs.
Our paper focuses on two parameter regimes of this model, which represent the Bass model (\cref{sec:model_bass}) and the SIR model (\cref{sec:model_sir}).
We then describe a stochastic variant of the diffusion model in \cref{sec:stochastic_model};
our main result in \cref{sec:limits_to_learning} describes the limits to learning the parameters of this stochastic model.

\subsection{Deterministic Diffusion Model}
We define a general diffusion model with three `compartments' over an `effective' population of size $N$ \citep{meyn2012markov}. 
Let $s(t), i(t)$ and $r(t)$ be the size of susceptible, infected, and recovered populations respectively, as observed at time $t$, where $s(t) + i(t) + r(t) = N$ for all $t \geq 0$.
The model is defined by the following system of ODEs, specified by the tuple of parameters $(N, \beta, \gamma, p)$:
\begin{align}
\label{eq:sirfluid}
\frac{ds}{dt} &=  - \beta \frac{s}{N} i - ps,   &
\frac{di}{dt} &= \beta \frac{s}{N} i - \gamma i + ps,  &
\frac{dr}{dt} &= \gamma i.
\end{align}
We assume that all parameters are non-negative, and that $\beta > \gamma$. The parameters here that we may need to estimate include $\beta, \gamma, p$ and $N$. 

\subsection{Bass Model ($\gamma=0$)}  \label{sec:model_bass}
The Bass model is the special case of the diffusion model above where $\gamma = 0$ and as already discussed has been variously used to describe the diffusion of a new product, technology, or even information in a population. 
$i$ and $s$ represent the number of people who have and have not adopted the product respectively by time $t$.
Since $\gamma = 0$, there is effectively no $r$ compartment.
The term $\beta \frac{s}{N}i$ represents the instantaneous growth rate in adoption contributed by individuals `imitating' existing adopters, while $ps$ represents the instantaneous growth rate in adoption contributed by 
`innovators' who adopt the product without the influence of existing adopters.
The parameter $\beta$ is often called the {\em coefficient of imitation}\footnote{The marketing science literature will frequently use the letter $q$ in place of $\beta$.} , while $p$ is called the {\em coefficient of innovation}.

In the Bass model, the eventual number of adopters i.e., $\lim_{t \rightarrow \infty} i(t) = N$, is often an important quantity of interest. 
As such, $N$ is a key, unknown parameter to estimate in this setting.
We define an additional parameter $a \triangleq pN$.
Since $s \approx N$ initially, $a$ represents the growth rate of innovators near the beginning of the process.

\subsection{SIR Model ($p=0$)} \label{sec:model_sir}
The SIR model is the simplest compartmental model in epidemiology that models how a disease spreads amongst a population, and it can be described by the diffusion model in the case that $p= 0$. 
The parameter $\gamma$ specifies the rate of recovery; $1/\gamma$ is frequently referred to as the {\em infectious period}. The parameter $\beta > 0$ quantifies the rate of transmission; $\beta/\gamma \triangleq R_0$ is also referred to as the {\em basic reproduction number}.

In using the SIR model to model an epidemic where only a fraction of all infections are observed (due to, for example, asymptomatic cases and limited testing) the $N$ parameter is effectively the actual population of the region being modeled multiplied by the fraction of observed infections. If the fraction of observed infections is unknown (which it typically is), then $N$ is effectively unknown. Specifically, the following proposition\footnote{An analogous result for a discrete-time model was shown in \cite{calafiore2020modified}.} shows that the quantities corresponding to observing a constant fraction of an SIR model also constitutes an SIR model with the same parameters $\beta$ and $\gamma$.

\begin{proposition}
\label{prop:constant_fraction}
Let $\{(s'(t),i'(t),r'(t)): t\geq 0\}$ be a solution to \eqref{eq:sirfluid} for parameters $N = N', \beta = \beta', \gamma = \gamma'$ and initial conditions $i(0) = i'(0), s(0) = s'(0)$. Then, for any $\eta > 0$, $\{(\eta s'(t), \eta i'(t), \eta r'(t)): t\geq 0\}$ is a solution to \eqref{eq:sirfluid} for parameters $N = \eta N', \beta = \beta', \gamma = \gamma'$ and $i(0) = \eta i'(0), s(0) = \eta s'(0)$.
\end{proposition}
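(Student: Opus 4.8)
The plan is a one-line substitution argument: plug the rescaled trajectory into the ODE system \eqref{eq:sirfluid} with the rescaled parameters and verify that every equation, together with the initial conditions and the population identity $s+i+r=N$, is satisfied.

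Concretely, I would set $\tilde s(t) \triangleq \eta s'(t)$, $\tilde i(t) \triangleq \eta i'(t)$, $\tilde r(t) \triangleq \eta r'(t)$. The initial conditions $\tilde s(0) = \eta s'(0)$ and $\tilde i(0) = \eta i'(0)$ hold by construction, and the constraint $\tilde s(t) + \tilde i(t) + \tilde r(t) = \eta\bigl(s'(t)+i'(t)+r'(t)\bigr) = \eta N'$ follows from the corresponding identity for the given (primed) trajectory. The substantive step is checking the three differential equations, and the key observation is that the right-hand side of \eqref{eq:sirfluid} is homogeneous of degree one under the simultaneous scaling $(s,i,r,N)\mapsto(\eta s,\eta i,\eta r,\eta N)$: the terms $ps$ and $\gamma i$ scale by $\eta$ trivially, while the bilinear term obeys $\beta\frac{\eta s}{\eta N}\,\eta i = \eta\cdot\beta\frac{s}{N}i$, since the extra factor of $\eta$ in the numerator cancels the $\eta$ in $N$. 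Hence $\frac{d\tilde s}{dt} = \eta\frac{ds'}{dt} = \eta\bigl(-\beta'\frac{s'}{N'}i' - ps'\bigr) = -\beta'\frac{\tilde s}{\eta N'}\tilde i - p\tilde s$, which is precisely the $s$-equation of \eqref{eq:sirfluid} evaluated at $(\tilde s,\tilde i,\tilde r)$ with $N=\eta N'$ and $\beta=\beta'$; the identical computation works verbatim for the $i$- and $r$-equations.

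I do not expect any genuine obstacle here — the proposition is essentially the statement that this vector field is scale-invariant in the population size $N$. The only point requiring care is the bookkeeping of the $\eta$ factors in the $\beta\frac{s}{N}i$ term: this is exactly where the hypothesis that $N$ also rescales (to $\eta N'$) is used, and the claim would fail if $N$ were held fixed. If desired, one can additionally invoke Picard--Lindel\"of (the vector field is polynomial, hence locally Lipschitz) to strengthen ``is a solution'' to ``is the unique solution'' with the stated initial data, but this is not needed for the proposition as written.
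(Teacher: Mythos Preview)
Your proof is correct and, in fact, more direct than the paper's. The paper proves the proposition by first writing down the analytical solution of the SIR system in the integral form of \cite{miller2017mathematical, miller2012note} (via an auxiliary variable $\xi'(t) = \frac{\beta'}{N'}\int_0^t i'(t^*)\,dt^*$), and then checking that the scaled triple $(\eta s', \eta i', \eta r')$ satisfies that same integral representation with $N'$ replaced by $\eta N'$. Your argument bypasses this entirely by exploiting the degree-one homogeneity of the vector field under $(s,i,r,N)\mapsto(\eta s,\eta i,\eta r,\eta N)$ and verifying the ODEs directly. This buys simplicity and generality (your computation works verbatim for the full model with $p\neq 0$, whereas the paper's analytical-solution route is specific to the SIR case $p=0$); the paper's approach, on the other hand, makes the dependence on $N$ in the closed-form solution explicit, which may be conceptually useful elsewhere but is not needed for the proposition as stated.
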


In words, suppose a disease spreads according to an SIR model amongst the entire population of (known) size $N'$. Suppose we only observe a constant fraction from this process, where this fraction $\eta$ is unknown.
The proposition above states that the observed process is also an SIR model with the same parameters $\beta$ and $\gamma$, and an effectively unknown population $N=\eta N'$.
%
%
%

It is known that both cumulative and peak infections scale with $N$ \citep{weiss2013sir}.
As these are often the key quantities of interest, 
estimating $N$ accurately is a critical task.

\subsection{Stochastic Diffusion Model} \label{sec:stochastic_model}
In the deterministic diffusion model, all parameters are identifiable if $i(t)$ is observable over an {\em infinitesimally small} period of time in either of the two regimes.
Specifically:
\begin{proposition}
\label{prop:identity_fluid}
Suppose either $p = 0$ or $\gamma = 0$. Let $i(t)$ be observed over some open set in $\mathbb{R}_+$. Then the parameters $(N, \beta, \gamma, p)$ are identifiable.
\end{proposition}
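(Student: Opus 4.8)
The plan is to convert the observation of $i(\cdot)$ on an open set into global information. The right-hand side of \eqref{eq:sirfluid} is polynomial in the state and the relevant trajectories are bounded, hence globally defined and real-analytic; since $\mathbb{R}_+$ is connected, knowing $i$ on any nonempty open subset determines $i(t),\dot i(t),\ddot i(t)$ for every $t\ge 0$, including the initial data at $t=0$. I would then handle the two regimes separately, in each case writing the unknown parameters as explicit functions of the observed trajectory, so that identifiability holds in the strongest sense. The two regimes are themselves distinguishable from $i(\cdot)$ — for instance $i$ is monotone increasing in the Bass regime but rises and then decays back toward $0$ in the non-degenerate SIR regime — so there is no ambiguity about which case applies.

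In the Bass regime ($\gamma=0$) there is no recovered compartment, so $s(t)+i(t)=N$; substituting $s=N-i$ into the equation for $di/dt$ gives $\dot i(t)=q(i(t))$ with $q(x)=pN+(\beta-p)x-(\beta/N)x^2$, a quadratic whose leading coefficient is strictly negative (since $\beta>\gamma=0$). Provided some adoption occurs, $i$ is continuous and non-constant, hence sweeps out a nondegenerate interval of values, and a quadratic is determined by its values on such an interval; so matching $\dot i(t)$ with $q(i(t))$ pins down all three coefficients of $q$. Inverting, $\beta/N$ is minus the leading coefficient; $N$ is the unique positive root of $q$ — indeed $q(N)=0$ identically, while $q(0)=pN\ge 0$ together with concavity forces the other root to be nonpositive — and then $\beta$ and $p$ follow immediately; $\gamma=0$ is given.

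In the SIR regime ($p=0$) set $\kappa:=\beta/N$; the $(s,i)$ subsystem is $\dot s=-\kappa s i$ and $\dot i=(\kappa s-\gamma)i$, so $s=(\dot i/i+\gamma)/\kappa$. Differentiating this expression for $s$ and substituting into $\dot s=-\kappa s i$ gives, after clearing denominators, the pointwise identity $i\ddot i-\dot i^2+\kappa i^2\dot i+\kappa\gamma i^3=0$, which is linear in the two unknowns $\kappa$ and $\kappa\gamma$. Evaluating it at two times where $\dot i/i$ takes different values — which must happen unless $s$ is constant on the window, a degenerate case — solves for $\kappa$ and $\kappa\gamma$, hence for $\beta/N$ and $\gamma$. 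With $\kappa$ and $\gamma$ in hand, $s(t)=(\dot i(t)/i(t)+\gamma)/\kappa$ is a known function, and the initial condition $r(0)=0$ then yields $N=s(0)+i(0)$, so $\beta=\kappa N$; $p=0$ is given.

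I expect the delicate points to lie in the SIR case. The minor one is the non-degeneracy needed to invert the linear system for $(\kappa,\kappa\gamma)$: one must rule out $\dot i/i$ being constant on the observation window, which would make the epidemic trivial. The substantive one is that $\beta$ and $N$ cannot be recovered from $i(\cdot)$ alone — the $(s,i)$ dynamics depend on these only through the ratio $\kappa=\beta/N$, so absent information about the recovered compartment one can rescale $(N,\beta)\mapsto(cN,c\beta)$ and shift $r(0)$ correspondingly without changing $i(\cdot)$ (precisely the scaling of Proposition~\ref{prop:constant_fraction}). Recovering $N$ therefore genuinely uses the model's initial condition $r(0)=0$, which supplies exactly the one missing equation, and I would be explicit about that dependence. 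No analogous step is needed in the Bass case, where $s+i=N$ already ties $s$ to $i$ and $N$.
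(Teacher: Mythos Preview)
Your argument is correct, and it takes a genuinely different route from the paper's. The paper appeals to closed-form (or semi-explicit) solutions: for the Bass case it uses the classical formula $i(t)=N\frac{1-e^{-(p+\beta)t}}{(\beta/p)e^{-(p+\beta)t}+1}$, and for the SIR case it uses the Miller parametrization $i = N - s_0 e^{-(\beta/N)x} - \gamma x$ with $x=\int_0^t i\,dt'$; in both cases it invokes the identity theorem for holomorphic functions to extend the equality from an interval to all of $\mathbb{R}$, and then reads off each parameter by sending $x$ (or $t$) to infinity. Your approach instead stays at the ODE level: real-analyticity is used once, up front, to propagate $i$ from the observation window to all of $[0,\infty)$, and the parameters are then extracted by purely algebraic manipulation of $i,\dot i,\ddot i$ --- recognizing $\dot i=q(i)$ as a quadratic in the Bass case, and deriving the relation $i\ddot i-\dot i^2+\kappa i^2\dot i+\kappa\gamma i^3=0$ linear in $(\kappa,\kappa\gamma)$ in the SIR case. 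This is more elementary in that it never needs the explicit solution formulas, and it would transfer more readily to variants where no closed form is available. You are also more explicit than the paper about the one genuinely model-level input in the SIR case: the assumption $r(0)=0$ is what breaks the $(N,\beta)\mapsto(cN,c\beta)$ scaling symmetry and allows $N$ to be recovered; the paper simply writes ``consider initial conditions $(s(0),i(0),0)$'' without flagging why this matters. The aside about distinguishing the two regimes from $i(\cdot)$ alone is not needed --- the proposition hypothesizes which regime holds --- but does no harm.
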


Noise --- an essential ingredient of any real-world model --- dramatically alters this story. We next describe a natural continuous-time Markov chain variant of the deterministic diffusion model, proposed at least as early as \cite{bartlett1949some}. Specifically, the stochastic diffusion model, $\{(S(t),I(t),R(t)): t \geq 0\}$, is a multivariate counting process, with right-continuous-with-left-limits (RCLL) paths,  determined by the parameters $(N, \beta, \gamma, p)$. The jumps in this process occur at the rate in \eqref{eq:deftk}, and correspond either to a new observed infection or adopter (where $I(t)$ increments by one, and $S(t)$ decrements by one) or to a new observed recovery (where $I(t)$ decrements by one, and $R(t)$ increments by one). Let $C(t) = I(t) + R(t)$ denote the cumulative number of infections or adoptions observed up to time $t$. Denote by $t_k$ the time of the $k$th jump, and let $T_k$ be the time between the $(k-1)$st and $k$th jumps. Finally, let $I_k \triangleq I(t_k)$, and similarly define $R_k, S_k$ and $C_k$. The stochastic diffusion model is then completely specified by:
\begin{align}
C_k - C_{k-1} &\sim \Bern\left\{\frac{S_{k-1} (\beta I_{k-1} + pN)}{S_{k-1} (\beta I_{k-1} + pN) + N \gamma I_{k-1}} \right\}, \label{eq:defck}\\
T_k &\sim \Exp\left\{ \frac{\beta S_{k-1}}{N} I_{k-1} + pS_{k-1} + \gamma I_{k-1} \right\}. \label{eq:deftk}
\end{align}
It is well known that solutions to the deterministic diffusion model~\eqref{eq:sirfluid} provide a good approximation to sample paths of the diffusion model (described by \eqref{eq:defck}, \eqref{eq:deftk}) in the so-called fluid regime; see \cite{wormald1995differential,darling2008differential}.

The next section analyzes the rate at which one may hope to learn the unknown parameters $(N, \beta, \gamma, p)$ as a function of $k$; our key result will illustrate that in large systems, $N$ is substantially harder to learn than $\beta$ or $\gamma$. In turn this will allow us to show that we cannot hope to learn the stochastic diffusion model described above until quite late in the diffusion.




\section{Limits to Learning} \label{sec:limits_to_learning}

This section characterizes the rate at which one may hope to learn the parameters of the stochastic diffusion model, simply from observing the process.

\textbf{Observations: }Define the stopping time $\tau = \inf \{k: I_k = 0 \text{ or } I_k = N\}$; clearly $\tau$ is bounded. For clarity, when $k > \tau$, we define $C_k = C_{k-1}$, $I_k = I_{k-1}$, and $T_k = \infty$.
Note that $I_k$ and $R_k$ are deterministic given $C_k$, $I_0$, and $R_0$.
We define the $m$-th information set $O_m = (I_0, R_0, T_1, C_1, \dots, T_m, C_m)$ for all $m \geq 1$.

\textbf{Evaluation Metric: }
For any parameter $\theta$, suppose $\hat{\theta}_m$ is an estimator based on the observations $O_m$.
We define the \ii{relative error} of $\hat{\theta}_m$ as:
\begin{align*}
\relerror(\hat{\theta}_m, \theta) \triangleq \frac{(\hat{\theta}_m - \theta)^2}{\theta^2}.
\end{align*}
A relative error of 1 implies that the absolute error of the estimator is the same size as the true parameter.
Therefore, in order to estimate a parameter $\theta$, it is reasonable to require that the relative error be at most 1, and ideally shrinking to 0.
Our goal is to find the regime of $m$ relative to $N$ such that $\relerror(\hat{\theta}_m, \theta) = o(1)$.

Our main theorem lower bounds the relative error of any unbiased estimator of the parameter $N$.
We first state the exact assumptions necessary for the two regimes:

\begin{assumption}[Bass Model] \label{assump:bass}
Assume $\gamma = 0$. Consider a sequence of systems of increasing size $N$, and $\beta$ and $a=pN$ are known constants. Assume $I_0 = 1, R_0 = 0$.
\end{assumption}

\begin{assumption}[SIR Model] \label{assump:sir}
Assume $p = 0$. Consider a sequence of systems of increasing size $N$, and $\beta$ and $\gamma$ are known constants. Assume $I_0$ is a sufficiently large constant and $R_0=o(N).$
\end{assumption}

We now state our main result.
\begin{theorem}\label{co:cr}
Under \cref{assump:bass} or \cref{assump:sir}, if $m=o(N)$ and $\hat N_m$ is any unbiased estimator of $N$ based on the observations $O_{m}$,
\begin{align}
\bE[\relerror(\hat{N}_m, N)] = \Omega\left(\frac{N^2}{m^3}\right).
\end{align}
\end{theorem}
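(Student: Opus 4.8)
The plan is to apply the Cram\'er--Rao bound. In both regimes every parameter other than $N$ is known, so estimating $N$ is a genuine one-parameter problem (in the Bass case $p=a/N$ is pinned down by the known $a$). Hence for any unbiased $\hat N_m$ we have $\var(\hat N_m)\ge 1/\mathcal I_m(N)$, where $\mathcal I_m(N)$ is the Fisher information about $N$ contained in $O_m$, and since $\bE[\relerror(\hat N_m,N)]=\var(\hat N_m)/N^2$ for an unbiased estimator it suffices to establish the upper bound $\mathcal I_m(N)=O(m^3/N^4)$, with constants allowed to depend on the fixed $\beta,\gamma$.

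To get at $\mathcal I_m(N)$ I would first write the likelihood of $O_m=(I_0,R_0,T_1,C_1,\dots,T_m,C_m)$ using the Markov structure: conditional on $O_{k-1}$, the increment $(T_k,\,C_k-C_{k-1})$ is a product of an $\Exp(\lambda_k)$ density and a $\Bern(q_k)$ mass, with $\lambda_k$ and $q_k$ the rate and Bernoulli parameter of \eqref{eq:deftk}--\eqref{eq:defck} evaluated at $(S_{k-1},I_{k-1})$. The key structural observation is that, given the realized data, the dependence of $\lambda_k$ and $q_k$ on $N$ is only through $S_{k-1}=N-C_{k-1}$ and through the explicit factors of $N$ --- and $C_{k-1}\le C_0+(k-1)=O(m)=o(N)$ \emph{deterministically} (using $I_0,R_0\le m$ in the SIR case, and $C_k=k+1$ exactly in the Bass case, where $q_k\equiv1$). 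Differentiating, this forces $\partial_N\log\lambda_k=O(m/N^2)$ and $\partial_N q_k=O(m/N^2)$ uniformly over all histories.

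Next I would show the Fisher information is additive over steps. Writing the score of $O_m$ as the sum of the per-step scores of the conditionally independent increments $T_k$ and $C_k-C_{k-1}$, each per-step score has conditional mean zero given $O_{k-1}$, so all cross terms vanish and $\mathcal I_m(N)=\sum_{k=1}^m\bE\!\big[(\partial_N\log\lambda_k)^2+(\partial_N q_k)^2/(q_k(1-q_k))\big]$, the sum effectively truncated at $\tm$. In the SIR regime $q_k=\beta(N-C_{k-1})/(\beta(N-C_{k-1})+N\gamma)$ stays bounded away from $0$ and $1$ because $C_{k-1}/N\to0$ and $\beta>\gamma>0$, so $q_k(1-q_k)=\Theta(1)$ and each summand is $O(m^2/N^4)$; summing over $k\le m$ gives $\mathcal I_m(N)=O(m^3/N^4)$. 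The Bass case is even simpler: $C_k$ is deterministic, so the only randomness is in the $T_k$'s, and a direct computation gives $\partial_N\log\lambda_k=k/(N(N-k))$, whence $\mathcal I_m(N)=\sum_{k=1}^m k^2/(N^2(N-k)^2)\le (N-m)^{-2}N^{-2}\sum_{k=1}^m k^2=O(m^3/N^4)$. Plugging into Cram\'er--Rao yields $\var(\hat N_m)=\Omega(N^4/m^3)$ and hence $\bE[\relerror(\hat N_m,N)]=\Omega(N^2/m^3)$.

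The hard part will be handling the correlation structure: a priori $\lambda_k$ and $q_k$ are random functions of the entire past, so the observations are far from i.i.d. The resolution is the martingale decomposition of the score (which makes $\mathcal I_m$ exactly additive) combined with the deterministic $o(N)$ bound on $C_{k-1}$ (which controls every per-step term uniformly, regardless of how the history unfolds). A secondary point to check is Cram\'er--Rao regularity: because $m=o(N)$ and $C_m=o(N)$, the level $I_k=N$ cannot be reached within the first $m$ steps, so the stopping rule defining $\tau$ can only be triggered through $I_k=0$; consequently the support of $O_m$ is a fixed set not depending on $N$, on which the likelihood is smooth in $N$ with everywhere-positive density --- exactly the regularity the bound requires.
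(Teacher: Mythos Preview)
Your proposal is correct and follows essentially the same route as the paper: decompose the Fisher information additively via the chain rule/martingale structure of the score, use the deterministic bound $C_{k-1}=O(m)=o(N)$ to control each per-step term as $O(m^2/N^4)$, sum to get $\mathcal I_m(N)=O(m^3/N^4)$, and invoke Cram\'er--Rao. The paper actually proves the sharper two-sided estimate $\mathcal I_m(N)=\Theta(m^3/N^4)$ (its Theorem on Fisher information), the lower half of which requires a stochastic-dominance argument to bound $\Pr(\tau>k)$ away from zero; you correctly observe that only the upper bound on $\mathcal I_m(N)$ is needed for the stated $\Omega(N^2/m^3)$ lower bound on relative error, and your explicit check of Cram\'er--Rao regularity (the support of $O_m$ not depending on $N$ because $I_k=N$ is unreachable in $m=o(N)$ steps) is a point the paper leaves implicit.
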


\cref{co:cr} is the core result of this work. 
Observe that to have $\bE[\relerror(\hat{N}_m, N)] = o(1)$, we must have $m = \omega(N^{2/3})$.
That is, in order for the error of any unbiased estimator to be smaller than the value of $N$ itself,
the number of adopters in a Bass model or the number of infected people in an SIR model needs to surpass $\sim N^{2/3}$ observations\footnote{Note that this implication is independent of initial conditions (i.e., $I_0,R_0$): a partial observation can only render the estimation harder. See Appendix~\ref{sec:generation-finite-sample} for a result that generalizes \cref{co:cr}.}.
The magnitude of $N^{2/3}$ can be consequential in practice. For example, for $N=10M$, this corresponds to 45k infections. This no-go theorem provides a new  insight for understanding the difficulties of estimating diffusion processes in early stages: $N$ plays a key role of driving such difficulties in practical applications (e.g., see real-data experiments in \cref{sec:exp:benchmark_datasets}).
%

The intuition of this no-go result can be best illustrated by \cref{fig:log_scale}, a plot of deterministic diffusion models with (largely) varying $N$ with other parameters fixed. This illustration shows that different diffusion processes share similar increasing curves for a significant amount of time before diverging. Although the differentiation of these processes in theory is easy due to their deterministic nature (see \cref{prop:identity_fluid}), incorporating noise renders this differentiation impossible. Our \cref{co:cr} then quantifies the exact hardness of such differentiation when noise is presented; further, it discovers a precise (yet unexpected) transition point in terms of sample-complexity: $N^{2/3}.$

\begin{figure}[h]
\vspace{-0.3cm}
\begin{center}
  \includegraphics[width=0.9\linewidth]{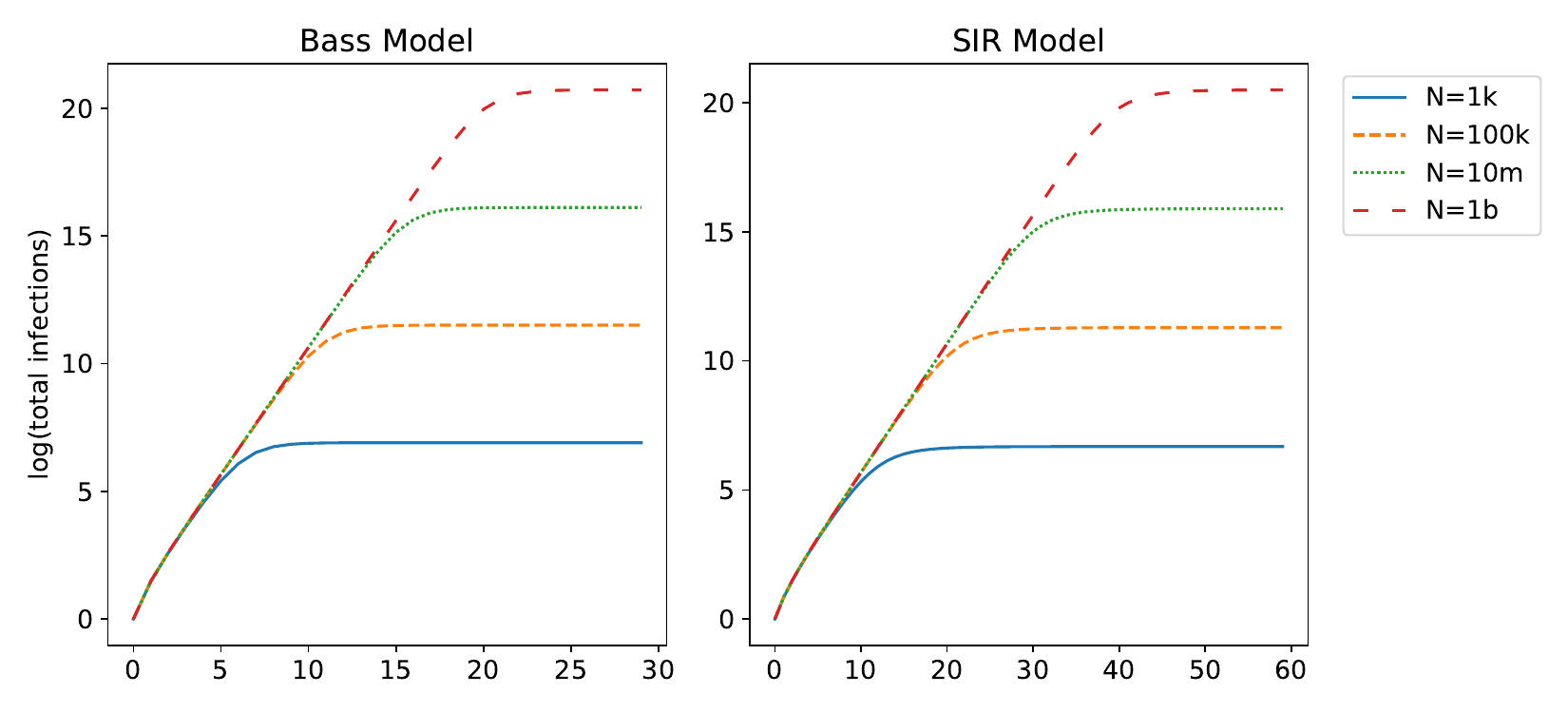}
  \caption{Log of cumulative infections for the deterministic Bass and SIR model with varying $N$. The left figure corresponds Bass models with $\beta=1, p=1/N$. The right figure corresponds to SIR models with $\beta=1, \gamma=0.5$. }
  \label{fig:log_scale}
\end{center}
\end{figure}

The general statement of \cref{co:cr} is a finite-sample result that holds for any initial conditions (see Appendix~\ref{sec:generation-finite-sample}), which is a direct consequence of applying the Cramer-Rao bound to the following theorem that characterizes the Fisher information of $O_m$ relative to $N$.
\begin{theorem} \label{thm:fi}
Under \cref{assump:bass} or \cref{assump:sir}, if $m=o(N)$, then the Fisher information of $O_{m}$ relative to $N$ is
\begin{align}
\cI_{O_{m}}(N) = \Theta\left(\frac{m^3}{N^4}\right).
\end{align}
\end{theorem}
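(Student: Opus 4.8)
The plan is to compute the Fisher information of $O_m$ relative to $N$ by exploiting the sequential (chain-rule) structure of the likelihood, and then to show that almost all of the information comes from the ``early'' fluid-like regime where $C_k \approx k$, $I_k$ is growing, and $S_k \approx N$. First I would write the log-likelihood of $O_m = (I_0, R_0, T_1, C_1, \dots, T_m, C_m)$ as a sum of the conditional log-densities of $(T_k, C_k)$ given the past, using \eqref{eq:defck} and \eqref{eq:deftk}: the $C_k$ increment is Bernoulli with a parameter that depends on $N$, and $T_k$ is exponential with a rate that depends on $N$. Since the observations form a Markov-type filtration, Fisher information is additive over $k$: $\cI_{O_m}(N) = \sum_{k=1}^m \bE\!\left[\cI_k(N)\right]$, where $\cI_k(N)$ is the conditional Fisher information contributed by the $k$th pair given $O_{k-1}$. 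This reduces the problem to (i) computing each $\cI_k(N)$ explicitly as a function of the state $(S_{k-1}, I_{k-1})$, and (ii) controlling the trajectory of the state well enough to evaluate the expectation of the sum.

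For step (i), I would differentiate the two conditional log-densities in $N$. For the Bernoulli part, the success probability $q_k = \frac{S_{k-1}(\beta I_{k-1} + pN)}{S_{k-1}(\beta I_{k-1}+pN) + N\gamma I_{k-1}}$ contributes information $\frac{(\partial_N q_k)^2}{q_k(1-q_k)}$; in the Bass regime ($\gamma=0$) this is zero, and in the SIR regime ($p=0$) one checks this term is $O(1/N^2)$ per step, hence negligible. The dominant contribution is from the exponential holding times: the rate $\lambda_k = \frac{\beta S_{k-1}}{N} I_{k-1} + p S_{k-1} + \gamma I_{k-1}$ has $\partial_N \lambda_k = -\frac{\beta S_{k-1} I_{k-1}}{N^2}$ (plus a term from $p = a/N$ in the Bass case, which I'd fold in), and the Fisher information of an exponential with rate $\lambda$ about a parameter is $(\partial_N \lambda_k / \lambda_k)^2$. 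In the early regime $S_{k-1}\approx N$, so $\partial_N\lambda_k/\lambda_k \approx -\frac{\beta I_{k-1}/N}{\beta I_{k-1} + (\text{lower order})} \asymp \frac{1}{N} \cdot \frac{\beta I_{k-1}}{\beta I_{k-1} + a}$ — i.e. of order $I_{k-1}/N^2 \cdot (\text{something})$ once $I_{k-1}$ dominates, giving per-step information $\Theta(I_{k-1}^2/N^4)$ roughly speaking; I'd need to be a little careful about the exact scaling but the point is it's quadratic in $I_{k-1}/N$ up to constants. Summing over $k$, $\cI_{O_m}(N) \asymp \frac{1}{N^4}\sum_{k=1}^m I_{k-1}^2$ (with appropriate bounded correction factors).

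For step (ii) I need that $I_{k-1} = \Theta(k)$ with high probability throughout $k \le m = o(N)$. In this regime the process is in its exponential-growth / linear-in-$k$ phase before saturation: each new event is an ``infection/adoption'' with probability bounded away from $0$ (in the SIR case because $\beta > \gamma$ and $S_{k-1}\approx N$; in the Bass case trivially since $\gamma = 0$), and recoveries are a bounded fraction, so $C_k = \Theta(k)$ and $I_k = C_k - R_k = \Theta(k)$ as long as we haven't reached the peak — which we haven't, since $k = o(N)$. Here \cref{assump:sir}'s condition $I_0 \ge D$ is what's needed to prevent early extinction ($I_k$ hitting $0$): a standard branching-process comparison shows that for $I_0$ large enough the process survives and grows linearly with probability close to $1$, and on the complement the contribution is lower order. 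Plugging $I_{k-1} = \Theta(k)$ into $\frac{1}{N^4}\sum_{k=1}^m I_{k-1}^2 = \Theta\!\left(\frac{1}{N^4}\sum_{k=1}^m k^2\right) = \Theta(m^3/N^4)$ gives the claim. For the matching upper bound $O(m^3/N^4)$ I'd use $I_{k-1} \le C_k \le m$ deterministically, so each per-step term is $O(m^2/N^4)$ and the sum is $O(m^3/N^4)$; combined with the lower bound this pins down $\Theta(m^3/N^4)$.

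**Main obstacle.** The delicate part is (ii): rigorously controlling the sample path of $I_k$ — both the lower bound $I_{k-1} = \Omega(k)$ with high enough probability (needs the survival/non-extinction argument and a concentration bound for the linear growth, where the SIR regime's constant $D$ and the $\beta > \gamma$ assumption enter) and ensuring the fluid approximation $S_{k-1} = N(1 - o(1))$ and $R_{k-1} = O(k)$ hold uniformly over $k \le m$ so that the per-step rate ratios $\partial_N\lambda_k/\lambda_k$ have the clean form above. I expect this to require a union bound over $k \le m$ using exponential (Chernoff/Azuma) tail bounds on $C_k$ and $R_k$, plus an argument that the exceptional event contributes at most $O(m^3/N^4)$ to the expected Fisher information (which is automatic from the deterministic $I_{k-1} \le m$ bound as long as the exceptional probability is not too large). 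The computation of $\cI_k(N)$ itself, and the final summation, are routine once the trajectory is controlled.
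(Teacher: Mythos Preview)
Your overall architecture matches the paper's: decompose $\cI_{O_m}(N)$ via the chain rule into per-step conditional Fisher informations, compute these from the Bernoulli and exponential pieces, and control the trajectory (survival plus $C_k=\Theta(k)$) to sum. The trajectory-control part is essentially right, and is what the paper does via a stochastic-dominance comparison to an i.i.d.\ Bernoulli walk followed by a Chernoff bound.

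There is, however, a genuine error in your per-step computation. You write $\partial_N \lambda_k = -\beta S_{k-1} I_{k-1}/N^2$, treating $S_{k-1}$ as a constant. But the observed data is $C_{k-1}$, and $S_{k-1} = N - C_{k-1}$ \emph{depends on $N$}. With $p=0$, the correct derivative is
\[
\partial_N \lambda_k \;=\; \beta I_{k-1}\,\partial_N\!\left(\tfrac{N - C_{k-1}}{N}\right) \;=\; \frac{\beta I_{k-1} C_{k-1}}{N^2},
\]
so that $\partial_N\lambda_k/\lambda_k = \Theta(C_{k-1}/N^2)$ and the per-step exponential contribution is $\Theta(C_{k-1}^2/N^4)$. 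Your formula instead gives $\partial_N\lambda_k/\lambda_k \approx -\tfrac{\beta}{(\beta+\gamma)N}$, i.e.\ $\Theta(1/N)$ independent of $k$, which would yield total information $\Theta(m/N^2)$ --- the wrong rate. Your subsequent claim that this is ``of order $I_{k-1}/N^2$'' and hence the per-step information is $\Theta(I_{k-1}^2/N^4)$ does not follow from the derivative you wrote; you have hand-waved to the correct answer through an incorrect intermediate. (The same issue arises in the Bass case: with $S_{k-1}=N-k$ depending on $N$, one gets $\partial_N\lambda_k/\lambda_k = k/(N(N-k))$, whereas your formula would give $-1/N$.)

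A secondary point: the Bernoulli contribution in the SIR case is \emph{not} negligible relative to the exponential one. A direct calculation (as in the paper) gives it to be $\Theta(C_{k-1}^2/N^4)$ as well --- the same order. This does not change the final rate, but your claim that it is lower order is incorrect. Once you fix the $\partial_N S_{k-1}=1$ issue, both contributions combine to the clean expression $\frac{C_{k-1}^2}{N^2(N-C_{k-1})(N-C_{k-1}+\tfrac{\gamma}{\beta}N)}$, and the lower bound uses $C_{k-1}\ge k/2$ on the survival event rather than requiring $I_{k-1}=\Theta(k)$ directly.
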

The proof of \cref{thm:fi} can be found in Section~\ref{sec:cr_proof}, which involves a non-trivial analysis of the Fisher information of a complex SIR/Bass stochastic process. It is notable that the result above provides a precise rate for the Fisher information as opposed to simply an upper bound. This further allows us to conclude that the relative error rate in Theorem~\ref{co:cr} is precisely the rate achieved by an efficient unbiased estimator for $N$.

The next section, \cref{sec:time_to_peak}, analyzes how long it takes to reach $N^{2/3}$ observations. We show that in many parameter regimes, the time it takes to reach $N^{2/3}$ observations is a constant portion (e.g., two thirds) of the time it takes to reach the peak infection rate of the process. In \cref{sec:estimating_other_params}, we analyze the relative error for the other parameters of the model, and we show that these other parameters are much easier to learn than $N$. In \cref{sec:extension-bias}, we extend \cref{co:cr} to provide lower bounds for biased estimators. 


\subsection{Time to Learn} \label{sec:time_to_peak}

\cref{co:cr} implies that at least $N^{2/3}$ observations are needed before we can learn $N$. Here we characterize how long the diffusion model takes to reach this point relative to the time it takes to reach the point when the rate of new infections is at its peak. In both settings, the peak corresponds to a time in which a constant fraction of the population has been infected.

\subsubsection{Bass Model.}

One way to characterize the time at which the rate of new adopters in the Bass model peaks is to identify the first epoch at which the expected time until the next adoption increases. That is, defining
\[
k^* = \inf \{k: \bE[T_{k}] \geq \bE[T_{k-1}]\ \},
\]
$t_{k^*}$ corresponds to the (random) time at which this peak in the rate of new adoptions occurs. We denote by $t_{k^{\rm CR}}$ (where $k^{\rm CR} \triangleq \lceil N^{2/3} \rceil$) the earlier time at which we have sufficiently many observations to estimate $N$ accurately per \cref{co:cr}.
%
%
%
%
The following result characterizes the ratio ${\bE[t_{k^{\rm CR}}]}/{\bE[t_{k^*}]}$ as $N \rightarrow \infty$:

\begin{proposition} \label{prop:time_to_peak_bass}
Suppose $\gamma = 0, I_0= 1, \frac{p}{\beta} < c$ for some constant $c < 1$. Suppose $\frac{p}{\beta} = \Theta(\frac{1}{N^{\alpha}})$ for $\alpha \geq 0$.
\begin{align*}
\lim_{N \rightarrow \infty} \frac{\bE[t_{k^{\rm CR}}]}{\bE[t_{k^*}]}=
\begin{cases}
0 & \alpha \leq \frac{1}{3}\\
\frac{\alpha - \frac{1}{3}}{\alpha} & \frac{1}{3} < \alpha < 1\\
\frac{2}{3} &  \alpha \geq 1.
\end{cases}
\end{align*}
\end{proposition}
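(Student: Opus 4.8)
The plan is to exploit the special structure of the Bass regime ($\gamma=0$, $I_0=1$, $R_0=0$) to turn both $\bE[t_{k^{\rm CR}}]$ and $\bE[t_{k^*}]$ into explicit deterministic sums. Because $\gamma=0$, the Bernoulli probability in \eqref{eq:defck} equals $1$, so every jump is an adoption, $C_k=I_k=k+1$ and $S_k=N-k-1$ for all $k\le\tau=N-1$; hence $T_1,\dots,T_{N-1}$ are independent with
\[
\bE[T_k] \;=\; \frac{N}{(N-k)(\beta k+pN)}.
\]
Since this is a deterministic function of $k$, the epoch $k^*=\inf\{k:\bE[T_k]\ge\bE[T_{k-1}]\}$ is a fixed integer, and by linearity $\bE[t_K]=\sum_{j=1}^{K}\bE[T_j]$ for every fixed $K\le N-1$. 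Thus the proposition reduces to estimating this deterministic sum at $K=k^{\rm CR}=\lceil N^{2/3}\rceil$ and at $K=k^*$.

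First I would locate $k^*$. Writing $\rho=p/\beta\in(0,c)$, we have $\bE[T_k]=N/g(k)$ with $g(k)=(N-k)(\beta k+pN)$ a concave quadratic in $k$; since $g(k)-g(k-1)=\beta\bigl(N(1-\rho)+1-2k\bigr)$, the sequence $\bE[T_k]$ is strictly decreasing and then increasing, with turning point $k^*=\lceil\tfrac{N(1-\rho)+1}{2}\rceil=\tfrac{N}{2}(1-\rho)+O(1)$. In particular $k^*=\Theta(N)$ with $N-k^*=\tfrac{N}{2}(1+\rho)+O(1)=\Theta(N)$, $k^*\sim N/2$ when $\alpha>0$, and $k^{\rm CR}=\Theta(N^{2/3})=o(k^*)$.

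Next I would evaluate $\bE[t_K]$ via the partial-fraction identity
\[
\frac{N}{(N-j)(\beta j+pN)} \;=\; \frac{1}{\beta(1+\rho)}\Bigl(\frac{1}{N-j}+\frac{1}{j+\rho N}\Bigr),
\]
so that $\bE[t_K]=\frac{1}{\beta(1+\rho)}\bigl(\sum_{j=1}^K\frac{1}{N-j}+\sum_{j=1}^K\frac{1}{j+\rho N}\bigr)$, the prefactor being $\Theta(1/\beta)$ (and $\tfrac1\beta(1+o(1))$ when $\alpha>0$), so it cancels in the ratio. The first sum equals $\ln\tfrac{N}{N-K}+O\bigl(\tfrac1{N-K}\bigr)$, which is $\Theta(N^{-1/3})=o(1)$ at $K=k^{\rm CR}$ and $\ln\tfrac{2}{1+\rho}+o(1)$ --- bounded away from $0$ since $\rho<c<1$ --- at $K=k^*$. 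For the second sum, comparison with $\int\tfrac{dx}{x+\rho N}$ gives, uniformly in $K$, $\sum_{j=1}^K\frac{1}{j+\rho N}=\ln\!\bigl(1+\tfrac{K}{\rho N}\bigr)+O(1)$ together with the one-sided bound $\sum_{j=1}^K\frac{1}{j+\rho N}\le\ln(1+K/\rho N)$ when $\rho N\ge1$, and $\sum_{j=1}^K\frac{1}{j+\rho N}=\ln K+O(1)$ when $\rho N<1$. Substituting $\rho N=\Theta(N^{1-\alpha})$ and $K\in\{k^{\rm CR},k^*\}$ then gives: $\bE[t_{k^*}]=\tfrac1\beta(\alpha+o(1))\ln N$ for $0<\alpha<1$, $\bE[t_{k^*}]=\tfrac1\beta(1+o(1))\ln N$ for $\alpha\ge1$, and $\bE[t_{k^*}]=\Omega(1)$ for $\alpha=0$; while $\bE[t_{k^{\rm CR}}]\to0$ for $\alpha<1/3$, $\bE[t_{k^{\rm CR}}]=O(1)$ for $\alpha=1/3$, $\bE[t_{k^{\rm CR}}]=\tfrac1\beta(\alpha-\tfrac13+o(1))\ln N$ for $1/3<\alpha<1$, and $\bE[t_{k^{\rm CR}}]=\tfrac2{3\beta}(1+o(1))\ln N$ for $\alpha\ge1$. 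Taking the ratio in each regime (the $\Theta(1/\beta)$ factors cancel) yields $0$ when $\alpha\le1/3$ (the numerator is $o(1)$ or $O(1)$ while the denominator is $\Omega(1)$, in fact $\Theta(\ln N)$ once $\alpha>0$), $\tfrac{\alpha-1/3}{\alpha}$ when $1/3<\alpha<1$, and $\tfrac23$ when $\alpha\ge1$.

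The main obstacle is putting the estimate for $\sum_{j=1}^K\frac{1}{j+\rho N}$ into a form uniform over all relative sizes of $\rho N$ versus $1$ and versus $K$, and controlling the additive slack. This is delicate precisely in the regime $\alpha\le1/3$, where $\bE[t_{k^{\rm CR}}]\to0$: an $O(1)$ error would be fatal, and one must instead use the sharp one-sided bound $\sum_{j=1}^{k^{\rm CR}}\frac{1}{j+\rho N}\le\ln\!\bigl(1+\tfrac{N^{2/3}}{\rho N}\bigr)=\ln\!\bigl(1+\Theta(N^{\alpha-1/3})\bigr)\to0$. In the remaining regimes the leading term is of order $\ln N$, so it only remains to confirm that the $O(1)$ slack, the $o(1)$ corrections to the harmonic sums, and the $\rho$-dependence of the prefactor are all of lower order --- which is routine once the leading logarithms are identified.
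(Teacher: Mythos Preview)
Your proof is correct and follows essentially the same route as the paper's: both reduce to computing $\bE[t_K]=\sum_{j=1}^{K}\bE[T_j]$ with $\bE[T_j]=N/\bigl((N-j)(\beta j+pN)\bigr)$, then estimate the sum by the antiderivative $\frac{1}{\beta+p}\bigl(\ln(\beta x+pN)-\ln(N-x)\bigr)$. The paper reaches this via integral comparison, while you make the underlying partial-fraction decomposition explicit and then appeal to harmonic-sum asymptotics; these are the same computation in slightly different packaging. Your treatment is, if anything, a bit more careful: you pin down $k^*=\lceil\tfrac{N(1-\rho)+1}{2}\rceil$ exactly (the paper works with the continuous proxy $\mathring r=(1-\rho)N/2$), and you flag the one genuinely delicate point---that in the regime $\alpha\le\tfrac13$ an additive $O(1)$ slack would be fatal and one must instead use the one-sided bound $\sum_{j\le k^{\rm CR}}\tfrac{1}{j+\rho N}\le\ln(1+k^{\rm CR}/\rho N)\to 0$---which the paper's proof handles only implicitly through the integral lower/upper bounds.
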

We see that the fraction of time until peak by which we can hope to learn the Bass model, ${\bE[t_{k^{\rm CR}}]}/{\bE[t_{k^*}]}$, depends on $p/\beta$. 
This latter quantity provides a measure of the relative contribution of innovators and imitators to the instantaneous rate of overall adoption.


\edit{
When this quantity is small ($\alpha \geq 1$), we need to wait at least two-thirds of the way until peak to collect enough samples to learn $N$.
An interpretation of the regime of $\alpha = 1$ is the following.
If we treat $\beta$ as a constant, then $p = \Theta(1/N)$.
Since the growth of innovators is approximately $pN$ at the start of the process, this regime implies that the number of innovators in the early stages is a \textit{constant}, which does not depend on $N$.
Since $N$ is unknown in the early stages, $\alpha=1$ represents the regime where the rate of innovators at the start of the process does not depend on the (unknown) eventual popularity of the product.
There are also many empirical works that estimate the Bass model parameters for consumer products (e.g., air conditioners, TVs, etc.), which establish that adoption is mainly driven by imitation rather than innovation (e.g., \cite{sultan1990meta,mahajan1995diffusion,lee2014pre}).

}

\subsubsection{SIR Model.}

For the SIR model, characterizing the random time in which the process hits either the peak infection rate or $N^{2/3}$ observations appears to be a difficult task.
Therefore, we analyze the analogs of $t_{k^{\rm CR}}$ and $t_{k^*}$ in the deterministic model \eqref{eq:sirfluid}.
Specifically, let $t^d_{\rm CR} = \inf\left\{t: c(t) \geq N^{2/3} \right\}$ and $t^d_{*} = \inf\left\{t: {d^2s}/{dt^2} > 0 \right\}$ for the process defined by \eqref{eq:sirfluid}.

\begin{proposition}
  \label{thm:deterministic-time}
Suppose $p=0$ and $\beta, \gamma$ are fixed. If $c(0) = O(\log(N))$,
\begin{align*}
\liminf_{N \rightarrow \infty}  \frac{t^d_{\rm CR}}{t^d_{*}} \geq \frac{2}{3}.
\end{align*}
\end{proposition}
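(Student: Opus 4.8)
The plan is to analyze the deterministic SIR trajectory near the beginning of the epidemic, where the susceptible fraction $s/N$ is essentially equal to $1$, and show that in this regime the cumulative count $c(t)$ grows exponentially at rate $\beta - \gamma$; this lets us pin down $t^d_{\rm CR}$ up to lower-order terms. Separately, I would show that the inflection point $t^d_*$ (where $d^2 s/dt^2 = 0$, equivalently where $di/dt = 0$ in the $p = 0$ case, i.e.\ the peak of the infected compartment) occurs when a $\Theta(1)$ fraction of $N$ has been infected. Comparing the two ``times to reach a given cumulative count'' then yields the ratio $2/3$ in the limit.

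First I would make the early-phase approximation precise. With $p = 0$ and $c(t) = i(t) + r(t)$, we have $dc/dt = \beta (s/N) i$ and $di/dt = \beta(s/N) i - \gamma i$. As long as $c(t) \le N^{2/3}$ we have $s/N = 1 - c(t)/N \ge 1 - N^{-1/3} \to 1$, so $di/dt = (\beta - \gamma + o(1)) i$, giving $i(t) = i(0) e^{(\beta-\gamma)(1+o(1)) t}$, and likewise $c(t)$ grows like $e^{(\beta - \gamma) t}$ times a constant (the constant coming from integrating and from the relation between $c$ and $i$ in the exponential-growth regime). Since $c(0) = O(\log N)$, solving $c(t^d_{\rm CR}) = N^{2/3}$ gives
\[
t^d_{\rm CR} = \frac{(2/3)\log N}{\beta - \gamma} (1 + o(1)).
\]
The hypothesis $c(0) = O(\log N)$ is exactly what guarantees the initial condition contributes only an $O(\log\log N) = o(\log N)$ additive term, so it doesn't disturb the leading constant.

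Next I would handle $t^d_*$. The inflection time is when $i$ peaks, which by the standard SIR analysis happens precisely when $s/N = \gamma/\beta = 1/R_0$, i.e.\ when a fixed fraction $1 - 1/R_0$ of the effective population has left the susceptible pool; at that point $c(t^d_*) = (1 - 1/R_0) N = \Theta(N)$. To convert this into a statement about $t^d_*$, I would again use the early-phase exponential growth: for any fixed $\epsilon > 0$, the time to go from $c = O(\log N)$ up to $c = \epsilon N$ is $\frac{\log(\epsilon N)}{\beta - \gamma}(1 + o(1)) = \frac{\log N}{\beta - \gamma}(1 + o(1))$, since exponential growth at rate $\beta - \gamma$ persists (up to a $1+o(1)$ factor on the rate) as long as $s/N$ is bounded away from $\gamma/\beta$, and the additional time to traverse the final $\Theta(1)$-fraction of the trajectory from $c = \epsilon N$ to $c = (1-1/R_0)N$ is $O(1)$. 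Hence
\[
t^d_* = \frac{\log N}{\beta - \gamma}(1 + o(1)),
\]
and dividing, $t^d_{\rm CR}/t^d_* \to 2/3$. (One only needs $\liminf \ge 2/3$, so it suffices to upper bound $t^d_*$ by $\frac{\log N}{\beta-\gamma}(1+o(1))$ and lower bound $t^d_{\rm CR}$ by $\frac{(2/3)\log N}{\beta - \gamma}(1 + o(1))$, which is slightly less delicate.)

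The main obstacle is making the ``exponential growth persists until $s/N$ is bounded away from $\gamma/\beta$'' claim rigorous with explicit control of the $1 + o(1)$ error, rather than hand-waving: one must show that while $s/N \ge \gamma/\beta + \delta$, the instantaneous growth rate $\beta(s/N) - \gamma$ of $i$ stays in $[\beta\delta, \beta - \gamma]$, integrate the resulting differential inequalities for $i$ and then for $c$, and check that the finitely many ``$\Theta(1)$'' pieces of the trajectory (the initial transient and the final approach to the peak) contribute only $O(1)$ time — all uniformly in $N$. A clean way to organize this is to split $[0, t^d_*]$ into the sub-interval where $c \le \sqrt{N}$ (say), on which $s/N = 1 - o(1)$ and the growth rate is $\beta - \gamma - o(1)$ exactly, and the remaining sub-interval, on which $c$ has already reached a polynomial-in-$N$ value so only $O(1)$ further time is needed to reach the peak; the first sub-interval alone already takes time $\frac{(1/2)\log N}{\beta-\gamma}(1+o(1))$ for $t^d_*$ and $\frac{(2/3)\log N}{\beta-\gamma}(1+o(1))$ for $t^d_{\rm CR}$ — wait, $N^{2/3} > \sqrt N$, so for $t^d_{\rm CR}$ I would instead use the threshold $c \le N^{2/3}$ directly, on which $s/N = 1 - o(1)$ throughout, making that half of the argument immediate.
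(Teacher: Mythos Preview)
Your overall strategy matches the paper's: lower bound $t^d_{\rm CR}$ via the differential inequality $i(t) \le i(0)e^{(\beta-\gamma)t}$, and upper bound the peak time by splitting the trajectory into an early exponential phase and a short late phase. Your lower bound on $t^d_{\rm CR}$ is correct. (One minor point: $t^d_*$, the inflection of $s$, occurs when $s = i + (\gamma/\beta)N$, strictly before the peak of $i$ at $s = (\gamma/\beta)N$; your identification of the two is wrong but harmless, since you only need an upper bound on $t^d_*$, and the paper itself passes to the later peak of $i$ for exactly this reason.)

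The genuine gap is in the upper bound on the peak time. Your claim that the growth rate of $i$ stays $(\beta-\gamma)(1+o(1))$ ``as long as $s/N$ is bounded away from $\gamma/\beta$'' is false: at $c = \epsilon N$ with $\epsilon>0$ fixed, the rate is $\beta(1-\epsilon)-\gamma$, which differs from $\beta-\gamma$ by the constant $\beta\epsilon$, not by $o(1)$. With the correct rate, your $\epsilon N$ split yields only $\liminf \ge \tfrac{2}{3}\cdot\tfrac{\beta(1-\epsilon)-\gamma}{\beta-\gamma}$, strictly below $2/3$. Your alternative split at $\sqrt N$ fails for the complementary reason: there the first-phase rate \emph{is} $(\beta-\gamma)(1+o(1))$, but the second phase is not $O(1)$---going from $c=\sqrt N$ (where $i = O(\sqrt N)$) to $c = \Theta(N)$ still takes $\Theta(\log N)$ time. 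The paper threads this needle by splitting at $s/N = 1 - 1/\log\log N$: the threshold tends to $1$, so the first-phase rate is $(\beta-\gamma)(1+o(1))$, yet $i$ at the split is already $\Theta(N/\log\log N)$, making the second phase $O(\log\log N) = o(\log N)$. An equivalent repair of your argument is to keep the split at $\epsilon N$, carry $\epsilon$ through to obtain $\liminf \ge \tfrac{2}{3}\cdot\tfrac{\beta(1-\epsilon)-\gamma}{\beta-\gamma}$ for every $\epsilon>0$, and then send $\epsilon \downarrow 0$.
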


This suggests that the sampling requirements made precise by \cref{co:cr} can only be met at such time where we are close to reaching the peak infection rate.
Unlike the Bass model, this ratio is not specific to a parameter regime for the model.

We note that the results of \cref{prop:time_to_peak_bass} and \cref{thm:deterministic-time} require $N \to \infty$.
While the specific value of `two-thirds' depends on the limit $N \to \infty$, we interpret the significance of this result to be that the time to learning $N$ is a \textit{constant fraction} of the time to peak (i.e.\ it is `late'), rather than focusing on the precise two-thirds value.
In \cref{sec:experiments}, we validate this observation on a real-world dataset by demonstrating that the time taken to acquire $N^{2/3}$ observations aligns with the late stages of the diffusion process.

\subsection{Estimating Other Parameters} \label{sec:estimating_other_params}
We now turn our attention to learning the other parameters of the model. The high-level message here is that parameters other than the population $N$ are in general easier to learn, and this is best understood through Table~\ref{tab:bass_estimation}. Specifically, the second row in that table shows the number of observations needed for a relative error less than one. Our earlier analysis provides lower bounds on this quantity for the estimation of $N$. Here we construct explicit estimators for the remaining parameters yielding upper bounds on the number of observations required to learn those parameters with a relative error less than one.

We immediately see that for the SIR model, we can accomplish this task with a number of observations that {\em does not scale} with the population size parameter. In the case of the Bass model the story is more nuanced: it is always easier to learn the coefficient of imitation, $\beta$. On the other hand when the rate of innovation is very low, learning $a := pN$ is hard, but also not relevant to tasks related to forecasting $N$. We next present formal results that support the quantities in Table~\ref{tab:bass_estimation}.


\begin{table}%
\caption{
Summary of parameter estimation results for the Bass and SIR models.
The first row shows the relative error of estimating each parameter with $m$ observations.
The second row shows $\argmin_m \{\relerror(\hat{\theta}_m, \theta)\leq1\}$, the number of observations needed so that the relative error is less than 1.
For the Bass model, $a=pN$, and $a = \Theta(N^{1-\alpha})$ for $\alpha \geq 0$.
}
\label{tab:bass_estimation}
\begin{minipage}{\columnwidth}
\begin{tabular}{c|c|cc|cc}
\toprule
    &   & \multicolumn{2}{c|}{Bass} & \multicolumn{2}{c}{SIR} \\ 
  & $N$* & $\beta$          & $a$**         & $\beta$       & $\gamma$       \\
\midrule
$\relerror(\hat{\theta}_m, \theta)$ & $\Omega\left(\frac{N^2}{m^3}\right)$ & $\tilde{O}\left(\frac{1}{m} + \frac{N^{2(1-\alpha)}}{m^3}\right)$& $\tilde{O}\left(\frac{1}{m} +\frac{1}{N^{(1-\alpha)}}\right)$   & ${O}\left(\frac{\log m}{m}\right)$ &  ${O}\left(\frac{\log m}{m}\right)$ \\
\# observations needed & $\Omega\left(N^{2/3}\right)$ & $\tilde{O}\left( \max\{1, N^{\frac{2}{3}(1-\alpha)}\} \right)$& $\tilde{O}\left(1\right)$  & $O\left(1\right)$ & $O\left(1\right)$ \\
\bottomrule
\end{tabular}
\bigskip\centering
\\
\footnotesize
*The column for $N$ represents the \ii{expected} relative error, whereas the other parameters are high-probability results. \\
**We note that the results for the parameter $a$ hold only for $\alpha < 1$.
\end{minipage}
\end{table}%

\subsubsection{Bass Model.}
For the Bass model, we construct estimators for the parameters $\beta$ and $a:=pN$ (the explicit construction is given in \cref{sec:app:estimator_proof_bass}):
%
%
\begin{theorem} \label{thm:bass-estimation-p-a}
Suppose $\gamma = 0$ and $I_0 = 1$. Let $a = pN.$ Suppose $m \leq N^{2/3}\log^{1/3}(N)$. 
We construct estimators $\hat{a}_m, \hat{\beta}_m$ based on the observations $O_m$ such that with probability $1-O(\frac{1}{N})$,
\begin{align*}
\relerror(\hat{\beta}_m, \beta) = O\left(\frac{\log N}{m} + \frac{a^2}{\beta^2}\frac{\log N}{m^3} \right), \\
\relerror(\hat{a}_m, a) = O\left(\frac{\log N}{m} + \frac{\beta}{a} \log N\right).
\end{align*}
\end{theorem}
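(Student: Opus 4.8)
The plan is to reduce $O_m$ to a bank of independent exponential observations with an affine rate function, estimate $(\beta,a)$ by maximum likelihood, and bound the relative error through a Fisher-information computation carried out separately in the two regimes $a\lesssim\beta m$ and $a\gtrsim\beta m$. Under the Bass hypotheses ($\gamma=0$, $I_0=1$, $R_0=0$) the Bernoulli success probability in \eqref{eq:defck} equals $1$, so every jump is an adoption: $I_k=k+1$, $S_k=N-k-1$ and $C_k$ are all deterministic, and $O_m$ is informationally equivalent to the vector $(T_1,\dots,T_m)$. By \eqref{eq:deftk} these are independent with $T_k\sim\Exp(\lambda_k)$, $\lambda_k=\frac{(N-k)(\beta k+a)}{N}=(\beta k+a)(1-k/N)$; since $m=o(N)$ the factor $1-k/N$ is $1-o(1)$ uniformly over $k\le m$. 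So the task is exactly to estimate the slope $\beta$ and intercept $a$ of the affine map $k\mapsto\beta k+a$ from independent exponential draws, with a small, known multiplicative perturbation.

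A natural estimator is $(\hat\beta_m,\hat a_m)=\argmax_{\beta,a>0}\sum_{k=1}^m\bigl(\log(\beta k+a)-(\beta k+a)T_k\bigr)$, the maximum-likelihood estimator for the model that drops the $1-k/N$ factor; equivalently the root of the estimating equations $\sum_k(\tfrac1{\beta k+a}-T_k)(k,1)^{\top}=0$. Since $\log(\beta k+a)$ is concave and $(\beta k+a)T_k$ is linear, the objective is concave, so the estimator is well defined and efficiently computable, and -- because $\lambda_k$ is affine in $(\beta,a)$ -- its Hessian $\nabla^2\ell=-\sum_k(\beta k+a)^{-2}(k,1)^{\top}(k,1)=:-\cI_m$ is \emph{deterministic}. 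A mean-value expansion then gives $\hat\theta-\theta=\cI_m(\tilde\theta)^{-1}\nabla\ell(\theta)$ for some $\tilde\theta$ on $[\theta,\hat\theta]$, where the score $\nabla\ell(\theta)=\sum_k(k,1)^{\top}(\tfrac1{\beta k+a}-T_k)$ is a sum of independent sub-exponential terms of scale $(\beta k+a)^{-1}$ whose mean is $-\sum_k(k,1)^{\top}\frac{k}{N(\beta k+a)}(1+o(1))$ -- the only bias, arising purely from the $1-k/N$ factor.

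Thus $\hat\theta\approx\theta+\cI_m^{-1}\nabla\ell(\theta)$, so the fluctuation scale is $\cI_m^{-1}$, and by a Bernstein inequality for sums of independent sub-exponentials, on an event of probability $1-O(1/N)$ the errors exceed this scale by at most a $\sqrt{\log N}$ factor: $(\hat\beta_m-\beta)^2\lesssim[\cI_m^{-1}]_{11}\log N$ and $(\hat a_m-a)^2\lesssim[\cI_m^{-1}]_{22}\log N$. One then evaluates $\cI_m^{-1}$ -- i.e. the sums $\sum_k k^j(\beta k+a)^{-2}$ for $j=0,1,2$ and the determinant $\cI_{m,11}\cI_{m,22}-\cI_{m,12}^2$ -- separately in the imitation-dominated regime $a\lesssim\beta m$ and the innovation-dominated regime $a\gtrsim\beta m$, checking that no leading terms cancel away from the crossover $a\asymp\beta m$; this yields $[\cI_m^{-1}]_{11}=\Theta\!\bigl(\tfrac{\beta^2}{m}+\tfrac{a^2}{m^3}\bigr)$ and $[\cI_m^{-1}]_{22}=\Theta\!\bigl(\beta a+\tfrac{a^2}{m}\bigr)$, and dividing the squared errors by $\beta^2$ and $a^2$ gives precisely the two claimed bounds. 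Finally one checks that the squared relative bias coming from the $1-k/N$ factor is $O(m^3/N^2)$ times the leading term in each case -- hence negligible exactly under the hypothesis $m\le N^{2/3}\log^{1/3}N$, which is what makes that hypothesis the sharp cutoff -- and that the event on which $\hat\theta$ leaves a neighbourhood of $\theta$ (so the expansion is invalid, or $\cI_m$ is ill-conditioned) also has probability $O(1/N)$ by the same concentration.

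The main obstacle is the regime analysis in the last step: the explicit evaluation of $\cI_m^{-1}$, of the score's variance, and of the score's bias in each of the two parameter regimes, together with the verification that $\det\cI_m$ is non-degenerate away from the crossover; and the bootstrapping of the estimator's consistency (needed because $\ell$ is concave but not quadratic, so $\hat\theta$ is not the solution of a linear system) that lets one replace $\cI_m(\tilde\theta)^{-1}$ by $\cI_m(\theta)^{-1}$. The reduction, the Bernstein step, and the bookkeeping for the population correction are routine once those calculations are in hand.
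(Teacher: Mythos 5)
Your plan is sound and your rate calculations check out, but it takes a genuinely different route from the paper. The paper does not touch the MLE: it pairs the inter-arrival times as $A_i=\min(T_i,T_{m-i})\sim\Exp(l_i)$ with $l_i=(2a+\beta m)-(a+i\beta)\tfrac{i}{N}-(a+(m-i)\beta)\tfrac{m-i}{N}$, so that --- up to the same $O(m/N)$ population correction you isolate --- every pair has the \emph{common} rate $2a+\beta m$. A single tail bound for sums of independent exponentials (Janson) then shows $m/(2S)$ concentrates at $2a+\beta m$; a second window of size $m/2$ gives the second linear combination $2a+\tfrac{m}{2}\beta$; and $(\hat a_m,\hat\beta_m)$ is read off from the resulting $2\times 2$ linear system, with a final union over window sizes $k\le m$ used to recover the $\sqrt{\beta a\log N}$ term in the error for $a$. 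What the paper's construction buys is elementarity: one scalar concentration inequality plus explicit linear algebra, with none of the M-estimation machinery (consistency bootstrapping, the vector mean-value expansion, uniform non-degeneracy of $\det\cI_m$ across the crossover $a\asymp\beta m$) that you yourself flag as ``the main obstacle'' and defer. What your route buys is that the MLE attains the optimal trade-off for $a$ in one shot --- your $[\cI_m^{-1}]_{22}\asymp\beta a+a^2/m$ is exactly what the paper only reaches after its union-over-scales step --- and it makes transparent why $m\le N^{2/3}\log^{1/3}N$ is the sharp hypothesis, via your bias-versus-variance comparison. Two cautions: the mean-value step needs componentwise care, since no single intermediate point $\tilde\theta$ works for a vector score; and your claim $[\cI_m^{-1}]_{22}=\Theta(\beta a+a^2/m)$ should read $\Theta(\beta^2+\beta a+a^2/m)$ (compute $1/\cI_{22}$ from $\cI_{22}\asymp m/((\beta+a)(\beta m+a))$), though the extra $\beta^2$ only matters when $a\lesssim\beta$, where the theorem's bound on $\relerror(\hat a_m,a)$ exceeds $1$ and is vacuous --- consistent with the restriction to $\alpha<1$ in the footnote to Table~\ref{tab:bass_estimation}.
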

The above result demonstrates that learning the coefficient of imitation, $\beta$, is always easier than estimating $N$.  This is also the case for $a$ when $p/\beta = \omega(1/N)$; when $p/\beta = O(1/N)$, the number of innovators who adopt is negligible compared to the number of imitators and it is not possible to estimate $a$.



\subsubsection{SIR Model.}

For the SIR model we construct estimators for the parameters $\beta$ and $\gamma$ (the explicit construction is given in \cref{sec:explicit-construction}):
\begin{theorem}\label{thm:estimation-beta-gamma}
Suppose $p=0$ and $\beta > \gamma$.
Let $C_{0}, m, N$ satisfy $m(m + C_{0}) \leq N$, and $\frac{\beta}{\beta+\gamma}\frac{N-m-C_{0}}{N} > \frac{1}{2}(\frac{\beta}{\beta+\gamma} + \frac{1}{2})$.
Then, we can construct estimators $\hb_{m}$ and $\hg_{m}$, both functions of $O_{m}$, such that with probability $1 - \frac{8}{m} - B_1 e^{-B_2 I_0}$,
\begin{align*}
\relerror(\hb_{m}, \beta) &\leq M_1 \left(\frac{\log m}{m}\right), \\
\relerror(\hg_{m}, \gamma) &\leq M_2 \left(  \frac{\beta^2}{\gamma^2}  \frac{\log m}{m} \right),
\end{align*}
where $M_1, M_2>0$ are absolute constants and $B_1, B_2 > 0$ depends only on $\beta$ and $\gamma$.
\end{theorem}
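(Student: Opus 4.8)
The plan is to reduce the estimation of $(\beta,\gamma)$ to that of the two reparametrized quantities $q \triangleq \beta/(\beta+\gamma)$ and $\beta+\gamma$, each of which admits a transparent method-of-moments estimator. Specializing \eqref{eq:defck} and \eqref{eq:deftk} to $p=0$, conditionally on the history $\mathcal{F}_{k-1}$ we have $C_k - C_{k-1} \sim \Bern(p_k)$ with $p_k = \tfrac{\beta S_{k-1}/N}{\beta S_{k-1}/N + \gamma}$ and $T_k \sim \Exp\!\big(I_{k-1}(\beta S_{k-1}/N + \gamma)\big)$. Since $I_{k-1}$ is deterministic given $(C_{k-1},I_0,R_0)$ and hence computable from $O_m$, the rescaled holding time $\tau_k \triangleq I_{k-1}T_k$ is a function of $O_m$ and is conditionally $\Exp(\beta S_{k-1}/N + \gamma)$. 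The hypothesis $m(m+C_0)\le N$ gives $C_{k-1}\le C_0 + (k-1)$ and therefore $S_{k-1}/N \in [1-\tfrac1m, 1]$ for all $k\le m$, so $|p_k - q| = O(1/m)$ and the conditional rate of each $\tau_k$ differs from $\beta+\gamma$ by $O(1/m)$. Accordingly I would set $\hat q_m = (C_m - C_0)/m$, $\hat r_m = \tfrac1m\sum_{k=1}^m \tau_k$, and then $\hb_m = \hat q_m/\hat r_m$ and $\hg_m = (1-\hat q_m)/\hat r_m$, all functions of $O_m$.

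The second step is concentration. On the event $\{\tau\ge m\}$, the process $C_m - C_0 - \sum_{k\le m}p_k$ is a martingale with increments in $[-1,1]$, so Azuma's inequality gives $|\hat q_m - \tfrac1m\sum_{k\le m}p_k| = O(\sqrt{\log m/m})$ except with probability $O(1/m)$; combined with $|p_k-q|=O(1/m)$ this yields $|\hat q_m - q| = O(\sqrt{\log m/m})$. Similarly $\sum_{k\le m}\big(\tau_k - \bE[\tau_k\mid \mathcal F_{k-1}]\big)$ is a martingale whose increments are conditionally centered exponentials with uniformly bounded means, so a Bernstein/Freedman bound for sub-exponential martingale differences gives $|\hat r_m - \tfrac{1}{\beta+\gamma}| = O\!\big(\tfrac{1}{\beta+\gamma}\sqrt{\log m/m}\big)$ except with probability $O(1/m)$. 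Propagating these errors: writing $\beta = q(\beta+\gamma)$, the relative error of $\hb_m = \hat q_m/\hat r_m$ is bounded by a constant times the sum of the relative errors of $\hat q_m$ and $\hat r_m$, i.e.\ $\relerror(\hb_m,\beta) = O(\log m/m)$. For $\gamma = (1-q)(\beta+\gamma)$ we write
\[
\hg_m - \gamma = (q - \hat q_m)\,\frac{1}{\hat r_m} + (1-q)\Big(\frac{1}{\hat r_m} - (\beta+\gamma)\Big),
\]
and since both terms on the right are $O\!\big((\beta+\gamma)\sqrt{\log m/m}\big)$ in absolute value, dividing by $\gamma$ and using $\beta+\gamma\le 2\beta$ gives $\relerror(\hg_m,\gamma) = O\!\big(\tfrac{\beta^2}{\gamma^2}\tfrac{\log m}{m}\big)$, matching the claimed bounds with absolute constants $M_1,M_2$.

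It remains to show $\mathbb{P}(\tau < m) \le B_1 e^{-B_2 I_0}$. Each jump changes $I$ by $\pm 1$, increasing it with conditional probability $p_k$, so $\{I_k\}$ is a time-inhomogeneous nearest-neighbour walk started from $I_0$, and $\tau < m$ forces this walk to hit $0$. The precise role of the hypothesis $\tfrac{\beta}{\beta+\gamma}\cdot\tfrac{N-m-C_0}{N} > \tfrac12\big(\tfrac{\beta}{\beta+\gamma}+\tfrac12\big)$ is to guarantee $p_k \ge p^\ast$ for all $k\le m$, where $p^\ast > \tfrac12$ is obtained by plugging the worst case $S_{k-1} = N-m-C_0$ into $p_k$ and depends only on $\beta,\gamma$ (the inequality $\beta>\gamma$ is exactly what makes $p^\ast$ strictly exceed $\tfrac12$). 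Then $\big(\tfrac{1-p^\ast}{p^\ast}\big)^{I_{k\wedge\sigma}}$, where $\sigma$ is the first hitting time of $0$, is a supermartingale, and optional stopping yields $\mathbb{P}(\sigma \le m) \le \big(\tfrac{1-p^\ast}{p^\ast}\big)^{I_0} = B_1 e^{-B_2 I_0}$ with $B_2 = \log\tfrac{p^\ast}{1-p^\ast}$. A union bound over $\{\tau<m\}$ and the two concentration failure events gives the stated failure probability $\tfrac 8m + B_1 e^{-B_2 I_0}$.

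The step I expect to be most delicate is twofold: (i) the sub-exponential martingale concentration for $\hat r_m$, since the conditional rates (hence conditional variances) of the $\tau_k$ are themselves random and one must use their a.s.\ bounds rather than any i.i.d.\ structure; and (ii) checking that the stated parameter inequality genuinely produces a drift margin $p^\ast - \tfrac12$ depending only on $\beta,\gamma$, since the entire $e^{-B_2 I_0}$ term hinges on it. The remaining arguments are routine error propagation and the bookkeeping needed to ensure the $O(1/m)$ approximation errors are dominated by the $\sqrt{\log m/m}$ fluctuations.
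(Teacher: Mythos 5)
Your proposal is correct, and it uses exactly the paper's decomposition: the same method-of-moments estimators $\hat q_m=(C_m-C_0)/m$ for $\beta/(\beta+\gamma)$ and $\hat r_m=\frac1m\sum_k I_{k-1}T_k$ for $1/(\beta+\gamma)$ (the paper's $\hat A$ and $\hat B=\tS_m/m$, with $\hb=\hat A/\hat B$ and $\hg=1/\hat B-\hb=(1-\hat A)/\hat B$), the same $O(1/m)$ bias control via $m(m+C_0)\le N$, and the same error propagation, including the $\beta+\gamma\le 2\beta$ step that produces the $\beta^2/\gamma^2$ factor. Where you genuinely diverge is in how the three supporting estimates are proved. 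For the concentration of $C_m$ and for $\Pr(\tau<m)$ the paper sandwiches the process between i.i.d.\ Bernoulli walks via a stochastic-dominance theorem of Veinott (Claims~\ref{cl:stoch_dom} and \ref{cl:stoch_dom2}) and then sums Chernoff bounds; you instead use Azuma on the compensated counting process and, for survival, optional stopping applied to the supermartingale $((1-p^*)/p^*)^{I_{k\wedge\sigma}}$. The latter is arguably cleaner --- it gives $B_1=1$ and $B_2=\log\frac{p^*}{1-p^*}$ in closed form, versus the paper's geometric-series constant --- and your reading of the drift hypothesis (that $\frac{\beta}{\beta+\gamma}\frac{N-m-C_0}{N}$ lower-bounds every $p_k$ and exceeds $\frac12$ precisely because $\beta>\gamma$) matches Lemma~\ref{lem:highprob2}. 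For $\hat r_m$ the paper sidesteps the step you flag as delicate: rather than a Freedman/Bernstein bound for conditionally sub-exponential martingale differences, it conditions on the entire trajectory $(\vec C,\vec I)$ on $\{\tau\ge m\}$, under which the $I_{k-1}T_k$ become genuinely independent exponentials with rates pinned in $[\beta(N-m-C_0)/N+\gamma,\ \beta+\gamma]$, and then applies an off-the-shelf tail bound for sums of independent exponentials (Lemma~\ref{lem:Sm-lemma}); you may find that route easier to make airtight than the martingale Bernstein inequality. One small housekeeping point: define the estimator with the sum truncated at $\min(m,\tau)$, as the paper does with $\tS_m$, so that it is a measurable function of $O_m$ on the whole sample space and not only on $\{\tau\ge m\}$; this costs nothing since all bounds are established on that event anyway.
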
 When $\beta$ and $\gamma$ do not scale with the size of the system $N$ (which is the case for epidemics), this result shows that the relative error for both estimators is $O\left({\log m}/{m}\right)$, i.e. independent of $N$. Consequently, to achieve any desired level of accuracy, we simply need the number of observations $m$ to exceed a constant that is independent of the size of the system.
This is in stark contrast to \cref{co:cr}, in which $m$ needs to scale at least as $\omega(N^{2/3})$ in order to learn $N$.

\subsection{Extension to Biased Estimators}\label{sec:extension-bias}
Although we focus on unbiased estimators of $N$ in this work, a lower bound for \textit{biased} estimators of $N$ can also be easily obtained via the generalized Cramer-Rao bound \citep{cramer1946mathematical}, which bounds the variance of biased estimators with given bias and Fisher information.
Using the Fisher information from \cref{thm:fi}, the generalized Cramer-Rao bound implies the following result.
\begin{proposition}
Under \cref{assump:bass} or \cref{assump:sir}, if $\hat N_m$ is a biased estimator of $N$, with bias $b(N) = \bE[\hat N_m] - N$, based on the observations $O_{m}$,
\begin{align}
\bE[\relerror(\hat{N}_m, N)] = \Omega\left(\frac{N^2 (1+b'(N))^2 }{m^3} + \frac{b(N)^2}{N^2}\right).
\end{align}
\end{proposition}
When $|1+b'(N)| < 1$, this bound may be less than the unbiased Cramer-Rao bound in \cref{co:cr}. The result can be used to guide the design of estimators for balancing the bias and variance \citep{eldar2008rethinking}.




\subsection{Decision Problem} \label{sec:decision}
In this section, we formalize a generic decision problem in the context of an epidemic. 
The decision is whether to impose a `drastic intervention' that is associated with a cost, but will immediately stop all further infections.
We establish a lower bound on the regret that any policy will incur for this decision problem.

We assume that there is a cost, $f_1 > 0$, for every infected individual.
A drastic intervention will immediately stop any further new infections, but the intervention will incur a fixed cost of $f_0 > 0$.
Then, a drastic intervention implemented at step $m$ incurs a cost of $f_0 + f_1 C_{m}$, recalling that $C_m$ represents the cumulative number of infected individuals up to step $m$.
On the other hand, if no intervention is implemented, the cost is solely from the infections, which is $f_1 C_{\infty}$, where $C_{\infty}$ represents the total number of accumulated infections at the end of the epidemic.
We study policies that decide when, if ever, to deploy this drastic intervention.

A problem instance is defined as $\mathcal{M} = (f_0, f_1, \beta, \gamma, N)$.
We define the optimal cost as ${\rm{cost}}^{*}(\mathcal{M}) = \min\{f_0 + f_1 C_0, f_1 C_{\infty}\}$, which is the cost of the optimal policy that has knowledge of the entire problem instance, $\mathcal{M}$.
We consider policies that have knowledge of all parameters except $N$. The policy has access to all of the observations of the diffusion process, and the policy faces a stopping problem regarding whether and when to employ the drastic intervention.
The regret of a policy $\pi$ is:
\begin{align}
{\rm{regret}}^{\pi}(\mathcal{M}) := \mathrm{E}[{\rm{cost}}^{\pi}(\mathcal{M})] - {\rm{cost}}^{*}(\mathcal{M}).
\end{align}

We prove the following lower bound on the regret.
\begin{proposition} \label{thm:decision}
    There exists a set of problem instances $\mathcal{S}_{N}$ that are parameterized by $N$, for any policy $\pi$, 
    $$
    \sup_{\mathcal{M} \in \mathcal{S}_{N}} {\rm{regret}}^{\pi}(\mathcal{M}) = \Omega(N^{2/3}).
    $$
    That is, the lower bound on the regret for any policy $\pi$ is $\Omega(N^{2/3}).$
\end{proposition}

The proof of \cref{thm:decision} relies on constructing two instances where the optimal decision is different, but it is difficult to distinguish between these two instances due to the uncertainty in the estimation of $N$. The full proof can be found in Appendix~\ref{sec:app:decision}.
This result shows that the hardness in estimating $N$ translates directly to the difficulty of a generic decision problem on implementing an intervention.

\subsection{Addressing Under-reporting through Seroprevalence Testing} 

The results so far have demonstrated that the estimation of the parameter $N$ is the bottleneck for forecasting using infection data. 
In the epidemic setting, and for COVID-19 in particular, one of the main source of uncertainty in $N$ came from under-reporting \citep{li2020substantial,lau2021evaluating,pullano2021underdetection}.
To overcome this challenge, one can potentially use other data sources in order to improve the estimation of $N$.
For instance, for COVID-19, surveillance tests were often conducted to estimate the prevalence of infections without under-reporting bias \citep{havers2020seroprevalence,bendavid2021covid}. 
In this section, we study the value of utilizing such a dataset.
Specifically, we assume that a random sample of $K$ people are tested for the infection after $m$ observations of the SIR process.
We compute the value of this information via the Fisher information and the Cramer-Rao bound, analogous to our main result of \cref{co:cr}.

After $m$ observations of the SIR process, we assume that $K$ people are chosen at random to be tested for infection.
Then, the infection rate for a randomly chosen person is:
\begin{align*}
\kappa_{m} = \frac{E[C_{m}]}{N},
\end{align*}
which is the ratio between the expected cumulative number of observed infections and the (effective) population. 
Letting $X_{k} \sim \mathrm{Ber}(\kappa_{m})$ be independent Bernoulli random variables that represent the infection outcome of the $k$-th chosen patient, the observation set from the test is:
\begin{align*}
\tilde{O}_{m} = (X_{1}, X_{2}, \dotsc, X_{K}).
\end{align*}
Considering this additional information, we establish the following result.
\begin{proposition} \label{thm:sero}
Under Assumption 2, if $m=o(N)$, then the Fisher information of $O_{m}\cup \tilde{O}_{m}$ relative to $N$ is:
\begin{align*}
J_{{O_{m}\cup \tilde{O}_{m}}}(N) = \Theta\left(\frac{m^{3}}{N^4}\right) + \Theta\left(\frac{Km}{N^3}\right)
\end{align*}
where $\Theta\left(\frac{Km}{N^3}\right)$ quantifies the exact additional Fisher information provided by $\tilde{O}_{m}$.
\end{proposition}


The proof can be found in Appendix~\ref{sec:app:sero}.
Using this, we apply the Cramer-Rao bound for estimating $N$ based on $\tilde{O}_{m}$.
\begin{corollary} \label{cor:sero}
For any unbiased estimator $\hat N_m$ of $N$ based on the observations $\tilde{O}_{m}$,
\begin{align*}
\bE[\text{RelError}(\hat{N}_m, N)] = \Omega\left(\frac{N}{Km}\right). 
\end{align*} 
\end{corollary}

\cref{cor:sero} provides a lower bound for estimating $N$ using a seroprevalence study, and a naive MLE estimator can be used to achieve the lower bound.
In order for $\bE[\text{RelError}(\hat{N}_{m}, N)]=o(1)$, it is necessary to ensure that $Km = \omega(N)$.
This clearly delineates the trade-off in the size of the campaign, $K$, versus the timing of the campaign, $m$.
For example, if $m=\Theta(N^{1/3})$ we require $K = \omega(N^{2/3})$.
Therefore, with a sufficiently large seroprevalence test, we have the potential to surpass the lower bound barrier of two-thirds in the early stages of the epidemic.


\section{Proof of Theorem~\ref{thm:fi}} \label{sec:cr_proof}

Recall that $O_m = (I_0, R_0, T_1, C_1, \dots, T_m, C_m)$.
We will take advantage of conditional independence to decompose the Fisher information $\cJ_{O_m}(N)$ into smaller pieces.
We first define the conditional Fisher information and state some known properties \citep{zegers2015fisher}.
\begin{definition}
Suppose $X, Y$ are random variables defined on the same probability space whose distributions depend on a parameter $\theta$.
Let $g_{X|Y}(x, y, \theta) = \frac{\partial}{\partial \theta} \log f_{X|Y; \theta}(x|y)^2$ be the square of the \textit{score} of the conditional distribution of $X$ given $Y=y$ with parameter $\theta$ evaluated at $x$.
Then, the \textit{conditional Fisher information} is defined as
$\cJ_{X|Y}(\theta)  = \bE_{X, Y}\left[g_{X| Y}(X, Y, \theta)\right]$.
\end{definition}
\begin{property} \label{property:chain_rule}
  $\cJ_{X_1, \dots, X_n}(\theta) = \cJ_{X_1}(\theta) + \sum_{i = 2}^n \cJ_{X_i | X_1, \dots, X_{i-1}}(\theta)$.
\end{property}
\begin{property}  \label{property:indep}
  If $X$ is independent of $Z$ conditioned on $Y$, $\cJ_{X|Y,Z}(\theta) = \cJ_{X|Y}(\theta)$.
\end{property}
\begin{property} \label{property:deterministic}
If $X$ is deterministic given $Y=y$, $g_{X| Y}(X, y, \theta) = 0$.
\end{property}
\begin{property} \label{property:reparam}
If $\theta(\eta)$ is a continuously differentiable function of $\eta$, $\cJ_X(\eta) = \cJ_X(\theta(\eta)) (\frac{d \theta}{d \eta})^2$.
\end{property}

Since $I_0$ and $R_0$ are known and not random, the Fisher information of $O_m$ is equal to the Fisher information of $(T_1, C_1, T_2, C_2, \dots, T_m, C_m)$.
Then, Property~\ref{property:chain_rule} implies
\begin{align} \label{eq:fi_expanded}
\cI_{O_m}(N) &= \cI_{T_1}(N) + \cI_{C_1|T_1}(N) + \cI_{T_2|T_1,C_1}(N) +  \cI_{C_2|T_1,C_1,T_2}(N) + \dots
+\cI_{C_M|T_1, C_1, \dots, T_m}(N)
.
\end{align}

\textbf{Bass Model: }
The above expression simplifies greatly for the Bass model since every event corresponds to a new infection.
That is, we know $C_k = I_k = I_0 + k$ and $S_k = N - k - I_0$ deterministically.
Therefore, Property~\ref{property:deterministic} implies that $\cI_{C_k|\cdot}(N) = 0$ for all $k$.
Moreoever, since $T_k \sim \exp(\beta \frac{S_{k-1}}{N}I_{k-1} + \frac{a}{N} S_{k-1})$ is independent of $T_1, C_1, \dots, C_{k-1}$, $\cI_{T_k|T_1, C_1, \dots, C_{k-1}}(N) = \cI_{T_k}(N)$.
This yields
\begin{align} \label{eq:fi_bass1}
\cI_{O_m}(N) &= \sum_{k=1}^m \cI_{T_k}(N).
\end{align}
By letting $\lambda_k(N) = \left( \frac{\beta}{N}(k+I_0) + \frac{a}{N} \right)(N-k - I_0)$,
since $T_k \sim \exp(\lambda_k(N))$, Property~\ref{property:reparam} says that $\cI_{T_k}(N) = \cI_{T_k}(\lambda_k)\left( \frac{d \lambda_k}{d N}  \right)^2$.
Using that the Fisher Information of an exponential distribution with parameter $\lambda$ is $\frac{1}{\lambda^2}$, a few lines of algebra yields $\cI_{T_k}(N) = \frac{(k+I_0)^2}{N^2(N-k - I_0)^2}$.
Plugging back into \eqref{eq:fi_bass1}, we get
\begin{align} \label{eq:fi_bass2}
\cI_{O_m}(N) 
= \frac{1}{N^2} \sum_{k=1}^m \frac{(k+I_0)^2}{(N-k - I_0)^2}.
\end{align}
Using $I_0=1$ and $m = o(N)$ from \cref{assump:bass}, we get the desired result $\cI_{O_m}(N)  = \Theta\left(\frac{m^3}{N^4}\right)$.

\textbf{SIR Model: }
The analysis for the SIR model is more complicated since $C_k$ is not deterministic and the distribution of $T_k$ depends on $C_{k-1}$.
Moreover, there is a non-zero probability that the process has terminated before the $k$'th jump for any $k$.
Define the indicator variable $E_k = \bI\{\tau > k\}$ on the event that the SIR process has not terminated after $k$ jumps.
The following lemma states that both $E_k$ and $I_k$ can be determined from $C_k$, $I_0$, and $R_0$, which will allow us to decouple variables in $O_m$ in the analysis of the Fisher information.
The result follows from the definitions of $\tau$, $E_k$, and $C_k$; the details can be found in the Appendix.

\begin{lemma} \label{lem:E_k}
Define $r_k \triangleq \frac{I_0 + k + 2R_0}{2}$ for all $k \geq 0$.
For all $k$, $E_k = \bI\{C_{k} > r_{k}\}$.
Moreover, when $E_k = 1$, $I_k = 2C_k - k - I_0 - 2R_0 > 0$.
\end{lemma}
The next lemma writes an exact expression for $\cI_{O_m}(N)$, analogous of \eqref{eq:fi_bass2} for the Bass model:
\begin{lemma}  \label{lemma:fi}
The Fisher information of the observations $O_m$ with respect to the parameter $N$ is
\begin{align}
\cI_{O_m}(N)
&= \sum_{k=1}^{m} \Pr(E_{k-1}=1)  \bE\left[\frac{C_{k-1}^2}{N^2 (N - C_{k-1})(N - C_{k-1}  + \frac{\gamma}{\beta}N)} \;\bigg|\; E_{k-1} = 1 \right]. \label{eq:exactFI}
\end{align}
\end{lemma}
\begin{myproof}
We start from \eqref{eq:fi_expanded}.
Note that for any $k$, $C_k$ and $T_k$ only depend on $C_{k-1}$.
Indeed, since $C_{k-1}$ determines $E_{k-1}$, if $E_{k-1} = 0$ (the stopping time has passed), we have $C_k = C_{k-1}$ and $T_k = \infty$.
When $E_{k-1} = 1$, the distributions of $C_k$ and $T_k$ are given in \eqref{eq:defck}-\eqref{eq:deftk}. Since $\beta, \gamma, I_0, R_0$ are known, $S_{k-1} = P - C_{k-1}$, and $I_{k-1}$ can be determined from $C_{k-1}$ (Lemma~\ref{lem:E_k}), the distributions of $C_k$ and $T_k$ are determined by $C_{k-1}$.
Therefore, we use Property~\ref{property:indep} to simplify \eqref{eq:fi_expanded} to
\begin{align*}
\cI_{O_m}(N) &=\sum_{k=1}^m (\cJ_{C_k|C_{k-1}}(N) + \cJ_{T_k|C_{k-1}}(N)),
\end{align*}
where we used $\cI_{T_1}(N) =  \cI_{T_1|C_0}(N)$, $\cI_{C_1}(N) =  \cI_{C_1|C_0}(N)$.
Moreover, when $E_{k-1}=0$, $C_k$ and $T_k$ are deterministic conditioned on $C_{k-1}$, which implies the
score in this case is 0 (Property~\ref{property:deterministic}).
Therefore, we can condition on $E_{k-1}=1$ to write
\begin{align*}
\cI_{O_m}(N)
&= \sum_{k=1}^m \bE[g_{C_k|C_{k-1}}(C_k,C_{k-1}, N) + g_{T_k|C_{k-1}}(T_k,C_{k-1}, N)| E_{k-1}=1] \Pr(E_{k-1}=1).
\end{align*}
The last step is to evaluate $g_{C_k|C_{k-1}}(C_k,C_{k-1}, N)$ and $g_{T_k|C_{k-1}}(T_k,C_{k-1}, N)$.
When $E_{k-1}=1$, the distributions of $C_k$ and $T_k$ conditioned on $C_{k-1}$ have a simple form provided in \eqref{eq:defck}-\eqref{eq:deftk}.
Property~\ref{property:reparam} allows for straight-forward calculations, resulting in \eqref{eq:exactFI}. See \cref{sec:app:cr_calculations} for details of this last step.
\end{myproof}

What remains is to upper and lower bound \eqref{eq:exactFI}. 
The upper bound $\cI_{O_m}(N) = O\left( \frac{m^3}{N^4}  \right)$ follows from upper bounding $\Pr(E_{k-1})$ by 1 and the fact that $C_{k-1}$ is small relative to $N$
(details of this step are in \cref{sec:app_cr_last_step}).
As for the lower bound, we first show a lower bound for $\Pr(E_{k-1} = 1)$ using the following lemma:
\begin{lemma} \label{lem:highprob}
Let $p = \frac{1}{2}\left( \frac{\beta}{\beta+\gamma} + \frac{1}{2} \right) > \frac{1}{2}$.
There exists a constant $D$ that only depends on $\beta$ and $\gamma$ such that if $\frac{\beta (P- m-C_0)}{\beta (P - m - C_0) + P\gamma} > p$  and $I_0 \geq D$,
then $\Pr(E_{m}=1) \geq \frac{1}{2}$.
\end{lemma}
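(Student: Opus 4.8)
The plan is to show that as long as the "current success probability" of the Bernoulli increments stays above the threshold $p \in (1/2,1)$, the infected count $I_k$ behaves like a random walk with positive drift, so it is exponentially unlikely to ever hit $0$ before step $m$. The key monotonicity observation is that the one-step probability that $C_k - C_{k-1} = 1$ (equivalently $I_k - I_{k-1} = +1$, a new infection rather than a recovery) equals
\[
q_k \;=\; \frac{\beta S_{k-1} I_{k-1}}{\beta S_{k-1} I_{k-1} + N\gamma I_{k-1}} \;=\; \frac{\beta S_{k-1}}{\beta S_{k-1} + N\gamma} \;=\; \frac{\beta (N - C_{k-1})}{\beta(N-C_{k-1}) + N\gamma},
\]
which is \emph{decreasing} in $C_{k-1}$. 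Since on the event $\{\tau > k\}$ we always have $C_{k-1} \le C_0 + (k-1) \le C_0 + m$, it follows that $q_k \ge \frac{\beta(N - m - C_0)}{\beta(N-m-C_0) + N\gamma}$, which by hypothesis exceeds $p$, uniformly over $k \le m$ and over all histories on which the process has not yet terminated. So before termination, $I_k$ stochastically dominates a $\pm 1$ random walk with up-probability $p > 1/2$ started at $I_0$.

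First I would make this coupling precise: define $W_0 = I_0$ and $W_k = W_{k-1} + \xi_k$ where $\xi_k$ are i.i.d.\ with $\Pr(\xi_k = 1) = p$, $\Pr(\xi_k = -1) = 1-p$, coupled to the $(C_k - C_{k-1})$ increments so that on $\{\tau > k-1\}$ we have $I_k - I_{k-1} \ge 2(C_k - C_{k-1}) - 1 \ge \xi_k$ (using $I_k = 2C_k - k - I_0 - 2R_0$ from Lemma~\ref{lem:E_k}; note an infection moves $I$ up by one and a recovery moves it down by one, matching the $\pm 1$ walk). Then $\{E_m = 0\} = \{\exists\, k \le m : I_k = 0 \text{ or } I_k = N\}$; the $I_k = N$ possibility is ruled out for $N$ large since $C_k \le C_0 + m = o(N)$, so $\{E_m=0\} \subseteq \{\exists\, k\le m: I_k = 0\} \subseteq \{\exists\, k \le m : W_k \le 0\} = \{\min_{k\le m} W_k \le 0\}$. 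Hence $\Pr(E_m = 0) \le \Pr(\min_{k \le m} W_k \le 0) \le \Pr(\exists\, k : W_k \le 0)$, the ruin probability of a positively-drifted walk started at $I_0$, which is at most $\left(\frac{1-p}{p}\right)^{I_0}$ by the standard gambler's-ruin / optional-stopping argument applied to the martingale $\left(\frac{1-p}{p}\right)^{W_k}$. Since $p > 1/2$ we have $\frac{1-p}{p} < 1$, so choosing $D = D(\beta,\gamma)$ large enough that $\left(\frac{1-p}{p}\right)^{D} \le \frac12$ and requiring $I_0 \ge D$ gives $\Pr(E_m = 1) \ge \frac12$, as claimed. (In fact this yields the stronger $1 - B_1 e^{-B_2 I_0}$ tail used elsewhere, with $B_2 = \log\frac{p}{1-p}$.)

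The main obstacle is setting up the stochastic domination cleanly. Two subtleties need care: (i) the process $I_k$ is frozen once $\tau$ is passed, so the coupling only needs to hold up to termination — but that is exactly enough, since we only care about the first time $I$ hits $0$; (ii) one must confirm the up-probability bound $q_k > p$ holds with the conditioning as stated — it requires only $C_{k-1} \le m + C_0$, which is automatic on $\{E_{k-1}=1\}$ because each step increments $C$ by at most one and $C_0$ may include $R_0$; the hypothesis $\frac{\beta(N-m-C_0)}{\beta(N-m-C_0)+N\gamma} > p$ is then the precise condition that makes the worst-case $q_k$ still exceed $p$. Everything after the domination is the textbook biased random walk ruin estimate, so no real difficulty remains there.
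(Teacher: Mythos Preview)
Your proof is correct and follows a genuinely different route from the paper. The paper works with the cumulative count $C_k$: it defines an auxiliary process $A_k$ with i.i.d.\ $\mathrm{Bern}(p)$ increments (frozen once $A_k \le r_k$), establishes $\Pr(\tau \le m) \le \Pr(\tau_A \le m)$ via Veinott's stochastic-dominance criterion, and then controls $\Pr(\tau_A \le m)$ by a union bound over $k$ combined with a multiplicative Chernoff bound on partial sums, summing the resulting geometric series to obtain $C_1 e^{-C_2 I_0}$. You instead track $I_k$ directly, couple it to a $\pm 1$ biased random walk $W_k$ started at $I_0$, and invoke the classical gambler's-ruin estimate $\Pr(\min_k W_k \le 0) \le ((1-p)/p)^{I_0}$ via the exponential martingale. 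Your argument is more elementary and avoids both the union bound and the Chernoff machinery; it also yields the exponential-in-$I_0$ tail in one line with an explicit constant $B_2 = \log\tfrac{p}{1-p}$. The paper's approach, on the other hand, is phrased entirely in terms of $C_k$ and the threshold $r_k$ from Lemma~\ref{lem:E_k}, which dovetails with how the Fisher-information computation is organized, and its stochastic-dominance framework (Veinott) is reused verbatim in the proof of Lemma~\ref{lem:Cm-lemma}. One small point worth tightening in your write-up: you rule out the $I_k = N$ termination mode by appealing to $C_0 + m = o(N)$, but the lemma's hypothesis $\frac{\beta(N-m-C_0)}{\beta(N-m-C_0)+N\gamma} > p$ already forces $N - m - C_0 > 0$, hence $S_k \ge N - C_0 - m > 0$ for all $k \le m$, so $I_k < N$ without any asymptotic assumption.
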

This result relies on an interesting stochastic dominance argument and can be found in the appendix.
Then, similarly to the upper bound, $\cI_{O_m}(N) = \Omega\left(\frac{m^3}{N^4}\right)$ follows from using $\Pr(E_{m}=1) \geq \frac{1}{2}$ and the fact that $C_k \geq \frac{k + I_0 + 2R_0}{2}$ when $E_k =1$ (Lemma~\ref{lem:E_k}).

\section{Numerical Results}
\label{sec:experiments}

We run experiments on real-world datasets for both the Bass and SIR models to demonstrate how the theoretical results from \cref{sec:limits_to_learning} manifest in practice. We describe three sets of empirical results:
\begin{itemize}
\item \cref{sec:exp:benchmark_datasets} mirrors the theory in this paper and makes two points: First, the relative error one sees in real-world datasets on quantities of interest as a function of the number of observations closely hews to that predicted by our results. Second, the time at which predictions of key quantities `turn accurate' is late in the diffusion and again matches our theory.

\item In \cref{sec:hetergeneous_mixing}, we conduct a set of semi-synthetic experiments on a variant of the SIR model that captures heterogeneously mixing subpopulations. Since the SIR model assumes that the population mixes uniformly, a practical use of the SIR model needs to be at the right level of granularity. We show that even when we divide the population into smaller subpopulations with different mixing rates, we observe the same phenomenon regarding the estimation of $N$ --- the accuracy sharply increases after $N^{2/3}$ observations.

\item In \cref{sec:exp:covid19}, we describe an approach for COVID-19 forecasting of US counties that leverages an \textit{informative bias} on $N$ to work around the limits of learning. We consider a realistic variant of the SIR model that accounts for non-stationarities and rich county-level covariates, and we employ a heuristic for estimating the parameters that was directly inspired by our main theoretical results.
Specifically, this estimation method leverages the plurality of US counties, as well as the heterogeneity in the timing of COVID-19 infections across these counties.
We show that the insight of our estimation results can guide the development of forecasting methods which significantly improved the forecasting power, compared to a naive estimation method.
\end{itemize}

\subsection{A Discrete-Time Diffusion Model}

First, we describe the standard Euler-Maruyama discretization of our stochastic diffusion model; this discretization better aligns with aggregated (as opposed to event level) data. Real-world data is often stored as arrival counts $\Delta C_{i}[t]$ over a set of discrete time periods $t \in [T]$ and problem instances $i \in \mathcal{I}$. We model these counts as the following Poisson process, obtained by approximately discretizing the exponential arrival process \eqref{eq:deftk}. Precisely, we divide the time horizon into $T$ epochs of length $1$, where at each epoch $t \in [T]$ we observe random variables:
\begin{equation}
  \label{eq:poisson}
\begin{aligned}
  \Delta C_{i}[t] &\sim \mathrm{Poisson}(\lambda_{i, t}(a_{i}, \beta_{i}, N_{i})) \\
  \Delta R_{i}[t] &\sim \mathrm{Poisson}(\gamma I_i[t-1]) \\
\end{aligned}
\end{equation}
where $\lambda_{i, t}(a, \beta, N) = (a + \beta I[t-1]) \frac{S[t-1]}{N}$, and $\Delta C_{i}[t]$ and $\Delta R_{i}[t]$ are independent. Essentially, we evaluate the arrival rate of \eqref{eq:deftk} at the beginning of each epoch, and assume that it remains constant over the course of the epoch. This arrival process is then split into $\Delta C_{i}[t]$ and $\Delta R_{i}[t]$ according to the probabilities in \eqref{eq:defck}. The state space then evolves according to:
\begin{equation}
  \label{eq:discrete-dynamics}
\begin{aligned}
  S_{i}[t] &= S_{i}[t-1] - \Delta C_{i}[t] \\
  I_{i}[t] &=  I_{i}[t-1] + \Delta C_{i}[t] -  \Delta R_{i}[t]  \\
  R_{i}[t] &= R_{i}[t-1] + \Delta R_{i}[t]
\end{aligned}
\end{equation}

For the datasets we study, $\gamma$ is known a priori, (i.e. from clinical data for the ILINet flu datasets; for the Bass model $\gamma=0$). We then obtain maximum likelihood estimates $\hat{a}_{i}[t], \hat{\beta}_{i}[t], \hat{N}_{i}[t]$  for the remaining parameters by solving the problem:
\begin{equation}
  \label{eq:llh}
  \max_{\substack{a, \beta, N \in[0, N_{\max}]}} \quad \sum_{\tau=1}^t  \log p(\Delta C_{i} [\tau] ; \lambda_{i, \tau}(a, \beta, N))
\end{equation}
where $p(x; \lambda) = \frac{\lambda^{x} \exp(-\lambda)}{x!}$ denotes the Poisson PMF with rate parameter $\lambda$, and $N_{\max}$ is an upper bound on $N$ known a priori. This reflects that loose upper bounds on $N$ (e.g., the entire population of a geographic region, for epidemic forecasting) are typically known in real-world problems.

\subsection{MLE Performance on Benchmark Datasets} \label{sec:exp:benchmark_datasets}

In this section, we fit the Bass and SIR models to real-world datasets and compare the empirical results to the theoretical results from \cref{sec:limits_to_learning}.

\paragraph{Datasets.} We fit the Bass model to a dataset of Amazon product reviews
from \cite{ni2019justifying}, which we take as a proxy for product adoption\footnote{
This exact dataset is not necessarily the perfect use case of forecasting in the Bass model, as the data contains reviews rather than sales, and the time frame is quite long that an `early' forecast may not be necessary.
The dataset provides non-synthetic, real-world data on the growth and purchasing of many products, hence the experiments provides valuable insights on Bass model forecasting.}. Here each instance $i$ is a product, $t$ indexes weeks since the product's first review, and $I_{i}[t]$ represents cumulative number of reviews for product $i$. For the SIR model, we use the CDC's ILINet database of patient visits for flu-like illnesses. Here, each instance $i$ is a geographic region, $t$ indexes weeks, and $I_{i}[t]$ represents infected patients. See \cref{sec:datasets} for further details on these datasets.

\paragraph{Comparing actual error to error predicted from theory.}
Here we fit diffusion models to products from the Amazon data, as well as individual seasons from the ILINet data, while varying the number of observations used to fit the model. We compare the observed relative error in predicting the effective population size $N$ in each to the error predicted by Theorem~\ref{co:cr}. We find that Theorem~\ref{co:cr} provides a valuable lower bound despite potential model mis-specification, aggregated data, and the fact that we jointly estimate the $a, \beta$ and $N$ parameters.


Specifically, let $T_{i}$ be the time index of the last observation we have for product $i$. We take $\hat{N}_{i}[T_{i}]$ to be the ground truth parameter for product $i$. Figure~\ref{fig:relerror-vs-cramer-rao} is a scatter plot of the mean (over instances $i$ and times $t$) observed relative error $\mathrm{RelError}(\hat{N}_{i}[t], \hat{N}_i[T_i])$
against the Cramer-Rao lower bound of \cref{co:cr}, $M {\hat{N}_{i}[T_{i}] ^{2}}/{ C_i[t]^{3}}$, where $M$ is a lower bound on the constant suppressed in the statement of \cref{co:cr}.  In addition to providing a lower bound, we find that the slope of the relationship is close to one in both datasets as the error grows small. It is worth re-emphasizing that this is the case despite the fact that the data here is not synthetic, so the Bass and SIR models are almost certainly not a perfect fit to the data.



\begin{figure}[h]
\begin{center}
\begin{subfigure}{0.48\textwidth}
\includegraphics[scale=0.4]{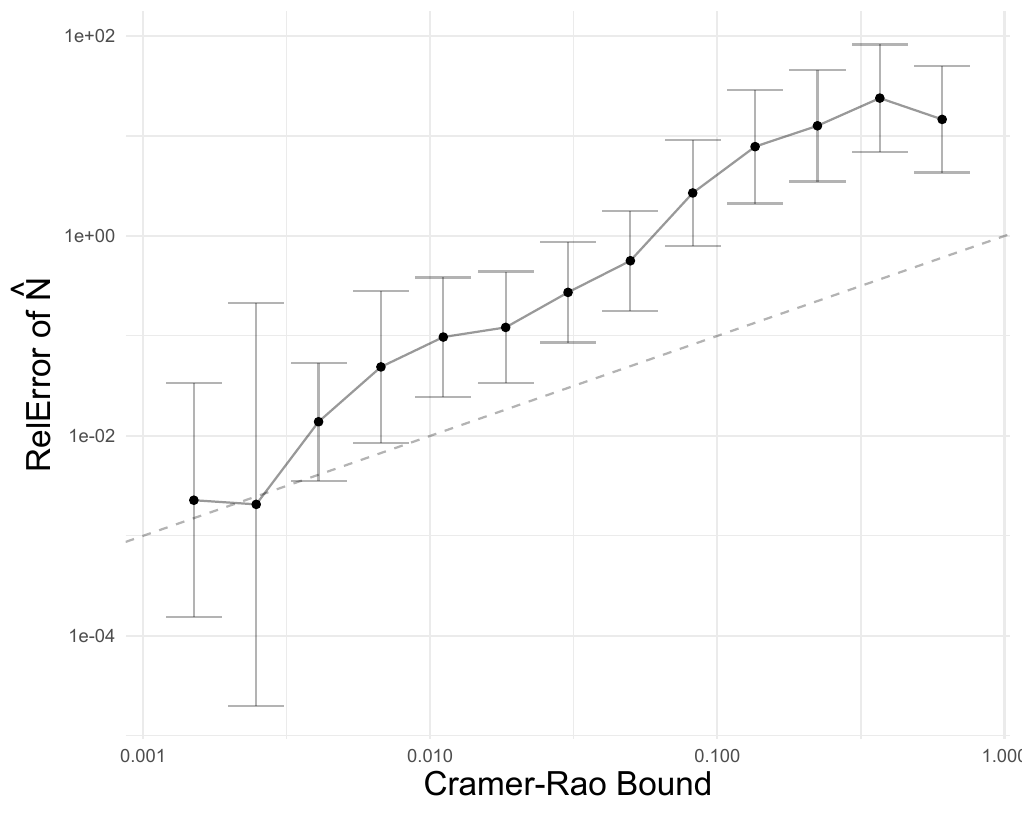}
\caption{Bass model fit on Amazon product reviews.}
\label{fig:amazon-relerror}
\end{subfigure} \hspace{0.02\textwidth}
\begin{subfigure}{0.48\textwidth}
\includegraphics[scale=0.4]{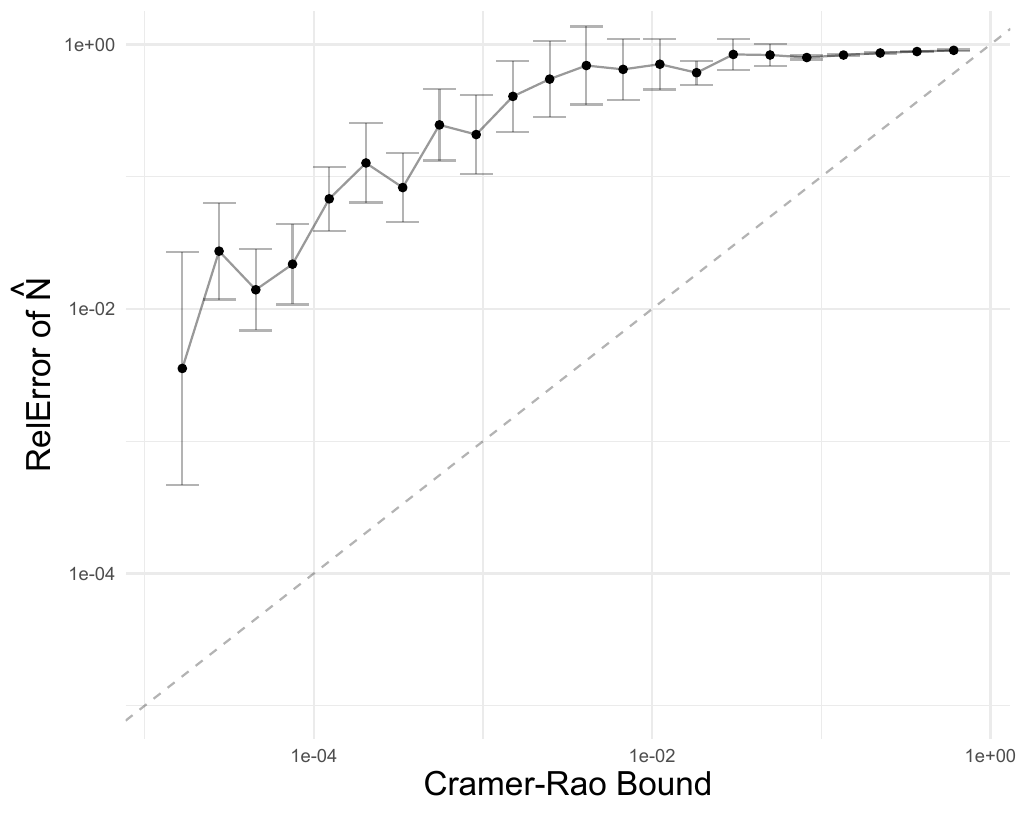}
\caption{SIR model fit on ILINet patient visits.}
\label{fig:flu-relerror}
\end{subfigure}
\vspace{5mm}
\end{center}
\caption{Each figure shows the mean of $\mathrm{RelError}(\hat{N}_{i}[t], \hat{N}_{i}[T_{i}])$ over instances $i \in \mathcal{I}$ and times $t \in [T]$ (error bars show 95\% CIs), vs. the Cramer-Rao bound $\frac{M \hat{N}_{i}[T_{i}] ^{2}}{ C_{i}[t]^{3}}$, where $M$ is a lower bound on the constant suppressed in the statement of \cref{co:cr}. We also show the $y=x$ line (dashed gray) for comparison. As predicted, the Cramer-Rao bound provides a lower bound on $\mathrm{RelError}(\hat{N}_{i}[t], \hat{N}_{i}[T_{i}])$, and the slope of this relationship is close to 1 as the error grows small.}
\label{fig:relerror-vs-cramer-rao}
\end{figure}

\paragraph{Time to accuracy of peak predictions.} As discussed earlier, predicting the peak of the infection process is a key task in the SIR model (as is predicting the peak in new adoptions in the Bass model). Here we show, through the ILINet data, that the time at which our prediction of the peak number\footnote{In \cref{sec:time_to_peak} the peak was defined as the time of the peak {\em rate} of infections rather than the peak number. Both peak definitions refer to a time when a constant fraction of the total population has been infected, and we use the peak number in these experiments as it is a time that is well-defined even with noisy, real-world data.} of infections in an epidemic `turns accurate' is close to the peak and matches what our theory suggests. Specifically, let $I^{*}_{i} = \max_{t \in [T_{i}]} I_{i}[t]$ be the maximal number of infections. Given estimates $\hat{\beta}, \hat{N}$ of the diffusion parameters, we define a point estimate for the peak number of infections
\[
\hat{I}^{*}_{i}(\hat{\beta}, \hat{N}) = \mathbb{E}\left[  \max_{\tau \in [T_{i}]}I_{i }[\tau] \,\big\vert\, \hat{\beta}, \hat{N} \right].
\]

The solid line in Figure~\ref{fig:peak-errors} depicts errors for the estimator $I^{*}_{i}(\hat{\beta}_{i}[t], \hat{N}_{i}[t])$, where
$\hat{\beta}_{i}[t], \hat{N}_{i}[t]$ are the MLE using data up to time $t$.
At 66\% of time to peak\footnote{For reference, the median peak time is 20 weeks.}, around 50\% of instances predict peak infections with $>50\%$ error. By the time the peak actually occurs, around 40\% of instances still suffer prediction error in this range. Errors then drop off sharply after this point.

For comparison, let $\Tilde{\beta}_{i}[t]$ be the solution to the MLE problem \eqref{eq:llh} {\it fixing $N = \hat{N}_i[T_i]$}; that is, the MLE for $\beta$ if we knew the ground-truth value of $N$ a priori.  The dashed line in Figure~\ref{fig:peak-errors} shows errors for the peak estimate $I^{*}(\Tilde{\beta}_{i}[t], \hat{N}_{i}[T_{i}])$.  Errors for this estimator drop off much more quickly, with almost $90\%$ of instances achieving $<50\%$ error by $66\%$ of time to peak. This bears out the predictions of Theorem~\ref{thm:estimation-beta-gamma} that once $N$ is known, the remaining parameters of the SIR process are easy to estimate.

\begin{figure}[h]
\begin{center}
\includegraphics[scale=0.5]{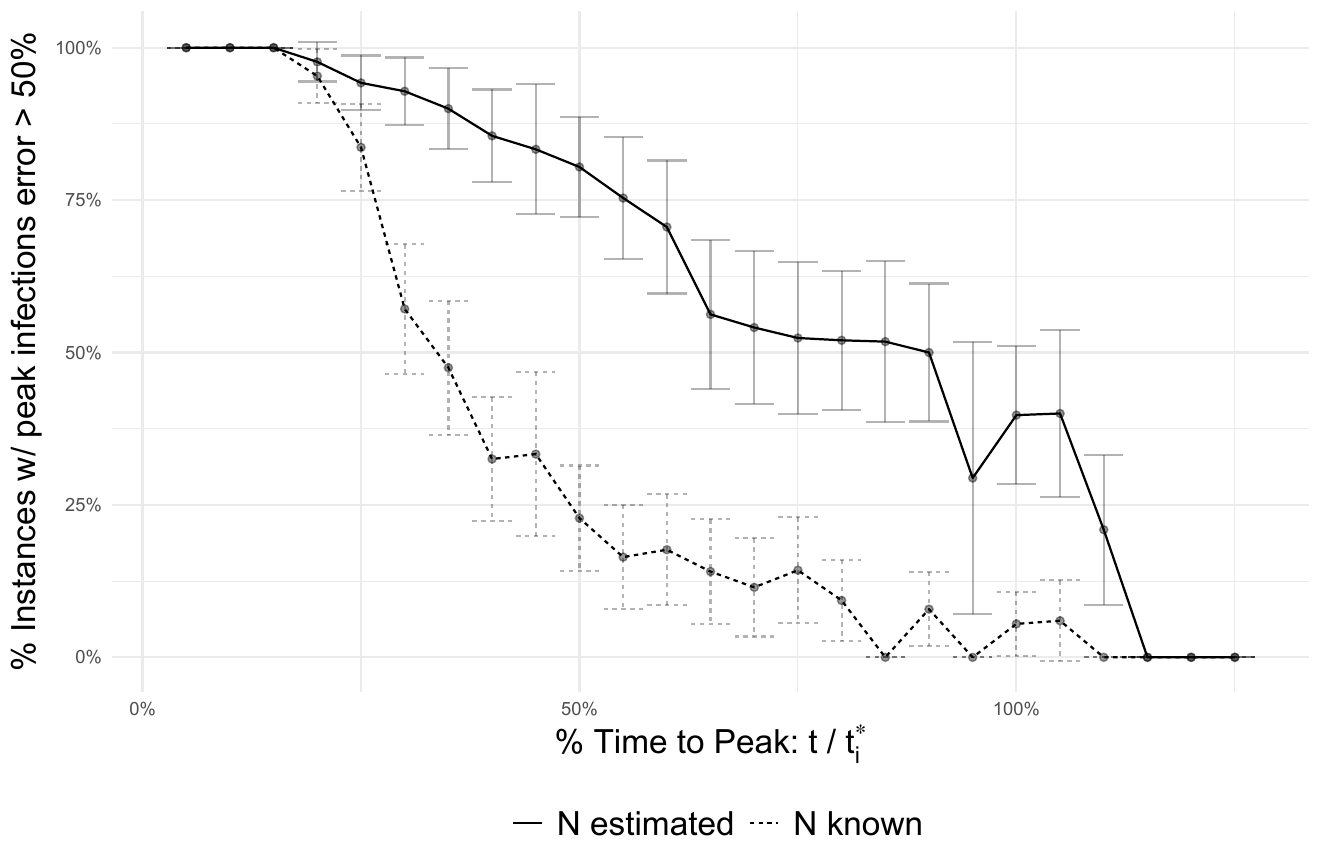}
\caption{\%  of instances with with relative prediction error $\left\lvert \hat{I}^{*}_i(\hat{\beta}, \hat{N}) - I^{*}_{i}\right\rvert / I^{*}_{i} > 0.5$, vs. \% of time to peak on the ILINet dataset. For `N estimated', we evaluate the MLE at time $t$ $\hat{I}^{*}(\hat{\beta}_{i}[t], \hat{N}_{i}[t])$. Errors for this estimator remain unreasonably large until around the peak occurs -- after which it drops dramatically. For `$N$ known', we evaluate the estimator $\hat{I}^{*}(\Tilde{\beta}_{i}[t], \hat{N}_{i}[T_{i}])$; that is, we assume $N$ known and estimate $\beta$ via MLE. Here, most instances estimate the peak accurately after 66\% of time to peak, reflecting the ease of estimating $\beta$ given $N$.}
\label{fig:peak-errors}
\end{center}
\end{figure}

\paragraph{Time to reach a lower bound of $\mathrm{RelError}(\hat{N}) = o(1)$.}
Finally, to understand the implications of our theory at a practical scale, we consider how long it takes to achieve a sample size large enough for Theorem~\ref{co:cr} to admit an error of $\mathrm{RelError}(\hat{N}) = o(1)$. In other words, how long (in real time) it takes to achieve roughly $N^{2/3}$ samples. 
We consider here the Amazon dataset under the Bass model. We first estimate $N$ for each product via MLE on all observations; use this estimated $N$ to determine the minimum required number of observations; then determine from the dataset how long it takes to reach that many observations. Figure~\ref{fig:time-to-N23} shows that this time is extremely long in practice: greater than 6 months for about 75\% of products; and greater than one year for 50\% of products. Notably, this is roughly the same time to sell $N/4$ units --- a constant fraction of all the units that will ever be sold.

\begin{figure}[htbp]
  \centering
  \includegraphics[width=0.7\linewidth]{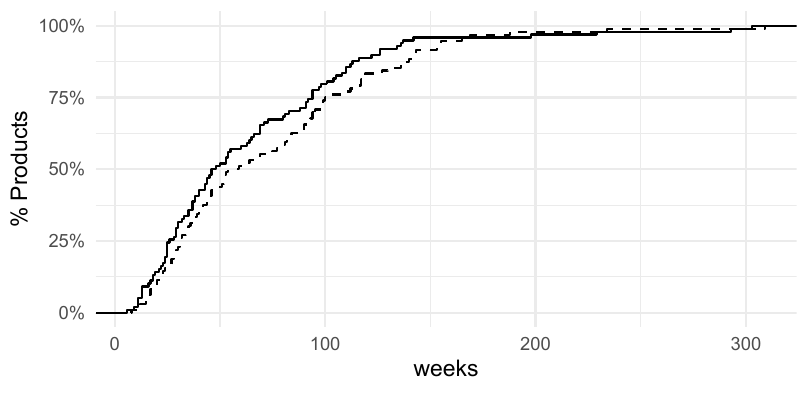}
  \caption{The solid line shows the cumulative distribution of the time needed to achieve a RelError lower bound of $o(1)$ (as given by Theorem~\ref{co:cr}), for products in the Amazon dataset. This is greater than 6 months for about 75\% of products; and greater than one year for 50\% of products. The dashed line shows the cumulative distribution of the time need to sell $N/4$ products; these distributions are comparable.}
  \label{fig:time-to-N23}
\end{figure}

\subsection{Heterogeneous Mixing Subpopulations} \label{sec:hetergeneous_mixing}
We now consider a variant of the SIR model in which the population is partitioned into groups based on their mixing rates. The goal is to determine whether our main results hold under a more complex SIR variant that better captures real-world population dynamics.

We follow \cite{delvalle2007mixing}, which defines subpopulations as age groups $[0, 5), [5, 10) \ldots [70, \infty)$. They provide mixing rates within and between age groups based on contact data collected in each of 131 countries. The paper also provides populations per age group per country, and an estimate of the `clinical fraction', i.e., the proportion of each age group which will present with clinically significant symptoms.

For each country, we use this data to construct a semi-synthetic instance with realistic subpopulations and mixing conditions. We assume for simplicity that the clinical fraction represents the `true' infected proportion of each subpopulation. Let $N_i$ be the total susceptible population in each subpopulation (i.e., population of the subpopulation, times clinical fraction) and let $N = \sum_{i=1}^{n} N_i$ be the total susceptible population across subpopulations. We will further simplify by assuming that the mixing rates $B_{ij}$, recovery rates $\gamma_i$ and population ratios $N_i / N$ are all known; but the decision maker must still estimate the overall scale $N$. We will show empirically that our results continue to hold in this more realistic setting -- even though there remains only one parameter to estimate.

The specific compartmental model we use is a stochastic generalization of \cite{delvalle2013mathematical}, following a jump process model analogous to that in our existing results. At each jump $k \in \mathbb{N}$, we observe either an infection or recovery in some subpopulation, as well as the time between events $k-1$ and $k$. Let $B \in \mathbb{R}^{n \times n}$ be the matrix of transmission rates between subpopulations,  and let $\gamma \in \mathbb{R}^{n}$ be the recovery rate within each subpopulation. Define the infection and recovery rates for subpopulation $i$  after the $k^{\rm th}$ event as follows: 
\begin{align*}
\lambda_{ik}^{I} &= \sum_{j \in [n]} B_{ij} \frac{S_{ik} I_{jk}}{ N_{j}} \\
\lambda_{ik}^{R} &= \gamma_{i} I_{ik}
\end{align*}
Then, let $T_{t}$ denote the duration between events $k-1$ and $k$:
\begin{align*}
  T_{k+1} &\sim {\rm Exp}\left( \sum_{i=1}^{n} (\lambda_{ik}^{I} + \lambda_{ik}^R) \right)
\end{align*}
State dynamics are as follows. There are $2 n$ mutually exclusive events possible each epoch: Infection in subpopulation $i$, and Recovery in subpopulation $i$, for each $i \in [n]$. On event Infection $i$, we have $S_{i, k+1} = S_{ik} - 1$, and $I_{i, k+1} = I_{ik} + 1$. All other state dimensions unchanged. On event Recovery $i$, we have $R_{i, k+1} = R_{i, k+1} + 1$ and $I_{i, k+1} = I_{ik} - 1$. All other state dimensions unchanged. Conditional on history, these have probabilities

\begin{align*}
P(\text{Infection } i) &= \lambda_{ik}^{I} / \sum_{i=1}^{N} (\lambda_{ik}^{I} + \lambda_{ik}^{R}) \\
P(\text{Recovery } i) &= \lambda_{ik}^{R} / \sum_{i=1}^{N} (\lambda_{ik}^{I} + \lambda_{ik}^{R})
\end{align*}

Figure~\ref{fig:average-mixing-results} shows the error in estimating $N$ as a function of $m$, the number of events observed, averaged over all 131 countries. Here we see that, precisely as the theory predicts, the relative error in estimating $N$ is significantly larger than 1 until $m > N^{2/3}$. After this point, the error drops precipitously, and precise estimation of $N$ becomes possible. Figure~\ref{fig:mixing-sir-by-country} shows results by country for 20 randomly selected countries, demonstrating that this holds not only in aggregate, but also for each individual problem instance.

\begin{figure}[h]
    \centering
    \includegraphics[width=0.6\linewidth]{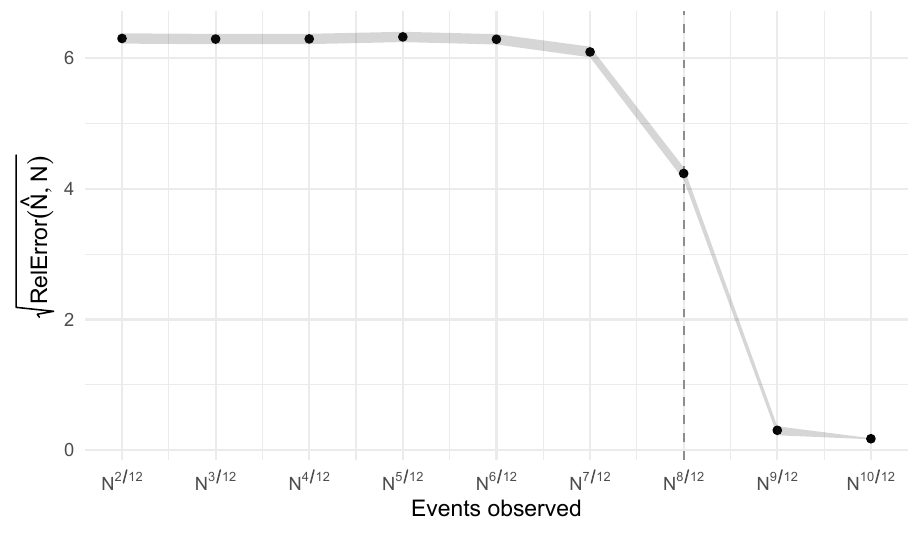}
    \caption{Error in estimating $N$ as a function of $m$, the number of events observed, averaged over all 131 countries. Error bars show standard errors over 50 random seeds.${\rm RelError}(\hat{N}, N)$ is significantly above 1 until $m > N^{2 / 3}$, after which the error drops sharply.} 
    \label{fig:average-mixing-results}
\end{figure}

\begin{figure}[h]
    \centering
    \includegraphics[width=0.6\linewidth]{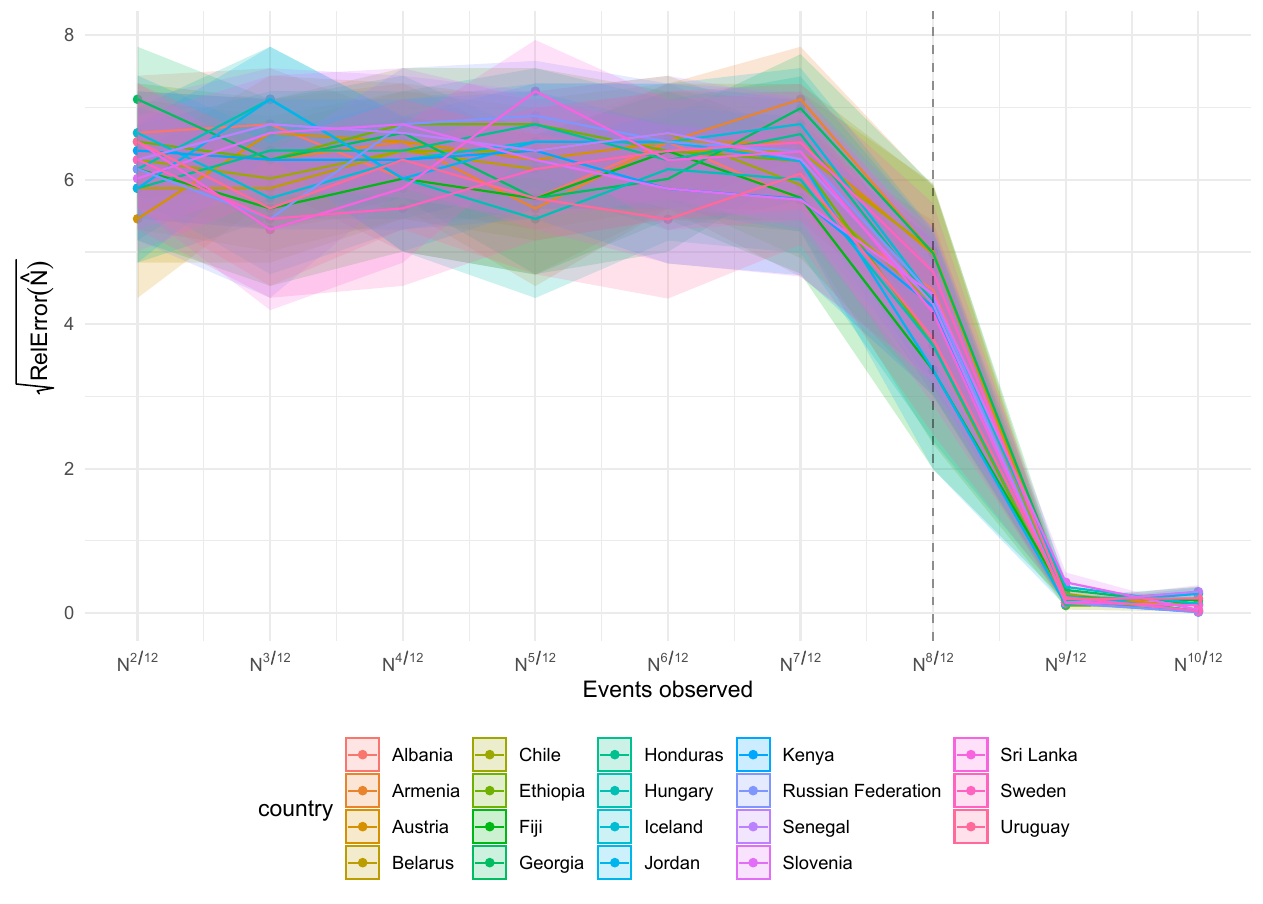}
    \caption{Error in estimating $N$ as a function of $m$, the number of events observed, for 20 randomly chosen countries. Error bars show standard errors over 50 random seeds. In all instances, ${\rm RelError}(\hat{N}, N)$ is significantly above 1 until $m > N^{2 / 3}$, after which the error drops sharply.} 
    \label{fig:mixing-sir-by-country}
\end{figure}

\subsection{Working Around the Limits to Learning in the COVID-19 Pandemic}
\label{sec:exp:covid19}


One approach to bypass the limits of learning is to rely on an estimator that places an informative prior on the effective population parameter, $N$. 
Here we briefly describe a heuristic that used this idea, which was used to produce one of the first broadly available county-level forecasts for COVID-19.

As above, we would like to forecast infections for a set of regions $i \in \mathcal{I}$. Recall that the effective population $N_i$ for region $i$ is the product of the actual population of the region (which is obviously known) and the fraction of infections that are actually observed (which is not). To arrive at a useful bias for $N_i$, we exploit hetereogeneity in the timing of infections in each region and use a `two-stage estimation' method. Specifically, infections start at different times in each region, and we typically have access to some set $Q[t] \subseteq \mathcal{I}$ of regions that have \textit{already experienced enough infections to reliably estimate} $N_{i}$ for $i \in Q[t]$. At a high level, our strategy will be to identify the set $Q[t]$, estimate $N_{i}$ for $i \in Q[t]$, then extrapolate these estimates (e.g., via matching on region-level covariates) to obtain $N_{i}$ for $i \notin Q[t]$. 
We describe this methodology briefly in the following section, as well as in detail in Appendix~\ref{sec:prior}.

\subsubsection{Two-Stage Estimation.}
Letting $P_i$ be the known population of region $i$, we parameterize the effective population as $N_i(\phi, \delta) = \exp(\phi ^\top Z_{i} + \delta_{i}) P_{i}$, where $Z_{i}$ are non-time-varying, region-specific covariates, $P_i$ is the population of region $i$, $\phi$ is a vector of fixed effects, and $\delta_{i} \sim \mathcal{N}(0, \sigma^{2}_{\delta})$ are region-specific random effects.

Next, we also incorporate demographic and mobility factors into the model, which influence the reproduction rate.
We define $X_i[t]$ as a set of time-varying covariates for region $i$, which represent both demographic features of the region as well as dynamic mobility features that represent the amount of movement of people in the county (leveraging anonymized location data). Then, we write $\beta_{i}[t]$ as a mixed effects model incorporating covariates $\beta_{i}[t] = \exp( X_i[t] ^\top \theta) + \epsilon_{i}$, where $\theta$ is a vector of fixed effects, and $\epsilon_{i} \sim \mathcal{N}(0, \sigma^{2}_{\epsilon})$ is a vector of random effects.
Lastly, we set $\gamma=1/4$ to be constant.

Given observations up to time $t$, we define the set $Q[t]$ to be the regions that have passed their peak rate of new infections.
Then, we estimate the model parameters $(\theta, \phi, \delta, \epsilon)$ in two stages:
\begin{enumerate}
  \item Estimate the peak parameters $\hat{\phi}, \hat{\delta}$ via MLE, for the regions $i \in Q[t]$.
  Set $\delta_i = 0$ for all $i \notin Q[t]$.
  \item Estimate the remaining parameters over all regions.
\end{enumerate}
Essentially, we use the regions in $Q[t]$ (the regions whose infection rate is passed its peak) in the first stage to learn the parameter $\phi$, which is a shared parameter for all regions that is used to determine $N_i$.
This estimated value for $\phi$ is used for determining $N_i$ for all $i \notin Q[t]$, which represent the regions that are in the earlier stages of the epidemic.

We compare the performance of the above approach to a naive `one-stage' approach (which we call \textit{MLE} in the next section), which simply estimates all parameters jointly.

\subsubsection{Experimental results}

We show the results of applying this methodology for forecasting in the COVID-19 pandemic. Our dataset consists of daily cumulative COVID-19 infections $C_{i}[t]$ at the level of sub-state regions $i \in \mathcal{I}$, from March to May 2020. The dataset also includes a rich set of covariates for each region, which we use to extrapolate the fits $N_{i} : i \in P[t]$ to other regions.

We compare the effectiveness of our heuristic (dubbed {\em Two-Stage}) to two extremes: {\em MLE} simply applies an approximate version of the MLE (the maximum likelihood problem here is substantially harder due to the recovery process being latent) to the data available and {\em Idealized} cheats by using a value of $N_i$ learned by looking into the future.
%
Figure~\ref{fig:compare-priors} shows weighted mean absolute percentage error (WMAPE) over regions, with weights proportional to infections on the last day in our dataset (May 21, 2020), for two metrics relevant to decision making: cumulative infections by May 21, 2020 and maximum daily infections, for regions that have peaked by May 21, 2020. Model vintages vary along the x-axis so that moving from left to right models are trained on an increasing amount of data.

\begin{figure}[htbp]
  \centering
  \includegraphics[scale = 0.75]{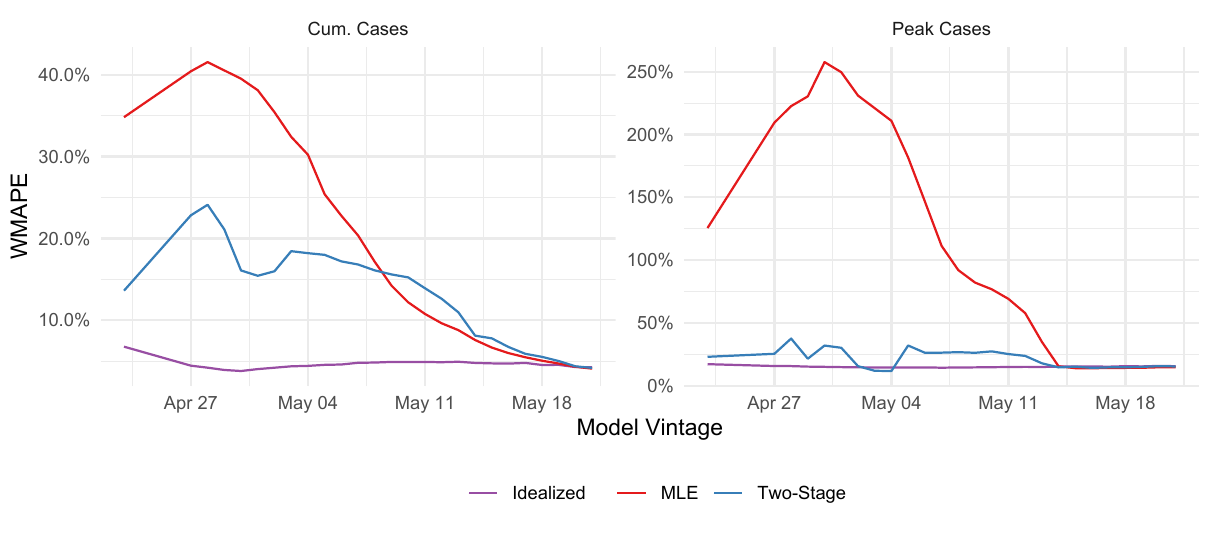}
  \caption{Prediction errors by model vintage, for regions that have peaked by May 21, 2020. Colors denote different approaches to learning $N_{i}$. }
  \label{fig:compare-priors}
\end{figure}

At one extreme, {\it Idealized} exhibits consistently low error even for early model vintages. This bears out the prediction of Theorem~\ref{thm:estimation-beta-gamma}: given $N$, $\beta$ is easy to learn even early in the infection with few samples. 
{\it MLE} performs poorly until close to the target date of May 21 at which point sufficient data is available to learn $N$. This empirically illustrates the difficulty of learning $N$, as described in~\cref{co:cr}. Finally, we see that {\it Two-Stage} significantly outperforms {\it MLE} far away from the test date. Close to the test date the two approaches are comparable. For maximum daily infections, {\it MLE} drastically underperforms {\it Two-Stage} far from the test date. Our approach to learning from peaked regions significantly mitigates the difficulty of learning $N$. Further details on this study can be found in Appendix~\ref{sec:prior}.


\section{Conclusion} 
\label{sec:conclusion}

In this paper, we have shown fundamental limits to learning for the SIR and Bass models, two models that often serve as building blocks for epidemic and product adoption modeling.
By establishing sample complexity lower bounds, we have demonstrated the challenge of achieving early accurate forecasting due to the due to the time required to collect a sufficient number of observations. Moreover, our analysis extends to decision-making scenarios involving costly interventions to mitigate further infections, where we establish a lower bound on regret.

These findings highlight the difficulty of accurate forecasting based solely on infection trajectories and emphasize the need to incorporate additional data sources. 
We illustrate the potential benefits of seroprevalance testing, showing that even a sublinear-sized test, after a sufficient number of diffusion process samples, can significantly improve the estimation of $N$. Additionally, we introduce a heuristic approach employed in COVID-19 forecasting that biases the estimation of $N$ using a prior, leveraging regional infection timing heterogeneity.
Going forward, we believe that the development of new methods that can effectively overcome the established lower bounds represent a valuable avenue for future research.

\bibliographystyle{informs2014}
\bibliography{ref}

\newpage
\appendix
\setstretch{1.1}

Appendices~\ref{sec:app:cr_proof}, \ref{sec:app:estimator_proof_bass} and \ref{sec:app:estimator_proof} contain the proofs of Theorems~\ref{thm:fi}, \ref{thm:bass-estimation-p-a} and \ref{thm:estimation-beta-gamma} respectively.
Appendix~\ref{sec:app:prop_proofs} contains the proofs of Propositions~\ref{prop:constant_fraction}, \ref{prop:identity_fluid}, \ref{prop:time_to_peak_bass} and \ref{thm:deterministic-time}, each in their own subsections.
\cref{sec:datasets} provides details on the datasets used in Section~\ref{sec:experiments}, and \cref{sec:prior} contains a detailed description of the COVID-19 forecasting model from \cref{sec:exp:covid19}.
\section{Proof of Theorem~\ref{thm:fi}} \label{sec:app:cr_proof}
We finish the sections of the proof that were not included in the main paper.
This includes the proof of Lemma~\ref{lem:E_k}, Lemma~\ref{lem:highprob}, calcuations for Lemma~\ref{lemma:fi}, and details regarding the final step of the proof.

We define $\lambda(N, k-1 , C_{k-1}) = \left(\frac{\beta(N-C_{k-1})}{ N} + \gamma\right)I_{k-1}$ and $\eta(N, C_{k-1}) = \frac{\beta (  N -C_{k-1}) }{\beta (  N -C_{k-1})+   N\gamma}$. Thus,  for $ k \leq \tau$, $\lambda(N, k-1, C_{k-1})$ is the mean of the $k$-th inter-arrival time and $\eta(N,  C_{k-1})$ is the probability that the arrival in the $k$-th instance is a new infection rather than a recovery.

\subsection{Proof of Lemma~\ref{lem:E_k}}
\begin{myproof} 
	Suppose $k < \tau$ i.e $E_k = 1$. Then, $k$ is equal to total number of jumps that have occurred so far (the number of movements from S to I and from I to R).
	The number of individuals that have moved from S to I is $C_k - I_0 - R_0$, and the number of movements from I to R is $C_k - I_k - R_0$.
	Therefore, $k = 2C_k - I_0 - I_k - 2R_0$.
	Since $I_k > 0$, $C_k > r_k$.

	Suppose $k \geq  \tau$ i.e $E_k =0 $. Then, $k$ is greater than or equal to the total number of jumps, which is still equal to $2C_k - I_0 - I_k - 2R_0$.
	Hence $C_k \leq r_k$ in this case.


\end{myproof}


\subsection{Proof of Lemma~\ref{lem:highprob}}
\begin{myproof}

Let $X_k \overset{iid}{\sim} \Bern(p)$ for $k = 1, 2, \dots$.
Let $\{A_k: k \geq 0\}$ be a stochastic process defined by:
\begin{align*} 
A_k &= \begin{cases}
          C_0 & \text{if }k = 0 \\
          C_0 + X_1 + \dots + X_k & \text{if $A_i > r_i \; \forall i < k$} \\
          A_{k-1} & \text{otherwise.} \\
       \end{cases}
\end{align*}

Let $\tau_A = \min\{k : A_k \leq r_k\}$ be the ``stopping time'' of this process.

\begin{claim}   \label{cl:stoch_dom}
$\Pr(\tau \leq m) \leq \Pr(\tau_A \leq m)$.
\end{claim}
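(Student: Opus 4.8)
The plan is to upgrade the claim to a pathwise statement on a single, explicitly constructed coupling: I will show that under this coupling $\{\tau \le m\} \subseteq \{\tau_A \le m\}$, which immediately yields $\Pr(\tau \le m) \le \Pr(\tau_A \le m)$. Concretely, build both processes on a common probability space driven by an i.i.d.\ sequence $U_1, U_2, \dots$ of $\mathrm{Unif}[0,1]$ variables: set $X_k = \bI\{U_k \le p\}$ (so $\{A_k\}$ evolves exactly as in its definition), and, on the event that the SIR process is still alive after step $k-1$, set its $k$-th cumulative increment to be $C_k - C_{k-1} = \bI\{U_k \le \eta(N, C_{k-1})\}$. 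Since $U_k$ is independent of the past and $\eta(N,C_{k-1})$ is measurable with respect to it, this increment has the correct conditional law $\Bern(\eta(N,C_{k-1}))$ from \eqref{eq:defck}, so the coupled $\{C_k\}$ is a genuine copy of the SIR cumulative-count process.

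The driving fact is the monotonicity of $\eta$ in the cumulative count: writing $\eta(N,c) = \frac{\beta(N-c)}{\beta(N-c)+N\gamma} = \frac{x}{x+N\gamma}$ with $x = \beta(N-c)$, this is increasing in $x$, hence decreasing in $c$. Because $\{C_k\}$ has $\{0,1\}$ increments, $C_k \le C_0 + k \le C_0 + m$ for every $k \le m$, so the hypothesis $\frac{\beta(N-m-C_0)}{\beta(N-m-C_0)+N\gamma} > p$ of Lemma~\ref{lem:highprob} gives $\eta(N,C_k) \ge \eta(N, m+C_0) > p$ on the event that the process is alive at step $k \le m$. Consequently $\bI\{U_{k+1} \le \eta(N,C_k)\} \ge \bI\{U_{k+1} \le p\} = X_{k+1}$: the SIR increments dominate those of $A$ step by step.

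I would then run an induction on $k = 0, 1, \dots, m$ with hypothesis: if $\tau_A > k$ then $\tau > k$ and $C_k \ge A_k$. Base case: $\tau_A > 0 \iff A_0 = C_0 > r_0$, and since $C_0 = I_0 + R_0$ and $r_0 = (I_0 + 2R_0)/2$, this is equivalent to $I_0 > 0$, which holds as $I_0 \ge D$; together with $I_0 \le m = o(N)$ this gives $0 < I_0 < N$, hence $\tau > 0$, and trivially $C_0 = A_0$. Inductive step: assume $\tau_A > k+1$; then $\tau_A > k$, so by the hypothesis $\tau > k$ and $C_k \ge A_k$. Both processes are alive at step $k$, so $A_{k+1} = A_k + X_{k+1}$ and $C_{k+1} = C_k + \bI\{U_{k+1} \le \eta(N,C_k)\}$; by increment dominance and $C_k \ge A_k$, $C_{k+1} \ge A_{k+1}$. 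Now $\tau_A > k+1$ forces $A_{k+1} > r_{k+1}$, hence $C_{k+1} \ge A_{k+1} > r_{k+1}$; moreover $C_{k+1} \le C_0 + m < N$, so $I_{k+1} = N$ is impossible. By Lemma~\ref{lem:E_k}, $\tau > j \iff C_j > r_j$, so $\tau > k+1$. Taking $k = m$ gives $\tau_A > m \Rightarrow \tau > m$; the contrapositive is exactly $\{\tau \le m\} \subseteq \{\tau_A \le m\}$.

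I expect the main obstacle to be getting the state-dependent coupling watertight: the dominance of increments is only valid where $C_k \le m + C_0$, so one must carry the a priori bound $C_k \le C_0 + k$ through the induction alongside the monotonicity of $\eta$, and one must check that the ``freeze after termination'' conventions for both $C_k$ and $A_k$ never interfere (they do not, since the dynamics are only invoked while both processes are alive). Everything else — the base-case computation and the translation between $\tau$, $E_k$, and the events $\{C_k > r_k\}$ via Lemma~\ref{lem:E_k} — is routine bookkeeping.
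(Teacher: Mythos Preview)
Your proof is correct. It differs from the paper's in presentation rather than in spirit: the paper invokes Veinott's 1965 stochastic-ordering theorem as a black box to conclude $(A_0,\dots,A_m) \leq_{st} (C_0,\dots,C_m)$, and then applies the decreasing indicator $u(x_0,\dots,x_m)=\bI\{\cup_k\{x_k\le r_k\}\}$ to flip the order and extract $\Pr(\tau\le m)\le\Pr(\tau_A\le m)$. You instead build the coupling explicitly with common uniforms and prove the pathwise inclusion $\{\tau_A>m\}\subseteq\{\tau>m\}$ by induction. Your route is more elementary---it avoids citing an external theorem and yields a strictly stronger almost-sure statement---while the paper's route is shorter to state and makes the general mechanism (conditional stochastic dominance propagates to joint stochastic dominance) more visible. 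The two are equivalent at the level of ideas: Veinott's theorem is itself proved by exactly the kind of monotone coupling you wrote down.
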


The proof of this claim involves showing the process $\{A_k\}$ is stochastically less than $\{C_k\}$; the proof can be found in Section~\ref{sec:proofstochdom}.
We now upper bound $\Pr(\tau_A \leq m)$.
$\tau_A \leq m$ if and only if $A_k \leq r_k$ for some $k \leq m$.
Before this happens, $A_k = C_0 + X_1 + \dots + X_k$.
Therefore, if $\tau_A \leq m$, it must be that $C_0 + X_1 + \dots + X_k \leq \frac{k+I_0+2R_0}{2}$ for some $k \leq m$.
\begin{align*}
\Pr(\tau_A \leq m)
&\leq  \sum_{k = 1}^m \Pr\left(C_0 + X_1 + \dots + X_k \leq \frac{k+I_0 + 2R_0}{2}\right) \\
&=  \sum_{k = 1}^m \Pr\left(X_1 + \dots + X_k < pk\left( 1- \frac{2pk - k+I_0}{2pk}\right)\right).
\end{align*}
Since $\bE[X_1 + \dots + X_k] = pk$, using the Chernoff bound (multiplicative form: $\Pr(\sum_{i = 1}^k X_i \leq (1-\delta)\mu) \leq \exp(-\delta^2 \mu/2)$) gives
\begin{align}
\Pr(\tau_A \leq m)
&\leq  \sum_{k = 1}^m \exp\left(- \frac{pk}{2} \left(\left(1-\frac{1}{2p}\right) + \frac{I_0}{2pk}\right)^2 \right) \nonumber \\
&=  \sum_{k = 1}^m \exp\left(- \frac{pk}{2} \left(1-\frac{1}{2p}\right)^2  - \frac{I_0}{2} \left(1-\frac{1}{2p}\right)   -   \frac{I_0^2}{8pk} \right) \nonumber \\
& \leq   \sum_{k = 1}^m \exp\left(- \frac{pk}{2} \left(1-\frac{1}{2p}\right)^2  - \frac{I_0}{2} \left(1-\frac{1}{2p}\right)   \right) \nonumber \\
&\leq  \exp\left(-\left(\frac{1}{2} - \frac{1}{4p} \right)I_0\right)\sum_{k = 1}^m \exp\left(- \frac{pk}{2} \left(1-\frac{1}{2p}\right)^2  \right)\nonumber  \\
&\leq   C_1\exp(-C_2 I_0),  \label{eq:a_m_1}
\end{align}
for constants $C_1 = \sum_{k = 1}^\infty \exp\left(- \frac{pk}{2} \left(1-\frac{1}{2p}\right)^2  \right), C_2 = \frac{1}{2} - \frac{1}{4p} > 0$.
($C_1$ is a constant since it is a geometric series with a ratio smaller than 1, since $p > 1/2$.)
Let $D$ be the solution to $C_1\exp(-C_2 D) = \frac{1}{2}$.
Then, if $I_0 \geq D$, $\Pr(E_{m}) = 1-\Pr(\tau \leq m) \geq 1-\Pr(\tau_A \leq m) \geq \frac{1}{2}$.

\end{myproof}

\subsubsection{Proof of Claim~\ref{cl:stoch_dom}.} \label{sec:proofstochdom}

\begin{definition}
For scalar random variables $X, Y$, we say that $X$ is \textit{stochastically less than} $Y$ (written $X \leq_{st} Y$) if for all $t \in \bR$,
\begin{align*}
\Pr(X > t) \leq \Pr(Y > t).
\end{align*}
For random vectors $X, Y \in \bR^n$ we say that $X \leq_{st} Y$ if for all increasing functions $\phi: \bR^n \rightarrow \bR$,
\begin{align*}
\phi(X_1, \dots, X_n) \leq_{st} \phi(Y_1, \dots, Y_n).
\end{align*}
\end{definition}

We make use of the following known result for establishing stochastic order for stochastic processes.
\begin{theorem}[Veinott 1965] \label{thm:veinott}
Suppose $X_1, \dots, X_n$, $Y_1, \dots, Y_n$ are random variables such that $X_1 \leq_{st} Y_1$ and
for any $x \leq y$,
\begin{align*}
(X_k | X_1 = x_1, \dots, X_{k-1}=x_{k-1})
\leq_{st}
(Y_k | Y_1 = y_1, \dots, Y_{k-1}=y_{k-1})
\end{align*}
for every $2 \leq k \leq n$.
Then, $(X_1, \dots, X_n) \leq_{st} (Y_1, \dots, Y_n)$.
\end{theorem}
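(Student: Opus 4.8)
The plan is to prove the theorem by constructing an explicit \emph{monotone coupling} of the two processes: a single probability space carrying vectors $(\hat X_1,\dots,\hat X_n) \sim (X_1,\dots,X_n)$ and $(\hat Y_1,\dots,\hat Y_n) \sim (Y_1,\dots,Y_n)$ with $\hat X_k \le \hat Y_k$ almost surely for every $k$. Once such a coupling is available the conclusion is immediate: for any increasing $\phi$ we have $\phi(\hat X_1,\dots,\hat X_n) \le \phi(\hat Y_1,\dots,\hat Y_n)$ pointwise, hence $\bE[\phi(X_1,\dots,X_n)] = \bE[\phi(\hat X_1,\dots,\hat X_n)] \le \bE[\phi(\hat Y_1,\dots,\hat Y_n)] = \bE[\phi(Y_1,\dots,Y_n)]$. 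Since for any threshold $t$ the map $\bI\{\phi(\cdot) > t\}$ is again increasing, this is exactly the statement $\phi(X_1,\dots,X_n) \leq_{st} \phi(Y_1,\dots,Y_n)$ for all increasing $\phi$, i.e. the vector order in the sense defined above.

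The building block is the one-dimensional quantile coupling, which I will record first. For scalar laws, $U \leq_{st} V$ holds if and only if the quantile functions satisfy $F_U^{-1}(w) \le F_V^{-1}(w)$ for all $w \in (0,1)$; this follows because $U \leq_{st} V$ is equivalent to $F_U \ge F_V$ pointwise, and passing to generalized inverses reverses this inequality. Consequently, feeding a single $\mathrm{Unif}(0,1)$ variable $W$ through the two quantile maps produces a pair $\bigl(F_U^{-1}(W), F_V^{-1}(W)\bigr)$ with the correct marginals and ordered almost surely. This is the scalar form of Strassen's theorem and needs nothing beyond monotonicity of CDFs and quantile transforms.

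I then build the coupling coordinate by coordinate using i.i.d. uniforms $W_1,\dots,W_n$, working with regular conditional distributions (which exist since every variable takes values in the Polish space $\bR$). Set $\hat X_1 = F_{X_1}^{-1}(W_1)$ and $\hat Y_1 = F_{Y_1}^{-1}(W_1)$, ordered by $X_1 \leq_{st} Y_1$ and the scalar fact. Inductively, given the first $k-1$ coordinates with $\hat X_j \le \hat Y_j$ for $j<k$, define $\hat X_k$ and $\hat Y_k$ by applying the \emph{conditional} quantile functions of $X_k \mid X_{1:k-1}=\hat X_{1:k-1}$ and $Y_k \mid Y_{1:k-1}=\hat Y_{1:k-1}$ to the common $W_k$. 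On the almost-sure event $\hat X_{1:k-1} \le \hat Y_{1:k-1}$ the hypothesis gives $(X_k \mid X_{1:k-1}) \leq_{st} (Y_k \mid Y_{1:k-1})$, so the two conditional quantile functions are pointwise ordered and $\hat X_k \le \hat Y_k$; meanwhile, plugging an independent uniform into a conditional quantile transform reproduces the correct conditional law, so the chain rule for disintegrations guarantees that $\hat X_{1:n}$ and $\hat Y_{1:n}$ carry exactly the laws of $X_{1:n}$ and $Y_{1:n}$. Iterating to $k=n$ finishes the construction, and the coupling argument of the first paragraph then yields the claim.

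I expect the main obstacle to be measurability and law-identification bookkeeping in the inductive step rather than any single hard inequality. Specifically, one must verify that the conditional quantile maps $(w, x_{1:k-1}) \mapsto F_{X_k \mid x_{1:k-1}}^{-1}(w)$ are jointly measurable, so that $\hat X_k$ is a genuine random variable, and that the resulting Markov-kernel composition recovers the intended joint distributions. A further subtlety is that the conditional stochastic-order hypothesis is stated for deterministic $x \le y$, whereas I must apply it at the random but almost-surely ordered realizations $\hat X_{1:k-1} \le \hat Y_{1:k-1}$; this is resolved by conditioning on the $\sigma$-algebra generated by $W_1,\dots,W_{k-1}$ and invoking the hypothesis pathwise. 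Everything else, namely the scalar quantile lemma and the final passage from the coupling to $\leq_{st}$, is routine.
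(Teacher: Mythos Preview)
The paper does not actually prove this theorem; it is quoted as a known result from the literature (attributed to Veinott 1965) and used as a black box in the proof of Claim~\ref{cl:stoch_dom}. Your coupling construction via sequential conditional quantile transforms is correct and is in fact the standard textbook argument for this result (see, e.g., Shaked and Shanthikumar, \emph{Stochastic Orders}, Theorem~6.B.3, or Kamae, Krengel and O'Brien, 1977). The measurability and law-identification points you flag are genuine but routine on $\bR^n$, exactly as you say, so there is no gap.
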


\begin{myproof}[Proof of Claim~\ref{cl:stoch_dom}]
Because of the condition $\frac{\beta(N-m-C_0)}{\beta(N-m-C_0) + N\gamma} > p$, for $k \leq m$ and $k \leq \tau$, $C_k - C_{k-1} \sim \Bern(q)$ for $q > p$.
First, we show $(A_0, A_1, \dots, A_m) \leq_{st} (C_0, C_1, \dots, C_m)$ using Theorem~\ref{thm:veinott}.
$C_0 \leq_{st} A_0$ since $C_0 = A_0 = I_0$.
We condition on $A_{k-1} = x$ and $C_{k-1} = y$ for $x \leq y$, and we must show $A_k \leq_{st} C_k$. (We do not need to condition on all past variables since the both processes are Markov.)
If $x \leq r_{k-1}$, then $A_{k} = A_{k-1} = x \leq y = C_{k-1} \leq C_k$.
Otherwise, the process $A_k$ has not stopped, and neither has $C_k$ since $y \geq x$.
Then, $A_{k} \sim x + \Bern(p)$ and $C_{k} \sim y + \Bern(q)$ for some $q \geq p$.
Clearly, $A_{k} \leq_{st} C_k$ in this case.
We apply Theorem~\ref{thm:veinott}, which implies $A_m \leq_{st} C_m$.

Define the function $u:\bR^{m+1} \rightarrow \{0, 1\}$, $u(x_0, x_1, \dots, x_m) =\bI\{\cup_{k = 1}^m \{x_k \leq r_k\}\}$.
Then, $u(A_0, A_1, \dots, A_m) = 1$ if and only if $\tau_A \leq m$, and $u(C_0, C_1, \dots, C_m) = 1$ if and only if $\tau \leq m$.
$u$ is a decreasing function. Therefore, $u(A_0, A_1, \dots, A_m) \geq_{st} u(C_0, C_1, \dots, C_m)$.
Then, $\Pr(\tau \leq m) = \Pr(u(C_0, C_1, \dots, C_m) \geq 1) \leq \Pr(u(A_0, A_1, \dots, A_m) \geq 1) = \Pr(\tau_A \leq m)$ as desired.
\end{myproof}

\subsection{Calculations for Lemma~\ref{lemma:fi}} \label{sec:app:cr_calculations}

We define $\lambda(N, k-1 , C_{k-1}) = \left(\frac{\beta(N-C_{k-1})}{ N} + \gamma\right)I_{k-1}$ and $\eta(N, C_{k-1}) = \frac{\beta (  N -C_{k-1}) }{\beta (  N -C_{k-1})+   N\gamma}$. Thus,  for $ k \leq \tau$, $\lambda(N, k-1, C_{k-1})$ is the mean of the $k$-th inter-arrival time and $\eta(N,  C_{k-1})$ is the probability that the arrival in the $k$-th instance is a new infection rather than a recovery.

\textbf{Derivation of $\bE_{C_k}[g_{C_k|C_{k-1}}(C_k,C_{k-1}, N) | E_{k-1} =1]$.}
When $E_{k-1} = 1$, we have $C_k \sim C_{k-1} +  \Bern(\eta(N,  C_{k-1})) $.
Therefore, $ \bE_{C_k}[g_{C_k|C_{k-1}}(C_k,C_{k-1}, N) | E_{k-1} =1] = \cJ_{C_k \sim \Bern(\eta(N,  C_{k-1}))} (N)$.
We reparameterize to write the Fisher information as:
\begin{align*}
\bE_{C_k}[g_{C_k|C_{k-1}}(C_k,C_{k-1}, N) | E_{k-1} =1]
&= \cJ_{C_k \sim \Bern(\eta)}(\eta) \left( \partN \eta(N,  C_{k-1}) \right)^2 \\
&= \frac{1}{\eta(1-\eta)} \left( \partN \eta(N,  C_{k-1}) \right)^2.
\end{align*}

Use $\eta(N,  C_{k-1} ) = \frac{\beta ( N - C_{k-1} ) }{\beta ( N -C_{k-1})+  N\gamma}$ to derive
\begin{align*}
\partN \eta(N,  C_{k-1} )  &= \frac{\beta (\beta( N- C_{k-1} ) + \gamma  N) - \beta ( N- C_{k-1} )(\beta +\gamma)}{(\beta (  N- C_{k-1} ) + \gamma  N)^2} \\
&= \frac{\beta \gamma C_{k-1}}{(\beta( N- C_{k-1} ) + \gamma  N)^2}.
\end{align*}

Also, $\frac{1}{\eta (1-\eta)} = \frac{(\beta(N- C_{k-1} ) + N\gamma)^2}{(N- C_{k-1} )\beta N\gamma} $.

Substituting,

\begin{align*}
\bE_{C_k}[g_{C_k|C_{k-1}}(C_k,C_{k-1}, N) | E_{k-1} =1]
&= \frac{(\beta(N- C_{k-1} ) + N\gamma)^2}{(N- C_{k-1} )\beta N\gamma} \left(\frac{\beta \gamma C_{k-1}}{(\beta( N- C_{k-1} ) + \gamma  N)^2}\right)^2 \\
&= \frac{\beta \gamma C_{k-1}^2}{(N- C_{k-1} )N(\beta( N- C_{k-1} ) + \gamma  N)^2}  \\
\end{align*}

\noindent \textbf{Derivation of $\bE_{T_k}[g_{T_k|C_{k-1}}(T_k,C_{k-1}, N) | E_{k-1} =1]$.}
Similarly, conditioned on $ E_{k-1} =1, T_k \sim \Exp(\lambda(N, k-1, C_{k-1}) )$.
Therefore, $\bE_{T_k}[g_{T_k|C_{k-1}}(T_k, C_{k-1} , N)] = \cJ_{T_k \sim \Exp(\lambda(N, k-1,  C_{k-1}))}(N)$.
We reparameterize to write
\begin{align*}
\bE_{T_k}[g_{T_k|C_{k-1}}(T_k,  C_{k-1} , N)]
&= \cJ_{T_k \sim \Exp(\lambda)}(\lambda) \left( \partN\lambda(N, k-1, C_{k-1}) \right)^2 \\
&= \frac{1}{\lambda^2} \left( \partN \lambda(N, k-1, C_{k-1} ) \right)^2.
\end{align*}
Use $\lambda(N, k-1, C_{k-1}) = (\frac{ \beta(N- C_{k-1} )}{N} + \gamma)(2C_{k-1} - (k-1) - I_0 - 2R_0)$ to derive
\begin{align*}
\partN \lambda(N, k-1, C_{k-1}) &= \frac{ \beta C_{k-1} (2C_{k-1} - (k-1) - I_0 - 2R_0)}{ N^2} \\
\frac{1}{\lambda(N, k-1, C_{k-1})} &= \frac{N}{(\beta(N- C_{k-1} ) + \gamma N)(2C_{k-1} - (k-1)-I_0 - 2R_0)}.
\end{align*}
Substituting,
\begin{align*}
\bE_{T_k}[g_{T_k|C_{k-1}}(T_k, C_{k-1} , N)]
&= \left(\frac{\beta  C_{k-1} }{N(\beta(N- C_{k-1} ) + \gamma N)}\right)^2 \\
\end{align*}

\noindent \textbf{Derivation of  $\cI_{O_m}(N)$.}
Using the expressions derived above for $\bE_{C_k}[g_{C_k|C_{k-1}}(C_k,C_{k-1}, N) | E_{k-1} =1]$ and \\ $\bE_{T_k}[g_{T_k|C_{k-1}}(T_k, C_{k-1} , N)] $, we get
\begin{align*}
&	\bE_{C_k}[g_{C_k|C_{k-1}}(C_k,C_{k-1}, N) | E_{k-1} =1] + \bE_{T_k}[g_{T_k|C_{k-1}}(T_k, C_{k-1} , N)]  \\
&=  \frac{\beta \gamma C_{k-1}^2}{(N- C_{k-1} )N(\beta( N- C_{k-1} ) + \gamma  N)^2}  + \left(\frac{\beta  C_{k-1} }{N(\beta(N- C_{k-1} ) + \gamma N)}\right)^2 \\
&= \frac{C_{k-1}^2}{(N - C_{k-1})N^2 (N - C_{k-1}  + \frac{\gamma}{\beta}N)}
\end{align*}
Thus,
\begin{align*}
	\cI_{O_m}(N)
	&= \sum_{k=1}^m \bE[g_{C_k|C_{k-1}}(C_k,C_{k-1}, N) + g_{T_k|C_{k-1}}(T_k,C_{k-1}, N)| E_{k-1}=1] \Pr(E_{k-1}=1) \\
	&=\sum_{k=1}^{m} \bE\left[\frac{C_{k-1}^2}{(N - C_{k-1})N^2 (N - C_{k-1}  + \frac{\gamma}{\beta}N)} \; \bigg| \; E_{k-1} = 1 \right] \Pr(E_{k-1}=1) .
\end{align*}

\subsection{Details of Final Step of \cref{thm:fi}} \label{sec:app_cr_last_step}
Define $p \triangleq \frac{1}{2}(\frac{\beta}{\beta+\gamma} + \frac{1}{2}) > \frac{1}{2}$ as in \cref{lem:highprob}.
Assume $N$ is large enough so that $m + C_0 \leq \frac{N}{2}$ and $\frac{\beta (N - m - C_0)}{\beta (N - m - C_0) + P \gamma} > p$ (this is possible since $\frac{\beta}{\beta + \gamma} > p$ and $m = o(N)$).

For the upper bound, we have that $C_k \leq k + I_0 + R_0$ by definition.
Since $I_0, R_0 \leq m$ by assumption, $C_k \leq 3m$.
Moreover, by assumption, $C_k \leq m + C_0 \leq \frac{N}{2}$.
Plugging these into \eqref{eq:exactFI} results in
\begin{align*}
\cI_{O_m}(N)
&\leq \sum_{k=0}^{m-1} \Pr(E_{k-1}=1)  \frac{(3m)^2}{N^2 (N-\frac{1}{2} N)((N-\frac{1}{2} N)+ \frac{\gamma}{\beta} N)}
\leq H_1 \frac{m^3}{N^4},
\end{align*}
for a constant $H_1$.

Then, similarly to the upper bound, $\cI_{O_m}(N) \geq H_2 \frac{m^3}{N^4}$ follows from using $\Pr(E_{m}=1) \geq \frac{1}{2}$ and the fact that $C_k \geq \frac{k + I_0 + 2R_0}{2} \geq \frac{k}{2}$ when $E_k =1$ (Lemma~\ref{lem:E_k}):
\begin{align*}
\cI_{O_m}(N)
&\geq \sum_{k=0}^{m-1} \frac{1}{2} \frac{\left(\frac{k}{2}\right)^2}{N^4}
\geq H_2 \frac{m^3}{N^4},
\end{align*}
Combining the upper and lower bounds finish the proof.

\subsection{Generalization of Theorem \ref{co:cr}: a Finite-sample Result}\label{sec:generation-finite-sample}
Note that the proof of \cref{thm:fi} provides exact formulas for the Fisher information, where quantities such as $N$ and $m$ are finite. This leads to the following result of which \cref{co:cr} is a special case, and it holds for any initial conditions $I_0$ and $R_0.$
 \begin{theorem}
Consider any observation $(T_0, I_{0}, R_0, T_1, I_{1}, R_{1}, \dotsc, T_{m}, I_{m}, R_{m})$ from either a Bass model or an SIR model (with any initial $I_{0}$ and $R_{0}$). Suppose $\hat{N}$ is an un-biased estimator for $N$. Let $I_{\max} = \max_{1\leq i\leq m} I_{i}$ with $I_{\max} \leq c(p, \gamma, \beta) N.$ Then
\begin{align*}
    \mathbb{E}\left[\frac{(\hat{N}-N)^2}{N^2}\right] \geq C(p, \gamma, \beta) \frac{N^2}{I_{\max}^3}
\end{align*}
where $C(p, \gamma, \beta)$ and $c(p, \gamma, \beta)$ are constants that are explicit functions of $p, \gamma, \beta$.
\end{theorem}


\section{Proof of Theorem~\ref{thm:bass-estimation-p-a}}
\label{sec:app:estimator_proof_bass}

\begin{myproof}
We construct the estimators $\hat{a}$ for $a$ and $\hat{\beta}$ for $\beta$ as the following. To begin, let
\begin{align*}
\hat{A} &:= \frac{\sum_{i=1}^{m/2} \min(T_{i}, T_{m-i})}{m / 2}\\
\hat{B} &:= \frac{\sum_{i=1}^{m/4} \min(T_{i}, T_{m/2-i})}{m/4}.
\end{align*}
We will show momentarily that $\hat{A}$ approximates $\frac{1}{2a + m\beta}$ and $\hat{B}$ approximates $\frac{1}{2a + (m/2) \beta}$. We then construct $\hat{a}$ and $\hat{\beta}$:
\begin{align*}
\hat{\beta} &:= \left(\frac{1}{\hat{A}} - \frac{1}{\hat{B}}\right) \frac{2}{m}\\
\hat{a} &:= \left(\frac{2}{\hat{B}} - \frac{1}{\hat{A}} \right) \frac{1}{2}.
\end{align*}

To start the proof, let us analyze $\hat{A}$. Let $A_i = \min(T_i, T_{m-i}), 1\leq i \leq \lfloor m/2 \rfloor$. By the property of independent exponential random variables, we have $A_i \sim \mathrm{Exp}(l_i)$
where
$$
l_i := (2a + \beta m)-(a + i\beta)\frac{i}{N} - (a + (m-i)\beta)\frac{m-i}{N}.
$$
Note that when $N \gg m$, we shall have $l_{i} \approx 2a + \beta m$, which is independent from $i$. This inspires us to use $\hat{A} := \frac{\sum_{i} A_{i}}{m/2}$ as an estimator for $\frac{1}{2a + \beta m}.$

More specifically, Let $\mu = \frac{\sum_{i} E[A_i]}{m/2}  = \frac{1}{m/2}\sum_{i} \frac{1}{l_i}$. Note that
\begin{align*}
\frac{N-m}{N} (2a + \beta m) \leq l_i \leq 2a+\beta m.
\end{align*}
Then, this implies that $\mu$ is close to $\frac{1}{2a + \beta m}$:
\begin{align}\label{eq:mu-bound}
\frac{1}{2a+m\beta} \leq \mu \leq \frac{N}{N-m}\frac{1}{2a+m\beta}
\end{align}
On the other hand, we can invoke the multiplicative Bernstein inequality (\cite{janson2018tail}) to obtain, with probability $1-O(1/N^2)$, 
\begin{align}\label{eq:berinstein}
(1-\delta)\mu \leq \hat{A} \leq \mu (1+\delta)
\end{align}
where $\delta := O(\sqrt{\log(N)/m}).$ Combining \cref{eq:mu-bound} and \cref{eq:berinstein}, we then have
\begin{align*}
\frac{1}{2a+m\beta} (1-\delta) \leq \hat{A} \leq \frac{N}{N-m} (1 + \delta) \frac{1}{2a+m\beta}.
\end{align*}
This further implies desired bounds for using $\frac{1}{\hat{A}}$ to estimate $2a + m\beta$:
\begin{align*}
\left|\frac{1}{\hat{A}} - (2a+m\beta) \right| 
&\lesssim (\delta + \frac{m}{N}) (2a+m\beta)\\
&\lesssim \sqrt{\frac{\log(N)}{m}} (2a + m \beta)
\end{align*}
where the last inequality uses $m=O(N^{2/3}\log^{1/3}(N))$ (hence $m/N \lesssim \delta$). 

A similar analysis can be conducted for $\hat{B}$, which implies that
\begin{align*}
\left|\frac{1}{\hat{B}} - (2a+(m/2)\beta) \right| 
&\lesssim (\delta + \frac{m}{N}) (2a+(m/2)\beta)\\
&\lesssim \sqrt{\frac{\log(N)}{m}} (2a + (m/2) \beta)
\end{align*}

Combining the bounds of $\hat{A}$ and $\hat{B}$, we then obtain the bounds for $\hat{a}$ and $\hat{\beta}$, which completes the proof\footnote{A further refinement can be performed for analyzing $\hat{a}$ by considering a set of estimators $\hat{S}_{k} = \sum_{i=1}^{k/2} \min(T_{i}, T_{k-i}$ that generalize $\hat{A}$ and $\hat{B}.$ We omit the details for simplicity.}. 
%
\end{myproof}

\section{Proof of Theorem~\ref{thm:estimation-beta-gamma}} \label{sec:app:estimator_proof}

\subsection{Construction of Estimators}\label{sec:explicit-construction}
Our construction of estimators $\hb$ for $\beta$ and $\hg$ for $\gamma$ is the following. To begin, let
\begin{align*}
\hat{A} := \frac{C_{m} - C_{0}}{m} \\
\hat{B} := \frac{\sum_{k=1}^{\min (m,\tau)} I_{k-1}T_{k}}{m}.
\end{align*}
We will show momentarily that $\hat{A}$ can be viewed as an estimator for $\frac{\beta}{\beta+\gamma}$ and $\hat{B}$ an estimator for $\frac{1}{\beta+\gamma}.$ Then given $\hat{A}$ and $\hat{B}$, we construct
\begin{align*}
\hb := \frac{\hat{A}}{\hat{B}}\\
\hg := \frac{1}{\hat{B}} - \hb.
\end{align*}
This construction leads to the guarantees stated in \cref{thm:estimation-beta-gamma}. 


The proof is based on a series of lemmas stated below. The first lemma bounds the probability that the epidemic diminishes before $m$ samples, which follows from \eqref{eq:a_m_1} of the proof of Lemma~\ref{lem:highprob}.
\begin{lemma} \label{lem:highprob2}
If $\frac{\beta}{\beta+\gamma}\frac{N-m-C_0}{N} > p$, $\Pr(\tau < m) \leq B_1 e^{-B_2 I_0}$, where $B_1, B_2 > 0$ are constant that depend only on $\beta$ and $\gamma$.
\end{lemma}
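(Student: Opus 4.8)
This lemma is a quick corollary of the work already done for Lemma~\ref{lem:highprob}: the only new ingredient is to check that the hypothesis stated here is (at least) as strong as the one used there, and that the pieces we borrow never secretly relied on the extra assumption $I_0 \geq D$. First I would note that $\{\tau < m\} \subseteq \{\tau \leq m\}$, so it suffices to bound $\Pr(\tau \leq m)$. Next, observe that the hypothesis $\frac{\beta}{\beta+\gamma}\cdot\frac{N-m-C_0}{N} > p$ implies the per-step bound used in Claim~\ref{cl:stoch_dom}: since $\beta(N-m-C_0) + N\gamma \leq (\beta+\gamma)N$, enlarging the denominator only shrinks the fraction, so
\[
\frac{\beta(N-m-C_0)}{\beta(N-m-C_0) + N\gamma} \;\geq\; \frac{\beta(N-m-C_0)}{(\beta+\gamma)N} \;=\; \frac{\beta}{\beta+\gamma}\cdot\frac{N-m-C_0}{N} \;>\; p.
\]
Combined with the fact that $C_{k-1} \leq C_0 + (k-1) \leq C_0 + m$ for $k \leq m$ and that $\eta_k = \frac{\beta(N-C_{k-1})}{\beta(N-C_{k-1})+N\gamma}$ is decreasing in $C_{k-1}$, this guarantees that on $\{k \leq \tau\}$ the $k$-th jump is $\mathrm{Bern}(q)$ with $q > p$, exactly the premise of Claim~\ref{cl:stoch_dom}.

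\textbf{Key steps.} (1) Apply Claim~\ref{cl:stoch_dom} verbatim to get $\Pr(\tau \leq m) \leq \Pr(\tau_A \leq m)$, where $\{A_k\}$ and $\tau_A$ are the auxiliary random-walk process and its stopping time from the proof of Lemma~\ref{lem:highprob}. (2) Invoke the Chernoff-bound computation that established \eqref{eq:a_m_1}, giving $\Pr(\tau_A \leq m) \leq C_1 e^{-C_2 I_0}$ with $C_1 = \sum_{k \geq 1}\exp\!\big(-\tfrac{pk}{2}(1-\tfrac{1}{2p})^2\big) < \infty$ and $C_2 = \tfrac12 - \tfrac{1}{4p} > 0$. (3) Chain these: $\Pr(\tau < m) \leq \Pr(\tau \leq m) \leq \Pr(\tau_A \leq m) \leq C_1 e^{-C_2 I_0}$, and set $B_1 = C_1$, $B_2 = C_2$. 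Since $p = \tfrac12(\tfrac{\beta}{\beta+\gamma}+\tfrac12)$ depends only on $\beta$ and $\gamma$, so do $B_1$ and $B_2$.

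\textbf{Main obstacle.} There is essentially no analytic obstacle — the probabilistic content (the Veinott stochastic-ordering coupling of $\{A_k\}$ underneath $\{C_k\}$, and the multiplicative Chernoff tail) is already proved. The one thing to be careful about is that Claim~\ref{cl:stoch_dom} and inequality \eqref{eq:a_m_1} are used \emph{without} the hypothesis $I_0 \geq D$: that extra condition enters only in the final line of Lemma~\ref{lem:highprob}, where the bound is converted into the clean statement $\Pr(E_m = 1) \geq \tfrac12$; the exponential-in-$I_0$ estimate itself holds unconditionally. So the ``proof'' is really a matter of isolating the right intermediate inequality and verifying the hypothesis translation above.
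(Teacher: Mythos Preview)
Your proposal is correct and follows essentially the same approach as the paper: the paper simply states that the lemma ``follows from \eqref{eq:a_m_1} of the proof of Lemma~\ref{lem:highprob},'' and you have correctly unpacked what that entails, including the hypothesis translation (showing $\tfrac{\beta}{\beta+\gamma}\tfrac{N-m-C_0}{N} > p$ implies $\tfrac{\beta(N-m-C_0)}{\beta(N-m-C_0)+N\gamma} > p$) and the observation that the $I_0 \geq D$ assumption is not needed for the exponential bound itself.
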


The next two lemmas give a high probability confidence bound for estimators $\hat{A}$ and $\hat{B}$.
\begin{lemma}\label{lem:Cm-lemma}
For any $m, I_0$ where $\frac{\beta}{\beta+\gamma}\frac{N-m-C_0}{N} > \frac{1}{2}$,
for any $\delta > 0$,
\begin{align*}
\Pr\left(\frac{C_m-C_0}{m} \notin \left[\frac{\beta}{\beta+\gamma}(1-\delta)\frac{N-m-C_0}{N},  \frac{\beta}{\beta+\gamma}(1+\delta)\right], \tau \geq m\right) \leq 2\exp(-m\delta^2/(4+2\delta)).
\end{align*}
\end{lemma}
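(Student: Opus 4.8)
The plan is to prove the stated concentration bound for $C_m - C_0$, the number of jumps among the first $m$ that are new infections rather than recoveries, by exploiting that on the event $\{\tau \ge m\}$ the conditional infection probability at each jump is pinned into a narrow band. First I would record the conditional law of the increments: since $p=0$, equation~\eqref{eq:defck} gives, for $k \le \tau$, $C_k - C_{k-1} \mid \mathcal{F}_{k-1} \sim \Bern(q_{k-1})$ with $q_{k-1} = \frac{\beta S_{k-1}}{\beta S_{k-1} + N\gamma}$. Every jump moves one individual either from $S$ to $I$ (raising $C$ by one) or from $I$ to $R$ (leaving $C$ unchanged), so $C_k - C_{k-1} \in \{0,1\}$ — whence $C_0 \le C_{k-1} \le C_0 + m$ for all $k \le m$ — and $S_{k-1} = N - C_{k-1}$; on $\{\tau \ge m\}$ every $k \le m$ satisfies $k \le \tau$, so this conditional description of $X_k := C_k - C_{k-1}$ applies. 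Using that $x \mapsto \frac{\beta x}{\beta x + N\gamma}$ is increasing, together with the elementary inequality $\frac{\beta(N-m-C_0)}{\beta(N-m-C_0)+N\gamma} \ge \frac{\beta}{\beta+\gamma}\cdot\frac{N-m-C_0}{N}$ (which rearranges to $N-m-C_0 \le N$), this yields, on $\{\tau \ge m\}$ and for all $k \le m$,
\[
\underline q \le q_{k-1} \le \bar q, \qquad \underline q := \tfrac{\beta}{\beta+\gamma}\tfrac{N-m-C_0}{N}, \quad \bar q := \tfrac{\beta}{\beta+\gamma};
\]
the hypothesis $\underline q > \tfrac12$ then forces both $\underline q$ and $\bar q$ above $\tfrac12$.

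Next I would sandwich $C_m - C_0$ between two i.i.d.\ Bernoulli sums via a common-uniform coupling. Realize the chain on a space driven by i.i.d.\ uniforms $U_1, U_2, \dots$ so that, for $k \le \tau$, $X_k = \bI\{U_k \le q_{k-1}\}$ with $q_{k-1}$ being $\mathcal{F}_{k-1}$-measurable, and set $Y_k^- := \bI\{U_k \le \underline q\}$ and $Y_k^+ := \bI\{U_k \le \bar q\}$, two i.i.d.\ Bernoulli sequences. The band from the previous step gives $Y_k^- \le X_k \le Y_k^+$ whenever $\tau \ge k$, so on $\{\tau \ge m\}$, summing over $k = 1, \dots, m$, we get $\sum_{k=1}^m Y_k^- \le C_m - C_0 \le \sum_{k=1}^m Y_k^+$.

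Finally I would apply the multiplicative Chernoff bound to the i.i.d.\ sums and a union bound. For the upper tail, $\Pr\big(\sum_{k=1}^m Y_k^+ \ge (1+\delta)\bar q m\big) \le \exp(-\bar q m\,\delta^2/(2+\delta)) \le \exp(-m\delta^2/(4+2\delta))$ using $\bar q \ge \tfrac12$; for the lower tail (with $\delta \in (0,1)$; the event is empty for $\delta \ge 1$), $\Pr\big(\sum_{k=1}^m Y_k^- \le (1-\delta)\underline q m\big) \le \exp(-\underline q m\,\delta^2/2) \le \exp(-m\delta^2/4)$ using $\underline q \ge \tfrac12$. On $\{\tau \ge m\}$ the event $\tfrac{C_m - C_0}{m} > \bar q(1+\delta)$ is contained in $\{\sum_{k=1}^m Y_k^+ > (1+\delta)\bar q m\}$ and $\tfrac{C_m - C_0}{m} < \underline q(1-\delta)$ is contained in $\{\sum_{k=1}^m Y_k^- < (1-\delta)\underline q m\}$, so a union bound yields the claimed $2\exp(-m\delta^2/(4+2\delta))$.

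The only delicate point is the interaction with the absorbing time $\tau$: the clean conditional-Bernoulli description of the increments holds only for $k \le \tau$, so the sandwich must be arranged to hold on $\{\tau \ge m\}$ rather than almost surely, which is precisely what the common-uniform coupling delivers without ever analyzing the post-absorption dynamics. Everything else is a routine Chernoff computation, and the hypothesis $\frac{\beta}{\beta+\gamma}\frac{N-m-C_0}{N} > \tfrac12$ enters only to absorb the Bernoulli means into the constants $4+2\delta$ and $4$.
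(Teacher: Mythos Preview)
Your proof is correct and follows essentially the same strategy as the paper: sandwich $C_m - C_0$ between two i.i.d.\ Bernoulli sums with parameters $\underline q = \frac{\beta}{\beta+\gamma}\frac{N-m-C_0}{N}$ and $\bar q = \frac{\beta}{\beta+\gamma}$, then apply multiplicative Chernoff bounds. The paper accomplishes the sandwich by introducing an auxiliary unstopped process $\tilde C_k$ (coinciding with $C_k$ on $\{\tau\ge m\}$) and invoking Veinott's stochastic-dominance theorem to compare $\tilde C_m$ with the two i.i.d.\ walks; you instead build the comparison pathwise via a common-uniform coupling $X_k=\bI\{U_k\le q_{k-1}\}$, which gives $Y_k^-\le X_k\le Y_k^+$ directly on $\{\tau\ge m\}$ without any auxiliary process or appeal to Veinott. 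Your route is a bit more elementary and handles the stopping time more transparently, while the paper's version packages the comparison as a clean stochastic-order statement; the resulting bounds and constants are identical.
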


\begin{lemma}\label{lem:Sm-lemma}
Let $\tS_{m} = \sum_{k=1}^{\min(m, \tau)} I_{k-1}T_{k}$. Then
\begin{align*}
\Pr\left(\frac{\tS_{m}}{m} \notin [\frac{(1-\delta)}{\beta+\gamma}, \frac{(1+\delta)}{\beta+\gamma}\frac{N}{N-m-C_0} ], \tau\geq m\right) \leq  2e^{-m  \frac{N-m-C_0}{N} (\delta - \ln(1+\delta))} .
\end{align*}
\end{lemma}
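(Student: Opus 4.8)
Here is my plan for proving Lemma~\ref{lem:Sm-lemma}.

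The plan is to condition on the embedded jump chain so that $\tS_{m}$ becomes a sum of \emph{independent} exponential random variables with deterministic (though instance‑dependent) rates, and then apply an exponential Chernoff bound of the type used in the proof of Theorem~\ref{thm:bass-estimation-p-a}. First I would record the relevant distributional facts. In the SIR regime ($p=0$), for every $k\le\tau$ we have $I_{k-1}\ge1$, $S_{k-1}=N-C_{k-1}$, and by \eqref{eq:deftk} the rescaled inter‑arrival time satisfies $I_{k-1}T_{k}\sim\Exp(\ell_{k})$ with $\ell_{k}\triangleq\frac{\beta(N-C_{k-1})}{N}+\gamma$; moreover $I_{k-1}$, hence $\ell_{k}$, is a deterministic function of $C_{k-1}$ by Lemma~\ref{lem:E_k}. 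Let $\mathcal{C}_{m}=(C_{0},\dots,C_{m})$. By Lemma~\ref{lem:E_k} the event $\{\tau\ge m\}$ equals $\{C_{m-1}>r_{m-1}\}$ and is therefore $\mathcal{C}_{m}$‑measurable; and since the holding times of a continuous‑time Markov chain are, conditionally on the jump chain, independent exponentials with the rates of the visited states, conditionally on $\mathcal{C}_{m}$ the inter‑arrival times $T_{1},\dots,T_{m}$ are independent, and on $\{\tau\ge m\}$ we have $I_{k-1}T_{k}\sim\Exp(\ell_{k})$ for each $k\le m$.

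Working on $\{\tau\ge m\}$, so that $\tS_{m}=\sum_{k=1}^{m}I_{k-1}T_{k}$, I would next bound the rates. Since $0\le C_{k-1}\le C_{0}+(k-1)\le C_{0}+m$ for $k\le m$, and $\gamma\ge\gamma\frac{N-m-C_{0}}{N}$, we get
\begin{align*}
(\beta+\gamma)\frac{N-m-C_{0}}{N}\le\ell_{k}\le\beta+\gamma,\qquad k=1,\dots,m.
\end{align*}
Consequently the conditional mean $\mu\triangleq\bE[\tS_{m}\mid\mathcal{C}_{m}]=\sum_{k=1}^{m}\ell_{k}^{-1}$ lies in $\big[\tfrac{m}{\beta+\gamma},\ \tfrac{m}{\beta+\gamma}\tfrac{N}{N-m-C_{0}}\big]$, i.e.\ exactly $m$ times the target interval of the lemma. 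Hence on $\{\tau\ge m\}$ the event that $\tS_{m}/m$ falls outside $[\tfrac{1-\delta}{\beta+\gamma},\tfrac{1+\delta}{\beta+\gamma}\tfrac{N}{N-m-C_{0}}]$ is contained in $\{\tS_{m}\ge(1+\delta)\mu\}\cup\{\tS_{m}\le(1-\delta)\mu\}$.

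It then remains to apply the standard exponential tail bounds (Theorem~5.1 of \cite{janson2018tail}) conditionally on $\mathcal{C}_{m}$: with $\ell_{-}\triangleq(\beta+\gamma)\frac{N-m-C_{0}}{N}\le\min_{k}\ell_{k}$,
\begin{align*}
\Pr\big(\tS_{m}\ge(1+\delta)\mu\mid\mathcal{C}_{m}\big)&\le e^{-\ell_{-}\mu(\delta-\ln(1+\delta))},\\
\Pr\big(\tS_{m}\le(1-\delta)\mu\mid\mathcal{C}_{m}\big)&\le e^{-\ell_{-}\mu(-\delta-\ln(1-\delta))}.
\end{align*}
Because $\mu\ge\frac{m}{\beta+\gamma}$ we have $\ell_{-}\mu\ge m\frac{N-m-C_{0}}{N}$, and because $-\delta-\ln(1-\delta)\ge\delta-\ln(1+\delta)$ for $\delta\in(0,1)$, both conditional probabilities are at most $\exp\!\big(-m\tfrac{N-m-C_{0}}{N}(\delta-\ln(1+\delta))\big)$. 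Multiplying by $\bI\{\tau\ge m\}\le1$, a union bound, and taking expectations over $\mathcal{C}_{m}$ then give the claimed inequality with the factor $2$.

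The hard part will be the conditioning step: one must argue cleanly that conditioning on the embedded chain $\mathcal{C}_{m}$ simultaneously renders $\{\tau\ge m\}$ deterministic and decouples the inter‑arrival times into independent exponentials with rates that are functions of the $C_{k-1}$'s --- the former is Lemma~\ref{lem:E_k} together with the identity $S_{k-1}=N-C_{k-1}$ (valid when $p=0$), the latter the classical fact that the holding times of a CTMC are conditionally independent given its jump chain. Everything downstream is the same exponential‑Chernoff computation already carried out for Theorem~\ref{thm:bass-estimation-p-a}. If one prefers to avoid the conditional‑independence fact, the same estimates follow from a martingale argument that uses only the one‑step law $T_{k}\mid\mathcal{F}_{k-1}\sim\Exp(\lambda_{k-1})$ from \eqref{eq:deftk}: iterate a one‑step moment‑generating‑function bound of the form $\bE[e^{tI_{k-1}T_{k}}\mid\mathcal{F}_{k-1}]\le\ell_{-}/(\ell_{-}-t)$ (and its $-t$ analogue) along the filtration, with the nesting $\{E_{k-1}=1\}\subseteq\{E_{k-2}=1\}$ absorbing the stopping‑time indicator in the lower tail.
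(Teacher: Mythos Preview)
Your proposal is correct and follows essentially the same route as the paper: condition on the embedded jump chain so that on $\{\tau\ge m\}$ the variables $I_{k-1}T_k$ become independent $\Exp(\ell_k)$'s, bound the rates by $(\beta+\gamma)\tfrac{N-m-C_0}{N}\le\ell_k\le\beta+\gamma$, apply Janson's Theorem~5.1 with $\ell_{-}\mu\ge m\tfrac{N-m-C_0}{N}$, and then average over the conditioning. Your treatment is in fact a bit more careful than the paper's in two places: you make explicit that $\{\tau\ge m\}$ is $\mathcal{C}_m$-measurable via Lemma~\ref{lem:E_k}, and you justify why the lower-tail exponent $-\delta-\ln(1-\delta)$ can be replaced by the weaker $\delta-\ln(1+\delta)$, a step the paper simply asserts.
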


The next proposition combines the two estimators from the above lemmas and into estimators $\hb$ and $\hg$.
\begin{proposition}\label{thm:tail-bound}
Assume $\beta > \gamma > 0$.
Let $I_0 \leq m < N$ such that $\frac{\beta}{\beta+\gamma}\frac{N-m-C_0}{N} > p$.
Let $z = \frac{N-m-C_0}{N}$.
Then, for any $0< \delta < 1$, with probability $1-4e^{-m (\delta - \ln(1+\delta))} - 4e^{-m\delta^2/(4+2\delta)}-2B_1 e^{-B_2 I_0}$,
\begin{align}
\hat{\beta} &\in \left[\beta\frac{(1-\delta)z^2}{1+\delta}, \beta\frac{1+\delta}{1-\delta}\right] \label{eq:ci_beta} \\
\hat{\gamma} &\in \left[\gamma\frac{z}{1+\delta}+\beta\frac{(1-\delta)z-(1+\delta)^2}{(1+\delta)(1-\delta)}, \gamma\frac{1}{1-\delta}+\beta\frac{1+\delta-(1-\delta)^2z^2}{(1-\delta)(1+\delta)}\right], \label{eq:ci_gamma}
\end{align}
where $B_1, B_2 > 0$ are constants that depend on $\beta$ and $\gamma$.
\end{proposition}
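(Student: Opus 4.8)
\emph{Proof plan.}
The plan is to assemble a single high-probability ``good event'' from \cref{lem:highprob2}, \cref{lem:Cm-lemma} and \cref{lem:Sm-lemma}, and then push the resulting interval estimates for $\frac{\beta}{\beta+\gamma}$ and $\frac{1}{\beta+\gamma}$ through the identities $\hat\beta = \hat A/\hat B$ and $\hat\gamma = 1/\hat B - \hat\beta$ by elementary interval arithmetic.

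First I would define, with $z = \frac{N-m-C_0}{N}$, the event
\[
\mathcal{G} = \Big\{\tau \ge m\Big\} \cap \Big\{\hat A \in \big[\tfrac{\beta}{\beta+\gamma}(1-\delta)z,\ \tfrac{\beta}{\beta+\gamma}(1+\delta)\big]\Big\} \cap \Big\{\hat B \in \big[\tfrac{1-\delta}{\beta+\gamma},\ \tfrac{1+\delta}{(\beta+\gamma)z}\big]\Big\},
\]
and bound $\Pr(\mathcal{G}^c)$ by a union bound: $\Pr(\tau < m) \le B_1 e^{-B_2 I_0}$ by \cref{lem:highprob2}, while \cref{lem:Cm-lemma} and \cref{lem:Sm-lemma} bound the probabilities that $\hat A$ or $\hat B$ falls outside its interval while $\tau \ge m$. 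Here the hypothesis $\frac{\beta}{\beta+\gamma}z > p > \frac12$ forces $z > \frac12$, which lets me replace the $z$-dependent exponent in \cref{lem:Sm-lemma} by an absolute one; absorbing numerical constants then yields $\Pr(\mathcal{G}^c) \le 4e^{-m(\delta-\ln(1+\delta))} + 4e^{-m\delta^2/(4+2\delta)} + 2B_1 e^{-B_2 I_0}$, the failure probability claimed in the proposition.

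Next, working on $\mathcal{G}$, I would propagate the bounds. Because $\hat B > 0$ there, $\hat\beta = \hat A / \hat B$ is monotone increasing in $\hat A$ and decreasing in $\hat B$, so plugging in the extreme endpoints gives $\hat\beta \ge \beta\frac{(1-\delta)z^2}{1+\delta}$ and $\hat\beta \le \beta\frac{1+\delta}{1-\delta}$, which is \eqref{eq:ci_beta}. For $\hat\gamma$ I would first invert the $\hat B$-interval, $1/\hat B \in \big[(\beta+\gamma)\frac{z}{1+\delta},\ \frac{\beta+\gamma}{1-\delta}\big]$, and then combine with the $\hat\beta$-bounds: since $\hat\gamma = 1/\hat B - \hat\beta$, its lower bound pairs the smallest value of $1/\hat B$ with the largest value of $\hat\beta$, and its upper bound does the reverse. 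Splitting $\beta + \gamma$ and regrouping the $\gamma$- and $\beta$-contributions, for instance $(\beta+\gamma)\frac{z}{1+\delta} - \beta\frac{1+\delta}{1-\delta} = \gamma\frac{z}{1+\delta} + \beta\frac{(1-\delta)z - (1+\delta)^2}{(1+\delta)(1-\delta)}$, reproduces \eqref{eq:ci_gamma} exactly.

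I do not expect a genuine obstacle here; the work is bookkeeping. The points requiring care are (i) getting the monotonicity directions right for $\hat A$, $\hat B$ and $1/\hat B$ in each of $\hat\beta$ and $\hat\gamma$, so that the correct endpoint combinations are used; (ii) remembering that \cref{lem:Cm-lemma} and \cref{lem:Sm-lemma} only control the estimators on $\{\tau \ge m\}$, so the $\Pr(\tau<m)$ term from \cref{lem:highprob2} is essential for an unconditional conclusion (being loose with constants there is what produces the factor $2B_1$); and (iii) noting that the lower endpoints in \eqref{eq:ci_beta}--\eqref{eq:ci_gamma} may be negative for some $\delta$, which is harmless — it is the upper endpoints, together with a sufficiently small $\delta$ chosen in the proof of \cref{thm:estimation-beta-gamma}, that deliver the stated relative-error guarantees.
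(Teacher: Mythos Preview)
Your proposal is correct and follows essentially the same approach as the paper: assemble a good event from \cref{lem:highprob2}, \cref{lem:Cm-lemma}, and \cref{lem:Sm-lemma} via a union bound, then push the resulting intervals for $\hat A$ and $\hat B$ through the identities $\hat\beta=\hat A/\hat B$ and $\hat\gamma=1/\hat B-\hat\beta$ using monotonicity in each argument. The paper's presentation differs only cosmetically, bounding $\Pr(\hat\beta\notin U_1)$ and $\Pr(\hat\gamma\notin U_2)$ separately (hence the doubled constants), rather than defining a single event $\mathcal G$.
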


We first show Theorem~\ref{thm:estimation-beta-gamma} using these results.
We then prove \cref{lem:Cm-lemma}, \cref{lem:Sm-lemma}, and \cref{thm:tail-bound} in \cref{sec:beta_gamma:proofs}.

\subsection{Proof of Theorem~\ref{thm:estimation-beta-gamma}}
\begin{myproof}
Let $\delta = \sqrt{\frac{5\log m}{m}}$.
First, we claim that the probability in \cref{thm:tail-bound} is greater than $1-\frac{8}{m} - 2B_1 e^{-B_2 I_0}$.
Note that $\ln(1+\delta) \leq \delta - \frac{\delta^2}{2} + \delta^3$, implying $\delta - \ln(1+\delta) \geq \delta^2(\frac{1}{2} - \delta)$.
Since $\delta \leq \frac{1}{4}$,
\begin{align*}
4e^{-m(\delta - \ln(1+\delta)}
&\leq 4e^{-m \frac{\delta^2}{4}} \leq \frac{4}{m}.
\end{align*}
Using $\delta \leq \frac{1}{4}$ again,
\begin{align*}
4e^{-m\delta^2/(4+2\delta)}
&\leq 4e^{-m \frac{\delta^2}{5}} = \frac{4}{m}.
\end{align*}
Hence, the bound in \ref{thm:tail-bound} holds with probability greater than $1 - \frac{8}{m} - 2B_1 e^{-B_2 I_0}$.

Since we assume $m(m + C_0) \leq N$ and $z = 1- \frac{m + C_0}{N}$,
\begin{align} \label{eq:zinequality}
1-z \leq \frac{1}{m}.
\end{align}

From here on, assume the confidence bounds \eqref{eq:ci_beta}-\eqref{eq:ci_gamma} hold.
Note that $\frac{1+\delta}{1-\delta} \leq 1+3\delta$ and $\frac{1-\delta}{1+\delta} \geq 1-3\delta$ for $\delta < \fourth$.
Then,
\begin{align*}
(\hb- \beta)^2 
&\leq \beta^2 \left( 1+3\delta - (1-3\delta)z^2 \right)^2 \\
&\leq \beta^2 \left( (1-z) + 3\delta(1+z) \right)^2 \\
&\leq \beta^2 \left( \frac{1}{m} + 6 \sqrt{\frac{5 \log m}{m}} \right)^2 \\
&\leq \beta^2 M_3 \frac{\log m}{m}
\end{align*}
for an absolute constant $M_3 > 0$. The second last step uses \eqref{eq:zinequality} and $1+z \leq 2$.
Therefore, $\relerror(\hb, \beta) \leq M_1 \frac{\log m}{m}$.

Similarly,
\begin{align}
(\hg- \gamma)^2 
&\leq \left( \gamma \left(\frac{1}{1-\delta}  -  \frac{z}{1+\delta}\right) +\beta\left(\frac{1+\delta-(1-\delta)^2z^2}{(1-\delta)(1+\delta)} - \frac{(1-\delta)z-(1+\delta)^2}{(1+\delta)(1-\delta)}\right) \right)^2. \label{eq:gamma_var}
\end{align}

Using the fact that $(1-\delta)(1+\delta) \geq \frac{1}{2}$,
\begin{align*}
\frac{1}{1-\delta}  -  \frac{z}{1+\delta}
\leq 2((1 - z) + \delta(1+z) )
\leq 2\left(\frac{1}{m} + 2 \sqrt{\frac{5 \log m}{m}}\right).
\end{align*}

\begin{align*}
\frac{1+\delta-(1-\delta)^2z^2}{(1-\delta)(1+\delta)} - \frac{(1-\delta)z-(1+\delta)^2}{(1+\delta)(1-\delta)}
&= \frac{(1+\delta) - (1-\delta)z + (1 + \delta)^2 - (1-\delta)^2 z^2}{1-\delta^2} \\
&\leq 2(1-z) + 4\delta (1+z) + \frac{1+\delta}{1-\delta} - \frac{1-\delta}{1+\delta} z^2 \\
&\leq 2(1-z) + 8\delta + (1+3\delta) - (1-3\delta)z^2 \\
&\leq 2(1-z) + 8\delta + (1-z^2)+ 6\delta(1+z^2) \\
&\leq (1-z)(3 +z) + \delta(8 + 6(1+z^2)) \\
&\leq \frac{4}{m} + 20\sqrt{\frac{5 \log m}{m}}.
\end{align*}

Substituting back into \eqref{eq:gamma_var} results in
\begin{align*}
(\hg- \gamma)^2
&\leq \left(\gamma \left(\frac{2}{m} + 4 \sqrt{\frac{5 \log m}{m}}\right) + \beta\left(\frac{4}{m} + 20\sqrt{\frac{5 \log m}{m}}\right)\right)^2  \\
&\leq M_2 \beta^2\frac{\log m}{m},
\end{align*}
for an absolute constant $M_2$, since $\beta > \gamma$.
This implies the desired result.
\end{myproof}

\subsection{Proofs of Intermediate Results} \label{sec:beta_gamma:proofs}

\subsubsection{Proof of \cref{lem:Cm-lemma}.}
\begin{myproof}
Fix $m$, let $z := \frac{N-m-C_0}{N}, p = \frac{\beta}{\beta+\gamma}z$. Then $p  > \frac{1}{2}.$ Define three stochastic processes $\{A_k: k \geq 0\}$, $\{B_k: k \geq 0\}$, $\{\tC_k: k \geq 0\}$:
\begin{align*}
A_k &= \begin{cases}
          C_0 & \text{if }k = 0 \\
          A_{k-1} + \Bern(p) & \text{otherwise}.\\
       \end{cases} \\
B_k &= \begin{cases}
          C_0 & \text{if }k = 0 \\
          B_{k-1} + \Bern(p/z) & \text{otherwise}.\\
       \end{cases} \\
\tC_k &= \begin{cases}
          C_0 & \text{if }k = 0 \\
          \tilde{C}_{k-1} + \Bern\left\{\frac{\beta(N-\tilde{C}_{k-1})}{\beta(N-\tilde{C}_{k-1}) + N\gamma}\right\} &\text{otherwise}.\\
       \end{cases}
\end{align*}
Note that $\tC_{k}$ is a modified version of $C_k$ where $\tC_k$ still evolves after the stopping time.

\begin{claim} \label{cl:stoch_dom2}
$A_m$ is stochastically less than $\tC_m$ ($A_m \leq_{st} \tC_m$); $\tC_m$ is stochastically less than $B_m$ ($\tC_m \leq_{st} B_m$); that is, for any $\ell \in \bR$,
\begin{align*}
\Pr(B_{m} \leq \ell) \leq \Pr(\tC_m \leq \ell) \leq \Pr(A_m \leq \ell).
\end{align*}
\end{claim}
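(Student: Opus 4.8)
The plan is to prove both stochastic dominances by a monotone coupling, invoking Veinott's theorem (Theorem~\ref{thm:veinott}) exactly as in the proof of Claim~\ref{cl:stoch_dom}, and then to project onto the final coordinate. The elementary ingredient I would isolate first is a two-sided bound on the one-step increment probability of $\tC$: write $q(c) = \frac{\beta(N-c)}{\beta(N-c)+N\gamma}$ for the probability that $\tC_k = \tC_{k-1}+1$ given $\tC_{k-1}=c$. Then $q$ is decreasing in $c$; for every $c \le N$, bounding the denominator below by $\beta(N-c)+(N-c)\gamma = (N-c)(\beta+\gamma)$ gives $q(c) \le \frac{\beta}{\beta+\gamma} = p/z$; and for every $c \le m+C_0$, bounding the numerator below by $\beta(N-m-C_0)$ and the denominator above by $\beta N + N\gamma = N(\beta+\gamma)$ gives $q(c) \ge \frac{\beta(N-m-C_0)}{N(\beta+\gamma)} = p$. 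Since $\tC$ starts at $C_0$ and increases by at most one per step, $\tC_{k-1} \le C_0 + (k-1) \le m + C_0$ for all $k \le m$, so over the first $m$ steps the increment probability of $\tC$ always lies in $[p,\,p/z]$; this is the only place the horizon length $m$ is used.

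Next I would establish $A_m \leq_{st} \tC_m$. Both $\{A_k\}$ and $\{\tC_k\}$ are Markov chains with $A_0 = \tC_0 = C_0$, so by Theorem~\ref{thm:veinott} it suffices to verify, for $x \le y$ ranging over the reachable states $\{C_0,\dots,C_0+m\}$, that $(A_k \mid A_{k-1}=x) \leq_{st} (\tC_k \mid \tC_{k-1}=y)$. The left-hand side is $x + \Bern(p)$ and the right-hand side is $y + \Bern(q(y))$; since $p \le q(y)$ we have $\Bern(p) \leq_{st} \Bern(q(y))$, and adding the nonnegative deterministic shift $y - x$ preserves stochastic order, so $x + \Bern(p) \leq_{st} x + \Bern(q(y)) \leq_{st} y + \Bern(q(y))$. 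The inequality $\tC_m \leq_{st} B_m$ is symmetric: the per-step comparison is $(\tC_k \mid \tC_{k-1}=x) = x + \Bern(q(x))$ against $(B_k \mid B_{k-1}=y) = y + \Bern(p/z)$ for $x \le y$, and $q(x) \le p/z$ (valid for all $x \le N$) together with the shift $y - x \ge 0$ again gives the order.

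Finally, the coordinate projection $(c_0,\dots,c_m) \mapsto c_m$ is an increasing function on $\bR^{m+1}$, so the vector dominances $(A_0,\dots,A_m) \leq_{st} (\tC_0,\dots,\tC_m) \leq_{st} (B_0,\dots,B_m)$ imply $A_m \leq_{st} \tC_m \leq_{st} B_m$. Unwinding the scalar definition of $\leq_{st}$ (equivalently, passing to complementary CDFs) gives $\Pr(B_m \le \ell) \le \Pr(\tC_m \le \ell) \le \Pr(A_m \le \ell)$ for every $\ell \in \bR$, which is the claim.

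I do not anticipate a genuine obstacle, since the argument closely mirrors that of Claim~\ref{cl:stoch_dom}. The two points that require a little care are: (i) the bookkeeping that keeps $\tC_{k-1} \le m + C_0$ throughout the first $m$ steps, which is what makes the lower bound $q(c) \ge p$ applicable and explains the role of the hypothesis on $\frac{\beta}{\beta+\gamma}\frac{N-m-C_0}{N}$; and (ii) the fact that Veinott's theorem is phrased for all $x \le y$, so one should restrict to the finitely many reachable states when checking the per-step comparison. Neither is substantive.
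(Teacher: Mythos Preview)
Your proposal is correct and follows exactly the approach the paper indicates: apply Veinott's theorem (Theorem~\ref{thm:veinott}) to the Markov chains $\{A_k\}$, $\{\tC_k\}$, $\{B_k\}$ just as in Claim~\ref{cl:stoch_dom}, using the sandwich $p \le q(c) \le p/z$ for reachable $c$. One small remark: the hypothesis $\frac{\beta}{\beta+\gamma}\frac{N-m-C_0}{N} > \tfrac12$ is not actually needed for the stochastic dominance itself (your bound $q(c)\ge p$ holds automatically from $c\le m+C_0$); that hypothesis is used downstream in the tail estimates, not here.
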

This claim follows from \cref{thm:veinott}, using a similar argument to \cref{cl:stoch_dom}.

Let $A_{k} = C_0 + X_1 + X_2 + \dotsc X_{k}$ where $X_{i}\sim \Bern(p)$ are independent.
We provide the left tail bound for $C_{m}$. Note that when $\tau \geq m$, $C_{m} \overset{d}{=} \tilde{C}_{m}.$ Hence,
\begin{align}
\Pr(C_m \leq mp(1-\delta)+ C_0, \tau \geq m)
&= \Pr(\tC_m \leq mp(1-\delta) + C_0, \tau \geq m) \nonumber \\
&\leq \Pr(\tC_m \leq mp(1-\delta) + C_0)\nonumber  \\
&\leq \Pr(A_m \leq mp(1-\delta) + C_0)\label{eq:lower-tail}.
\end{align}
Using the Chernoff bound gives,
\begin{align*}
 \Pr(A_m \leq mp(1-\delta) + C_0) &= \Pr(C_0 + X_1 + \dots + X_m \leq pm(1-\delta) + C_0)\\
&= \Pr\left(X_1 + \dots + X_m \leq mp\left(1- \delta \right)\right) \\
&\leq \exp\left(- \frac{mp}{2}\delta^2\right).
\end{align*}
Therefore,
\begin{align*}
\Pr\left(\frac{C_m-C_0}{m}  \leq \frac{p}{z}(1-\delta)z, \tau \geq m\right)
&= \Pr\left(C_m \leq mp(1-\delta)+C_0, \tau \geq m\right)\\
&\leq \exp\left(- \frac{mp}{2}\delta^2\right) \leq \exp(-m\delta^2/4).
\end{align*}

Let $B_{k} = C_0 + Y_1 + \dotsc + Y_{k}$ where $Y_{i} \sim \Bern(p/z)$ are independent. Similarly, for the upper tail bound, we have
\begin{align*}
\Pr\left(\frac{C_m-C_0}{m} \geq \frac{p}{z}(1+\delta), \tau \geq m\right)
&=\Pr(C_m \geq mp/z(1+\delta)+ C_0, \tau \geq m)\\
&\leq \Pr(B_m \geq  mp/z(1+\delta) + C_0) \\
&\leq \Pr(C_0 + Y_1 + \dots + Y_m \geq mp/z(1+\delta) + C_0) \\
&\leq \exp(- \frac{mp/z}{2+\delta}\delta^2) \leq \exp(-m\delta^2/(4+2\delta))
\end{align*}
due to the multiplicative Chernoff bound $\Pr(Z \geq E[Z](1+\delta)) \leq e^{-\frac{Z}{2+\delta}\delta^2}$ where $Z$ is the sum of i.i.d Bernoulli random variables.

Combine upper and lower tail bounds and note that $p/z = \frac{\beta}{\beta+\gamma}$. Then, we can conclude, for any $\delta > 0$,
\begin{align*}
\Pr\left(\frac{C_m-C_0}{m} \notin [\frac{\beta}{\beta+\gamma}(1-\delta)z,  \frac{\beta}{\beta+\gamma}(1+\delta)], \tau \geq m\right) \leq 2\exp(-m\delta^2/(4+2\delta)).
\end{align*}

\end{myproof}

\subsubsection{Proof of Lemma~\ref{lem:Sm-lemma}.}
\begin{myproof}
Conditioned on $(I_0, C_0, I_1, C_1, \dotsc, I_{m-1}, C_{m-1})$ with $\tau \geq m$, we have $$
I_{k-1}T_{k} \sim \Exp\left(\beta\frac{N-C_{k-1}}{N}+\gamma\right)$$
 are independent exponential random variables.

 Theorem 5.1 in \cite{janson2018tail} gives us a tail bound for the sum of independent exponential random variables: let $X=\sum_{i=1}^{n} X_i$ with $X_i \sim \Exp(a_i)$ independent, then for $\delta > 0$,
 \begin{align}
 \Pr(X\geq (1+\delta)\mu) &\leq \frac{1}{1+\delta} e^{-a_* \mu (\delta - \ln(1+\delta))}  \leq e^{-a_* \mu (\delta - \ln(1+\delta))} \label{eq:upper-tail}\\
 \Pr(X\leq (1-\delta)\mu) &\leq e^{-a_*\mu (\delta -\ln(1+\delta))} \label{eq:lower-tail}
 \end{align}
 where $\mu = E[X], a_{*} = \min_{1\leq i\leq n} a_i.$

 Let $\tS_{m|\vec{C},\vec{I}}$ be $\tS_{m}$ conditioned on $(I_0, C_0, I_1, C_1, \dotsc, I_{m-1}, C_{m-1})$ with $\tau \geq m$. Let $\mu = E[\tS_{m|\vec{C},\vec{I}}] = \sum_{k=1}^{m} \frac{1}{\beta(N-C_{k-1})/N+\gamma}$, $a_* = \min_{1\leq k\leq m}\beta(N-C_{k-1})/N+\gamma.$ It is easy to verify the following facts
 \begin{align*}
 \mu a_* &\geq \sum_{k=1}^{m} \frac{a_*}{(\beta+\gamma)} \geq m \frac{N-m-C_0}{N}\\
 \frac{1}{\beta+\gamma}&\leq \frac{\mu}{m} \leq \frac{1}{\beta+\gamma}\frac{N}{N-m-C_0}.
 \end{align*}

Combining these with \cref{eq:upper-tail,eq:lower-tail}, we have
\begin{align*}
\Pr\left(\frac{\tS_{m|\vec{C},\vec{I}}}{m} \notin  \left[\frac{(1-\delta)}{\beta+\gamma}, \frac{(1+\delta)}{\beta+\gamma}\frac{N}{N-m-C_0} \right] \right)
&\leq \Pr\left(\frac{\tS_{m|\vec{C},\vec{I}}}{m} \notin  \left[\frac{\mu(1-\delta)}{m}, \frac{\mu(1+\delta)}
{m}\right] \right) \\
&\leq 2e^{-m  \frac{N-m-C_0}{N} (\delta - \ln(1+\delta))}.
\end{align*}

Therefore,
\begin{align*}
\Pr\left(\frac{\tS_m}{m} \notin I, \tau \geq m\right)
&= \int_{\vec{C}, \vec{I}|\tau\geq m}\Pr\left(\frac{\tS_m}{m} \notin I~|~ \vec{C}, \vec{I}, \tau \geq m\right) f(\vec{C}, \vec{I} | \tau \geq m) \Pr(\tau \geq m)\\
&\leq 2e^{-m  \frac{N-m-C_0}{N} (\delta - \ln(1+\delta))} \Pr(\tau \geq m)\\
&\leq 2e^{-m  \frac{N-m-C_0}{N} (\delta - \ln(1+\delta))}.
\end{align*}
\end{myproof}

\subsubsection{Proof of \cref{thm:tail-bound}.}
\begin{myproof}
Let $\hat{\beta} = \frac{C_{m} - C_0}{\tS_m}, z = \frac{N-C_0-m}{N}$. Suppose $x \in \frac{\beta}{\beta+\gamma}[(1-\delta)z, 1+\delta], y \in \frac{1}{\beta+\gamma}[1-\delta, (1+\delta)1/z]$. Then, 
\begin{align} \label{eq:interval_beta}
\frac{x}{y} \in \left[\beta\frac{(1-\delta)z^2}{1+\delta}, \beta\frac{1+\delta}{1-\delta}\right] 
\end{align}


Similarly, let $\hat{\gamma} = \frac{m}{\tS_m} - \hat{\beta}.$ Suppose $a \in (\beta+\gamma)[\frac{z}{1+\delta}, \frac{1}{1-\delta}], b \in \beta[\frac{(1-\delta)z^2}{1+\delta}, \frac{1+\delta}{1-\delta}]$. Then
\begin{align} \label{eq:interval_gamma}
a - b \in \left[\gamma\frac{z}{1+\delta}+\beta\frac{(1-\delta)z-(1+\delta)^2}{(1+\delta)(1-\delta)}, \gamma\frac{1}{1-\delta}+\beta\frac{1+\delta-(1-\delta)^2z^2}{(1-\delta)(1+\delta)}\right].
\end{align}

Then, for any sets $U_1, U_2$,
\begin{align*}
\Pr(\hb \in U_1, \hg \in U_2)
&\geq  1 - \Pr(\hb \notin U_1) - \Pr(\hg \notin U_2) \\
&\geq 1-\Pr(\hb \notin U_1, \tau > m) - \Pr(\hb \notin U_2, \tau > m)  - 2 \Pr(\tau < m) \\
&\geq 1-4e^{-m (\delta - \ln(1+\delta))} - 4e^{-m\delta^2/(4+2\delta)}-2B_1 e^{-B_2 I_0},
\end{align*}
where the last step uses \cref{lem:highprob2}, \cref{lem:Cm-lemma} and \cref{lem:Sm-lemma}, using the intervals  \eqref{eq:interval_beta} and \eqref{eq:interval_gamma} for $U_1$ and $U_2$ respectively.

\end{myproof}

\section{Proofs of Propositions} \label{sec:app:prop_proofs}

\subsection{Proof of Proposition~\ref{prop:constant_fraction}}

\begin{myproof}
As in \cite{miller2017mathematical, miller2012note},  the solution $\{(s'(t),i'(t),r'(t)): t\geq 0\}$ can be written as:

\begin{align*}
s'(t) &=s'(0) e^{-\xi'(t)} \\
i'(t) &=N' - s'(t) - r'(t) \\
r'(t) &=r(0) + \frac{\gamma' N' }{\beta'}\xi'(t) \\
\xi'(t) &=\frac{\beta'}{N'} \int_{0}^{t} i'\left(t^{*}\right) d t^{*}
\end{align*}

Making the appropriate substitutions yields the following equivalent system:

\begin{align}
  i'(t) &= N' - s'(0) \exp \left(- \frac{\beta'}{N'} \xi(t) \right) - r(0) -  \frac{\gamma' N'}{\beta'}\xi'(t) \label{eq:i} \\
  \xi'(t) &=\frac{\beta'}{N'} \int_{0}^{t} i'\left(t^{*}\right) d t^{*}. \label{eq:xi}
\end{align}

Therefore, it remains to show that for $\eta > 0$, $\{(s(t), i(t), r(t)): t\geq 0\} \triangleq \{(\eta s'(t), \eta i'(t), \eta r'(t)): t\geq 0\}$ is a solution for \eqref{eq:i} and \eqref{eq:xi} where $N'$ is replaced with $\eta N'$. Starting with \eqref{eq:i},

\begin{align*}
  i'(t) &= N' - s'(0) \exp \left(-\xi'(t) \right) - r'(0) -  \frac{\gamma' N'}{\beta'} \xi'(t) \\
  \eta i'(t) &= \eta \left(N' - s'(0) \exp \left(-  \xi'(t) \right) - r'(0) -  \frac{\gamma' N'}{\beta'}\xi'(t) \right) \\
  &= \eta N' - \alpha s'(0) \exp \left(-\xi(t) \right) - \eta r(0) -  \frac{\gamma' \eta N' }{\beta'} \xi(t)  
\end{align*}

where $\xi(t) = \xi'(t) =\frac{\beta'}{N' \eta} \int_{0}^{t} \eta i'\left(t^{*}\right) d t^{*}$. Noting that $\xi'(t) = \xi(t)$ and substituting $i(t) = \eta i'(t)$ yields the equations below, clearly showing that $\{(s(t), i(t), r(t)): t\geq 0\}$ satisfy \eqref{eq:i} and \eqref{eq:xi}:

\begin{align*}
    i(t) &= \eta N' - s(0) \exp \left(-\xi(t) \right) - r(0) -  \frac{\gamma' \eta N'}{\beta'} \xi(t) \\
    \xi(t) &=\frac{\beta'}{N' \eta} \int_{0}^{t} i\left(t^{*}\right) d t^{*}.
\end{align*}

\end{myproof}

\subsection{Proof of Proposition~\ref{prop:identity_fluid}}

\subsubsection{SIR Model}
\begin{myproof}
Consider initial conditions $(s(0),i(0),0)$, as in \cite{miller2017mathematical,miller2012note}, the analytical solution is given by
\begin{align*}
    s(t) & =  s(0) e^{-\xi(t)}, \\
    i(t) & = N - s(t) - r(t), \\
    r(t) & =  \frac{\gamma \; N}{\beta} \xi(t), \\
    \xi(t) & = \frac{\beta}{N} \int_0^t i(t') dt'.
\end{align*}
Consider two SIR models with parameters $(N, \beta, \gamma)$ and $(N', \beta', \gamma')$, and initial conditions $(s_0, i_0, 0)$ and $(s'_0, i'_0, 0)$ respectively. We claim that infection trajectories $i(t)$ and $i'(t)$ being identical on an open set $[0,T)$ implies the parameters and initial conditions are identical as well.

Assume $i(t) = i'(t)$ for all $t \in [0,T)$; then, given the exact solution above it follows that
\begin{align*}
    N -  s_0 e^{-\frac{\beta}{N} x}  - \gamma x = N' -  s'_0 e^{-\frac{\beta'}{N'} x} -  \gamma' x, & & \text{ for all } x \in \Big[ 0, \int_0^T i(t)dt \Big]
\end{align*}
As functions of $x$, both the RHS and LHS in the equality above are holomorphic, and hence, using the identity theorem, we then have 
for all $x \in \mathbb{R}$, there is $N -  s_0 e^{-\frac{\beta}{N} x}  - \gamma x = N' -  s'_0 e^{-\frac{\beta'}{N'} x} -  \gamma' x$. 

Then the following implies $\gamma = \gamma'$: 
\begin{align*}
-\gamma = \lim_{x\rightarrow +\infty} \frac{ N -  s_0 e^{-\frac{\beta}{N} x}  - \gamma x}{x} = \lim_{x\rightarrow +\infty}=\frac{  N' -  s'_0 e^{-\frac{\beta'}{N'} x} -  \gamma' x}{x} = -\gamma'.
\end{align*}

Hence for all $x \in \mathbb{R}$, $N -  s_0 e^{-\frac{\beta}{N} x}  =  N' -  s'_0 e^{-\frac{\beta'}{ N'} x}.$ Again, by taking $x$ to infinity, we can conclude $N= N'$ by the following
\begin{align*}
N = \lim_{x\rightarrow +\infty}  \left( N -  s_0 e^{-\frac{\beta}{N} x}\right) = \lim_{x\rightarrow +\infty} \left(N' -  s_0' e^{-\frac{\beta'}{ N'} x}\right) = N'.
\end{align*}

Furthermore, by taking $x=0$, we can also get $s_0 = s_0'$ and then $\beta = \beta'$ follows. This completes the proof. 
\end{myproof}

\subsubsection{Bass Model}
\begin{myproof} 
Consider the initial condition $i(0) = 0.$ By the analytic solution given by \cite{bass1969new}, we have
\begin{align*}
i(t) = N \frac{1-e^{-(p+\beta)t}}{\frac{\beta}{p}e^{-(p+\beta)t}+1}.
\end{align*}

Consider two bass models with parameters $(N, \beta, p)$ and $(N', \beta', p')$ and initial conditions $i(0) = 0, i'(0) = 0$ respectively. We claim that trajectories $i(t)$ and $i'(t)$ being identical on an open set $[0,T)$ implies the parameters are identical as well. 

Assume $i(t) = i'(t)$ for all $t \in [0,T)$; then, given the exact solution above it follows that
\begin{align} \label{eq:identity-T}
   N \frac{1-e^{-(p+\beta)t}}{\frac{\beta}{p}e^{-(p+\beta)t}+1} = N' \frac{1-e^{-(p'+\beta')t}}{\frac{\beta'}{p'}e^{-(p'+\beta')t}+1}, & & \text{ for all } t \in [0, T)
\end{align}
As functions of $t$, both the RHS and LHS in the equality above are holomorphic, and hence, using the identity theorem, we then have \cref{eq:identity-T} holds for all $t \in \mathbb{R}.$ 

By taking $t$ to infinity,  we can easily obtain $N = N'.$ Furthermore, taking the derivative for $t$ on both sides of \cref{eq:identity-T}, one can obtain
\begin{align}\label{eq:derivative-identity-T}
\frac{(p+\beta)^2}{p} \frac{e^{-(p+\beta)t}}{(\beta/p\cdot e^{-(p+\beta)t} + 1)^2} = \frac{(p'+\beta')^2}{p'} \frac{e^{-(p'+\beta')t}}{(\beta'/p'\cdot e^{-(p'+\beta')t} + 1)^2}.
\end{align}
By taking $t = 0$ on both sides of \cref{eq:derivative-identity-T}, one can verify that $p = p'.$ Furthermore, let $g(t) = \frac{(p+\beta)^2}{p} \frac{e^{-(p+\beta)t}}{(\beta/p\cdot e^{-(p+\beta)t} + 1)^2}$ and $g'(t) = \frac{(p'+\beta')^2}{p'} \frac{e^{-(p'+\beta')t}}{(\beta'/p'\cdot e^{-(p'+\beta')t} + 1)^2}.$ 

Note that
\begin{align*}
-(p+\beta) = \lim_{t\rightarrow +\infty} \frac{\ln(g(t))}{t} = \lim_{t\rightarrow +\infty} \frac{\ln(g'(t))}{t} = - (p' + \beta').
\end{align*}

We then can conclude $\beta = \beta'$. This completes the proof. 

\end{myproof}

\subsection{Proof of \cref{prop:time_to_peak_bass}}

\begin{myproof}
Note that $\bE[T_{i}] = \frac{N}{pN(N-i) + \beta i (N-i)}.$ Then 
$$
\bE[\basscr] = \bE\left[\sum_{i=1}^{N^{2/3}-1} T_{i}\right] = \sum_{i=1}^{N^{2/3}-1} \frac{N}{pN(N-i) + \beta i (N-i)}.
$$
Let $f(x) = \frac{N}{pN(N-x) + \beta x (N-x)}$, we use $f(x)$ as a proxy to bound $\bE[\basscr].$ Easy to verify that $f(x)$ is decreasing when $x \in (0, \mathring{r}]$ where $\mathring{r} = (1-p/\beta)N/2.$  Note that $p/\beta < c$ for some constant $c$ since $p/\beta = \Theta(N^{-\alpha})$ for $\alpha>0$. Hence when $N \rightarrow \infty$, we have $\mathring{r} \gg N^{2/3}$ and
\begin{align*}
\sum_{i=1}^{N^{2/3}-1} \frac{N}{pN(N-i) + \beta i (N-i)} 
&\geq \int_{x=1}^{N^{2/3}} f(x)dx\\
&= \frac{\ln (\beta x + Np) - \ln (N-x)}{p+\beta} \Big\rvert_{x=1}^{N^{2/3}}\\
&= \frac{\ln (\beta N^{2/3} + Np) - \ln (\beta + Np) + \ln (N-1) - \ln (N - N^{2/3})}{p + \beta}\\
&\geq \frac{\ln (\beta N^{2/3} + Np) - \ln (\beta + Np)}{p + \beta}.
\end{align*}

Similarly, for $\basspeak$, we have
\begin{align*}
E[\basspeak] &= \sum_{i=1}^{\mathring{r}-1} \frac{N}{pN(N-i) + \beta i (N-i)} \\
&\leq f(1) + \int_{x=1}^{\mathring{r}} f(x)dx\\
&\leq f(1) + \frac{\ln (\beta N + Np) - \ln (pN + \beta) + \ln \frac{1}{1-c}}{p + \beta}\\
&\leq \frac{\ln (\beta N + Np) - \ln (pN + \beta) + c'}{p + \beta}
\end{align*}
for some absolute constant $c'$.

Let $\frac{\beta}{p} = C\cdot N^{\alpha}$ for some constant $C.$ We then have
\begin{align*}
\frac{E[\basscr]}{E[\basspeak]} 
&\geq  \frac{\ln (\beta N^{2/3} + Np) - \ln (\beta + Np)}{\ln (\beta N + p N) - \ln (pN+\beta) + c'}\\
&\geq \frac{\ln\left(\frac{CN^{2/3+\alpha} + N}{CN^{\alpha} + N} \right)}{\ln\left(\frac{CN^{1+\alpha} + N}{CN^{\alpha} + N} \right) + c'} =: k_{N}.
\end{align*}

Then, it is easy to verify that when $ \frac{1}{3} <\alpha \leq 1$, $\lim_{N\rightarrow \infty} k_{N} = \frac{\alpha-1/3}{\alpha}.$ When $\alpha 
> 1$, $\lim_{N\rightarrow \infty} k_{N} = \frac{2}{3}.$ 

Note that we also have $\bE[\basscr] \leq f(1) + \int_{x=1}^{N^{2/3}} f(x)dx$ and $\bE[\basspeak] \geq \int_{x=1}^{\mathring{r}} f(x)dx$. Similarly, one can verify that
\begin{align*}
\limsup_{N\rightarrow \infty} \frac{\bE[\basscr]}{\bE[\basspeak]} \leq  \begin{cases} 0 & \alpha \leq \frac{1}{3} \\ \frac{\alpha - \frac{1}{3}}{\alpha} & \frac{1}{3} <\alpha \leq 1 \\ \frac{2}{3} & \alpha > 1\end{cases}.
\end{align*}
This completes the proof. 
\end{myproof}

\subsection{Proof of Proposition~\ref{thm:deterministic-time}} \label{sec:app:time_to_peak_sir}

Let $\sirpeak = \inf\{t : \beta(s)t / N < \gamma\}$ be the time when the number of infections is at its peak.
It is easy to show that $\sirrtpeak \leq \sirpeak$.
We show the analog of Proposition~\ref{thm:deterministic-time} with the peak defined instead as $\sirpeak$ --- i.e. we show $\liminf_{N \rightarrow \infty} \frac{\sircr}{\sirpeak} \geq \frac{2}{3}$.
Then, the desired result follows since $\sirrtpeak \leq \sirpeak$.

First, we prove $\sirrtpeak \leq \sirpeak$.
We can write $\frac{d^2s}{dt^2}$ as
\begin{align}
\frac{d^2s}{dt^2}
&= \frac{-\beta}{\pop}\left( \frac{ds}{dt} i + \frac{di}{dt}s \right) \nonumber \\
&= \frac{-\beta}{\pop}\left( \frac{-\beta s}{\pop} i^2+ \left(\frac{\beta s}{\pop} - \gamma \right)i s \right) \nonumber \\
&= \frac{ \beta^2 is}{N^2} \left( i- s + \frac{\gamma}{\beta} \pop  \right). \label{eq:peak4}
\end{align}
From \eqref{eq:peak4}, we see that $\frac{d^2s}{dt^2} > 0$ if and only if
\begin{align*} 
s <  \frac{\gamma}{\beta} \pop + i.
\end{align*}

By definition, $\sirpeak$ occurs at a time when 
\begin{align*}
s < \frac{\gamma}{\beta} \pop.
\end{align*}

Since $s$ is decreasing and $i$ is non-negative, clearly $\sirrtpeak$ occurs before $\sirpeak$.

Next, we prove $\liminf_{N \rightarrow \infty} \frac{\sircr}{\sirpeak} \geq \frac{2}{3}$.
The crux of the problem is summarised in two smaller results, bounding $\sircr$ and $\sirpeak$ respectively. 
Let $\rho_1 = 1 - \frac{1}{\log \log N}$ and $\rho_2 = \frac{\gamma}{\beta}$.

\begin{proposition}
 \label{prop:T1}
There exists a constant $\nu_1$ that only depends on $\gamma, \beta$ such that
\[ \sircr \geq \frac{1}{\beta - \gamma} \left( \frac{2}{3} \log\frac{\nu_1 N}{c(0)^{3/2}}  + \log\frac{\nu_1^{2/3}}{ c(0)} \Big( 1-\frac{c(0)}{N^{2/3}} \Big) \right). \]
\end{proposition}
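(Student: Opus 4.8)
The statement is a lower bound on $\sircr$, the first time the cumulative count $c(t)=N-s(t)$ of the deterministic SIR model reaches $N^{2/3}$. The conceptual heart of the matter is that as long as $c(t)\le N^{2/3}\ll N$ the susceptible fraction $s(t)/N$ is essentially $1$, so the infected population can grow no faster than at the linear-phase rate $\beta-\gamma$; any estimate that settled for the cruder rate $\beta$ (e.g.\ via $i\le c$ and $dc/dt\le\beta c$) would yield only $\sircr\ge\frac1\beta\log(N^{2/3}/c(0))$, which has the wrong constant. So the plan is to control $i(t)$, and then $c(t)$, by the rate $\beta-\gamma$ via two elementary Gr\"onwall-type integrations, and then invert.

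First I would note that $\sircr$ is well defined and finite: since $\beta>\gamma$ and $c(0)=O(\log N)<N^{2/3}$ for $N$ large, the process has $i(t)>0$ and $0<s(t)<N$ on an initial interval on which $c$ is strictly increasing (because $dc/dt=\beta(s/N)i>0$), and $c(t)$ eventually exceeds $N^{2/3}$; by continuity there is a first time $\sircr$ with $c(\sircr)=N^{2/3}$, and it suffices to bound this time below. Record also that $i(0)=c(0)-r(0)\le c(0)$, so that the bound can be stated in terms of $c(0)$ alone.

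Next, from the ODEs, $di/dt=(\beta s(t)/N-\gamma)\,i(t)\le(\beta-\gamma)\,i(t)$ using $s(t)\le N$, so Gr\"onwall gives $i(t)\le i(0)e^{(\beta-\gamma)t}\le c(0)e^{(\beta-\gamma)t}$ for $t\in[0,\sircr]$. Substituting into $dc/dt=\beta(s(t)/N)\,i(t)\le\beta\,i(t)$ and integrating from $0$ to $t$ yields $c(t)\le c(0)+\frac{\beta c(0)}{\beta-\gamma}\big(e^{(\beta-\gamma)t}-1\big)$. Evaluating at $t=\sircr$, where $c=N^{2/3}$, and solving for the exponential gives $e^{(\beta-\gamma)\sircr}\ge 1+\frac{(\beta-\gamma)(N^{2/3}-c(0))}{\beta\,c(0)}$, i.e.\ $\sircr\ge\frac{1}{\beta-\gamma}\log\left(1+\frac{(\beta-\gamma)(N^{2/3}-c(0))}{\beta\,c(0)}\right)$.

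Finally I would convert this clean bound into the stated form: dropping the ``$1+$'' and factoring out $N^{2/3}/c(0)$ gives $\sircr\ge\frac{1}{\beta-\gamma}\log\left(\frac{\beta-\gamma}{\beta}\cdot\frac{N^{2/3}}{c(0)}\big(1-\frac{c(0)}{N^{2/3}}\big)\right)$, and then choosing the constant $\nu_1=\nu_1(\beta,\gamma)>0$ small enough that $\log\frac{\beta-\gamma}{\beta}$ exceeds $\frac23\log\nu_1$ with enough room to absorb the (negative) boundary term $\log(1-c(0)/N^{2/3})$, and splitting the logarithm as $\frac23\log\frac{\nu_1 N}{c(0)^{3/2}}$ plus the remainder, reproduces the displayed inequality. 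This last step is purely elementary bookkeeping; the only place where real care is needed is the choice in the third paragraph to bound $di/dt$ by the rate $\beta-\gamma$ rather than $\beta$. Note that all of these estimates are used only up to $c\approx N^{2/3}$, so the finer threshold analysis with $\rho_1,\rho_2$ that appears elsewhere in the proof of Proposition~\ref{thm:deterministic-time} is not needed here; it is reserved for bounding $\sirpeak$.
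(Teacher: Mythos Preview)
Your argument is essentially identical to the paper's: bound $i(t)$ above by $i(0)e^{(\beta-\gamma)t}$ via the comparison $di/dt\le(\beta-\gamma)i$, feed this into $-ds/dt=dc/dt\le\beta i$, integrate, and invert at $c=N^{2/3}$; the paper phrases the first step by introducing an explicit comparison process $\tilde i$ rather than invoking Gr\"onwall, and works with $s$ instead of $c$, but these are cosmetic differences. For the final rewriting you are slightly overcomplicating things: the term $\log(1-c(0)/N^{2/3})$ is already present inside the second logarithm of the stated bound, so nothing needs to be ``absorbed'' by shrinking $\nu_1$; the paper simply takes $\nu_1=\big((\beta-\gamma)/\beta\big)^{3/2}$, for which $\frac{2}{3}\log\frac{\nu_1 N}{c(0)^{3/2}}=\log\frac{(\beta-\gamma)N^{2/3}}{\beta c(0)}$ and the second summand equals $\log\frac{\beta-\gamma}{\beta c(0)}+\log(1-c(0)/N^{2/3})$, which is negative and hence yields a (slightly weaker) lower bound than the one you derived.
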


\begin{proposition}
 \label{prop:T2}
 There exists a constant $\nu_2$ that only depends on $\gamma, \beta$ and a constant $C = O(1)$, such that
\[ \sirpeak \leq \frac{1}{\beta \rho_1 - \gamma} \log \frac{\nu_2 N}{i(0)} + \frac{C}{1- \rho_1}. \]
\end{proposition}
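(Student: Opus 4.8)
\textbf{Proof proposal for Proposition~\ref{prop:T2}.}
Recall that here $\sirpeak$ is the first time $t$ at which $\beta s(t)/N<\gamma$, equivalently the first time $s$ drops to $\rho_2 N=\frac{\gamma}{\beta}N$. The plan is to bound $\sirpeak$ by splitting $[0,\sirpeak]$ at the first time $\tone$ at which $s$ reaches the intermediate level $\rho_1 N$. Since $s>\rho_2 N$ on $[0,\sirpeak)$, we have $\frac{di}{dt}=i(\beta s/N-\gamma)\ge 0$ there, so the infected pool $i$ is nondecreasing throughout this window --- a fact used repeatedly below. For $N$ large enough, $\rho_2<\rho_1<1$ (since $\rho_1=1-1/\log\log N\to1$ while $\rho_2=\gamma/\beta<1$) and $s(0)=N-c(0)>\rho_1 N$ (since $c(0)=O(\log N)=o((1-\rho_1)N)$), so $0<\tone<\sirpeak$ and the split is genuine.

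\emph{Phase 1 ($0\le t\le\tone$).} Here $s\ge\rho_1 N$, so $\frac{di}{dt}\ge(\beta\rho_1-\gamma)i$, and integrating this differential inequality gives $i(t)\ge i(0)e^{(\beta\rho_1-\gamma)t}$; also $-\frac{ds}{dt}=\beta\frac{s}{N}i\ge\beta\rho_1 i$. Integrating the latter over $[0,\tone]$ and inserting the exponential lower bound on $i$,
\[
s(0)-\rho_1 N=\int_0^{\tone}\!\left(-\frac{ds}{dt}\right)dt \;\ge\; \frac{\beta\rho_1\, i(0)}{\beta\rho_1-\gamma}\left(e^{(\beta\rho_1-\gamma)\tone}-1\right).
\]
Solving for $\tone$ and bounding $s(0)-\rho_1 N\le N$, $\beta\rho_1-\gamma\le\beta$, $\beta\rho_1\ge\beta/2$, and $1\le N/i(0)$, the constants (and the ``$+1$'') absorb into a $\beta,\gamma$-dependent $\nu_2$, giving $\tone\le\frac{1}{\beta\rho_1-\gamma}\log\frac{\nu_2 N}{i(0)}$.

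\emph{Phase 2 ($\tone\le t\le\sirpeak=\ttwo$).} Here $s\ge\rho_2 N$, so $-\frac{ds}{dt}=\beta\frac{s}{N}i\ge\beta\rho_2 i\ge\beta\rho_2 i(\tone)$ by monotonicity of $i$; integrating over $[\tone,\ttwo]$ and using $s(\tone)-s(\ttwo)=(\rho_1-\rho_2)N$ gives $\ttwo-\tone\le\frac{(\rho_1-\rho_2)N}{\beta\rho_2 i(\tone)}$. The key estimate is a lower bound on $i(\tone)$: using the phase-plane identity $\frac{di}{ds}=-1+\frac{\rho_2 N}{s}$, which on the traversed range $s\in[\rho_1 N,s(0)]$ satisfies $\frac{\rho_2 N}{s}\le\frac{\rho_2}{\rho_1}<1$, the pool $i$ gains at least $(1-\rho_2/\rho_1)$ per unit drop of $s$, so $i(\tone)\ge(1-\rho_2/\rho_1)(s(0)-\rho_1 N)\ge\frac12(1-\rho_2/\rho_1)(1-\rho_1)N$ for $N$ large (again using $c(0)=o((1-\rho_1)N)$). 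Substituting, the factors $N$ and $(\rho_1-\rho_2)$ cancel and $\ttwo-\tone\le\frac{2\rho_1}{\beta\rho_2(1-\rho_1)}\le\frac{C}{1-\rho_1}$ with $C=2/\gamma=O(1)$. Adding the two bounds yields $\sirpeak=\tone+(\ttwo-\tone)\le\frac{1}{\beta\rho_1-\gamma}\log\frac{\nu_2 N}{i(0)}+\frac{C}{1-\rho_1}$, as claimed.

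The only non-routine ingredient is the Phase-2 lower bound on $i(\tone)$: one needs that by the time $s$ has fallen to $\rho_1 N$ the infected pool has already grown to $\Omega\!\left((1-\rho_1)N\right)$, and the clean way to obtain this is the conserved-quantity relation $di/ds=-1+\rho_2 N/s$ together with the fact that $\rho_2/\rho_1$ is bounded below $1$. The rest is elementary manipulation of the two ODEs; the only bookkeeping care is keeping $N$ large enough that simultaneously $\rho_2<\rho_1<1$, $\beta\rho_1>\gamma$, and $s(0)>\rho_1 N$ --- all of which hold eventually because $1-\rho_1=1/\log\log N\to0$ and $c(0)=O(\log N)$. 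Combined with Proposition~\ref{prop:T1} (and $\sirrtpeak\le\sirpeak$), letting $N\to\infty$ makes the $\log\log N$ corrections negligible against the $\log N$ leading terms, which is what forces the ratio $\sircr/\sirpeak$ above $2/3$.
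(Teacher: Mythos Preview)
Your proof is correct and follows essentially the same approach as the paper. The paper likewise splits at $\tone$, bounds Phase~1 exactly as you do (its Lemma~3), bounds Phase~2 via the same inequality $\ttwo-\tone\le N(\rho_1-\rho_2)/(\beta\rho_2\, i(\tone))$ (its Lemma~2), and lower-bounds $i(\tone)$ by comparing the growth of $i$ to that of $c=N-s$ (its Lemma~1), which is precisely your phase-plane relation $di/ds=-1+\rho_2 N/s$ rewritten and yields the same estimate $i(\tone)\gtrsim(1-\rho_2/\rho_1)(1-\rho_1)N$.
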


The argument follows directly by taking the limit of the bounds we provide in Propositions~\ref{prop:T1}-\ref{prop:T2}. Specifically, using that the constants $\nu_{1}, \nu_{2}$ do not depend on $N$, we arrive at

\begin{align*}
\limsup_{N \rightarrow \infty} \frac{\sirpeak}{\sircr}
&\leq \limsup_{N \rightarrow \infty}  \frac{\frac{1}{\beta \rho_1 - \gamma} \log \frac{\nu_{2} N}{i(0)} + \frac{C}{(1-\rho_1)}}{\frac{1}{\beta - \gamma} \left( \frac{2}{3} \log\frac{\nu_{1} N}{c(0)^{3/2}}  + \log\frac{\nu_{1}^{2/3}}{ c(0)} \Big( 1-\frac{c(0)}{N^{2/3}} \Big) \right)} \\
&= \limsup_{N \rightarrow \infty}  \frac{\beta - \gamma}{\beta \rho_1 - \gamma}\cdot \frac{\log N + \log \nu_{2} - \log i(0)}{\frac{2}{3}\log N + \frac{4}{3}\log \nu_{1} - 2\log c(0) + \log \Big( 1-\frac{c(0)}{N^{2/3}} \Big) } \\
&+ \limsup_{N \rightarrow \infty} \frac{ (\beta - \gamma) C \log \log N}{\frac{2}{3}\log N + \frac{4}{3}\log \nu_{1} - 2\log c(0) + \log \Big( 1-\frac{c(0)}{N^{2/3}} \Big)}
\end{align*}

$\rho_1 \rightarrow 1$ as $N \rightarrow \infty$, so $\frac{\beta - \gamma}{\beta \rho_1 - \gamma} \rightarrow 1$.
Since $c(0) = O(\log(N))$ by assumption (and $i(0) \leq c(0)$), and $C = O(1)$ by Proposition~\ref{prop:T2}, the limits of the two summands above are $3/2$ and $0$ respectively, which concludes the proof.

\subsubsection{Proof of Proposition~\ref{prop:T1}.}
\begin{myproof}[Proof of Proposition~\ref{prop:T1}]
Define $\ti(t)$ such that $\ti(0) = i(0)$ and $\frac{d\ti}{dt} = (\beta - \gamma) \ti$, implying
\begin{align*}
\ti(t) = i(0) \exp\{ (\beta - \gamma) t\}.
\end{align*}
Since $\frac{d\ti}{dt} \geq \frac{di}{dt}$ for all $t$, $\ti(t) \geq i(t)$ for all $t$. Then, for all $t$,
\begin{align*}
\frac{ds}{dt} = - \beta \frac{s}{N} i \geq -\beta i
              &\geq - \beta \ti.
\end{align*}
Hence we can write
\begin{align*}
s(t)
&\geq s(0) +  \int_{0}^{t} - \beta \ti(t') dt' \\
&=s(0)- \beta  i(0)\int_{0}^{t}  \exp\{ (\beta  - \gamma) t'\} dt' \\
&= s(0)- \frac{\beta  i(0)}{\beta  - \gamma}( \exp\{ (\beta- \gamma) t \} -1)  \\
\end{align*}

Since $s(0) - s(\sircr) = N^{2/3} - c(0)$, setting $t = \sircr$ and solving for $\sircr$ in the inequality above results in
\begin{align*}
\sircr &\geq \frac{1}{\beta - \gamma} \log\left( \frac{\beta - \gamma}{\beta i(0)} (N^{2/3}-c(0)) \right) \\
&\geq \frac{1}{\beta - \gamma} \log\left( \frac{\beta - \gamma}{\beta c(0)} (N^{2/3}-c(0)) \right) \\
&= \frac{1}{\beta - \gamma} \left( \log\frac{\beta - \gamma}{\beta c(0)} (N^{2/3})  + \log\frac{\beta - \gamma}{\beta c(0)} \Big( 1-\frac{c(0)}{N^{2/3}} \Big) \right) \\
&= \frac{1}{\beta - \gamma} \left( \frac{2}{3} \log\frac{\nu_1 N}{c(0)^{3/2}}  + \log\frac{\nu_1^{2/3}}{ c(0)} \Big( 1-\frac{c(0)}{N^{2/3}} \Big) \right)
\end{align*}
for $\nu_1 = \left(\frac{\beta - \gamma}{\beta}\right)^{3/2}$ as desired.
\end{myproof}

\subsubsection{Proof of Proposition~\ref{prop:T2}.}
For $\rho \in [0, \frac{\gamma}{\beta}]$, let $t_\rho$ be the time $t$ when $\frac{s(t)}{N} = \rho$.
$\rho$ will represent the fraction of the total population that is susceptible.
Since $\rho \leq \frac{\gamma}{\beta}$, $i$ is increasing for the time period of interest.

Let $\beta > \gamma$, $N$ be fixed. Let $\rho_1 = 1 - \frac{1}{\log \log N}$ and $\rho_2 = \frac{\gamma}{\beta}$.
We assume $N$ is large enough that $\rho_1 > \rho_2$, hence $\tone < \ttwo$.
$\sirpeak = \ttwo$.

\begin{lemma} \label{lemma1}
For any $\rho \in [0, \frac{\gamma}{\beta}]$,
$i(t_\rho) \geq N(1-\rho)\frac{\beta \rho - \gamma}{\beta \rho}  - \frac{c(0)}{2}$.
\end{lemma}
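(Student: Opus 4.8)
The plan is to pass to the phase plane and treat $i$ as a function of the susceptible count $s$. Since $\frac{ds}{dt}=-\beta\frac{s}{N}i<0$ whenever $i>0$, the map $t\mapsto s(t)$ is strictly decreasing and invertible, so dividing the $i$-equation by the $s$-equation in \eqref{eq:sirfluid} (with $p=0$) gives
\[
\frac{di}{ds}=\frac{\beta\frac{s}{N}i-\gamma i}{-\beta\frac{s}{N}i}=-1+\frac{\gamma N}{\beta s}.
\]
Integrating this from $s(0)$ down to an arbitrary $s\le s(0)$ on the trajectory yields the exact identity
\[
i(s)=i(0)+\big(s(0)-s\big)+\frac{\gamma N}{\beta}\ln\frac{s}{s(0)} .
\]
Evaluating at $s=N\rho$ (for $\rho$ in the range where $t_\rho$ is defined, i.e. with $N\rho\le s(0)$ and $\rho$ above the final-size fraction) gives a closed form for $i(t_\rho)$; this is the starting point.

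From here I would derive the stated bound by three cheap estimates: drop $i(0)\ge0$; substitute $s(0)=N-c(0)$ so that $s(0)-N\rho=N(1-\rho)-c(0)$; and bound the logarithm via $\frac{N\rho}{s(0)}=\frac{\rho}{1-c(0)/N}\ge\rho$, hence $\ln\frac{N\rho}{s(0)}\ge\ln\rho$. This produces
\[
i(t_\rho)\ \ge\ N(1-\rho)-c(0)+\frac{\gamma N}{\beta}\ln\rho .
\]
Comparing with the target $N(1-\rho)\frac{\beta\rho-\gamma}{\beta\rho}-\frac{c(0)}{2}=N(1-\rho)-\frac{\gamma N(1-\rho)}{\beta\rho}-\frac{c(0)}{2}$ and cancelling $N(1-\rho)$, the lemma reduces to the scalar inequality
\[
\frac{\gamma N}{\beta}\Big(\ln\rho+\tfrac1\rho-1\Big)\ \ge\ \tfrac{c(0)}{2}.
\]

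The last ingredient is the elementary fact $\ln\rho+\frac1\rho-1\ge\frac{(1-\rho)^2}{2}$ on $(0,1]$: the difference of the two sides and its first derivative both vanish at $\rho=1$, and its second derivative is $\frac{2-\rho}{\rho^3}-1=\frac{2-\rho-\rho^3}{\rho^3}\ge0$ for $\rho\in(0,1]$, so the difference is nonnegative. Substituting, it suffices that $(1-\rho)^2\ge\frac{\beta c(0)}{\gamma N}$. The lemma is invoked in the proof of Proposition~\ref{thm:deterministic-time} only with $\rho\le\rho_1=1-\frac1{\log\log N}$, where $1-\rho\ge\frac1{\log\log N}$, and $c(0)=O(\log N)$; hence the condition becomes $\gamma N\ge\beta c(0)(\log\log N)^2$, which holds for all large $N$ — exactly the ``$N$ large'' regime already assumed throughout that section.

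I do not expect a genuine obstacle here: the argument is two standard SIR phase-plane facts plus a one-variable inequality. The only point needing care is that the crude logarithm estimate must leave enough slack — the positive ``curvature'' term $\ln\rho+\frac1\rho-1\asymp(1-\rho)^2$ has to dominate $\frac{c(0)}{2}$ — which is precisely why the useful range of $\rho$ must be bounded away from $1$ by $1/\log\log N$ rather than being all of $(0,1]$. (If instead one keeps $i(0)\ge0$ and uses the sharper bound $\ln\frac{N\rho}{s(0)}\ge\ln\rho+\frac{c(0)}{N}$, the subcase $\gamma/\beta\ge\frac12$ even goes through with no restriction on $\rho$, but this refinement is not needed downstream.)
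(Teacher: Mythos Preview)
Your argument is correct for the restricted range of $\rho$ you identify, and it suffices for the downstream use, but it takes a longer route than the paper and does not establish the lemma in its full stated generality. The paper's proof is a one-line ratio bound: for $t\le t_\rho$ one has $s(t)\ge N\rho$, so
\[
\frac{di/dt}{dc/dt}=1-\frac{\gamma N}{\beta s}\ \ge\ \frac{\beta\rho-\gamma}{\beta\rho},
\]
and integrating over $[0,t_\rho]$ gives $i(t_\rho)-i(0)\ge\frac{\beta\rho-\gamma}{\beta\rho}\big(c(t_\rho)-c(0)\big)$ with $c(t_\rho)=N(1-\rho)$. The paper then uses $i(0)\ge c(0)/2$ (equivalently $r(0)\le i(0)$) to absorb the $c(0)$ terms, obtaining the stated bound for \emph{every} $\rho$ in the range, with no large-$N$ requirement.

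Your exact phase-plane identity is sharper than the paper's linear bound, but you discard $i(0)$ (using only $i(0)\ge 0$) instead of $i(0)\ge c(0)/2$; the resulting shortfall must then be recovered from the curvature term $\ln\rho+\tfrac1\rho-1\asymp(1-\rho)^2$, which is what forces the side condition $(1-\rho)^2\ge\beta c(0)/(\gamma N)$ and hence the restriction to $\rho\le\rho_1$ for large $N$. Had you retained $i(0)\ge c(0)/2$, your own identity would yield the full lemma immediately via the elementary inequality $\ln\rho\ge 1-1/\rho$, with no restriction on $\rho$ or $N$. So the convexity estimate and the large-$N$ caveat are both avoidable; the input you are missing is simply $i(0)\ge c(0)/2$.
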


\begin{myproof}[Proof of Lemma~\ref{lemma1}]
Fix $\rho$.
At time $t_\rho$, the total number of people infected is $c(t_\rho) = i(t_\rho) + r(t_\rho) = N(1-\rho)$, by definition.
At any time $t \leq t_\rho$, the rate of increase in $i$ is $\frac{\beta \frac{s(t)}{N} - \gamma}{\beta \frac{s(t)}{N}} \geq \frac{\beta \rho - \gamma}{\beta \rho}$ of the rate of increase in $c$.
Therefore, $i(t_\rho) - i(0) \geq \big(\frac{\beta \rho - \gamma}{\beta \rho} \big) \big( c(t_\rho) - c(0) \big)$ and $i(t_\rho) \geq \big(\frac{\beta \rho - \gamma}{\beta \rho} \big) N(1-\rho) - \frac{\beta \rho - \gamma}{\beta \rho} c(0) +  i(0)$. Using the fact that $i(0) \geq \frac{c(0)}{2}$ and rearranging terms gives the desired result.
\end{myproof}

\begin{lemma} \label{lemma2}
For $t \in [\tone, \ttwo]$, where $\rho_2 > \rho_1$ for $\rho_1, \rho_2 \in [ 0, \frac{\gamma}{\beta}]$, $\ttwo-\tone \leq \frac{N(\rho_1-\rho_2)}{\beta \rho_2 i(\tone)}$.
\end{lemma}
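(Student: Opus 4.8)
The plan is to lower-bound the rate at which the susceptible count $s$ declines on the interval $[\tone, \ttwo]$ and then integrate. By the definition of $t_\rho$, the fraction $s(t)/N$ decreases monotonically from $\rho_1$ at time $\tone$ down to $\rho_2$ at time $\ttwo$; in particular $s(t) \geq \rho_2 N$ for every $t \in [\tone, \ttwo]$. Moreover, since $\rho_2 = \gamma/\beta$ and $s(t)/N \geq \rho_2$ on this interval, the infection equation gives $\frac{di}{dt} = (\beta \tfrac{s}{N} - \gamma)\, i \geq 0$ there, so $i$ is non-decreasing on $[\tone, \ttwo]$ and hence $i(t) \geq i(\tone) > 0$ for all such $t$ (positivity holding because $\tone < \ttwo \leq \sirpeak$, so $i$ is still in its increasing phase).

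Next I would use the ODE $\frac{ds}{dt} = -\beta \tfrac{s}{N} i$ (recall $p=0$ in the SIR regime) together with these two bounds to obtain, for every $t \in [\tone, \ttwo]$,
\[
-\frac{ds}{dt} \;=\; \frac{\beta}{N}\, s(t)\, i(t) \;\geq\; \frac{\beta}{N}\cdot \rho_2 N \cdot i(\tone) \;=\; \beta \rho_2\, i(\tone).
\]
Integrating this inequality over $[\tone, \ttwo]$ and using $s(\tone) - s(\ttwo) = (\rho_1 - \rho_2)N$ gives $(\rho_1 - \rho_2) N \geq \beta \rho_2\, i(\tone)\,(\ttwo - \tone)$, and dividing through by $\beta \rho_2\, i(\tone)$ yields the claimed bound $\ttwo - \tone \leq \frac{N(\rho_1 - \rho_2)}{\beta \rho_2\, i(\tone)}$.

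There is no serious obstacle here: the argument is a one-line estimate once the monotonicity of $i$ on $[\tone, \ttwo]$ is established, and that monotonicity is immediate from the choice $\rho_2 = \gamma/\beta$, which is exactly the susceptible fraction at which $di/dt$ changes sign, so $[\tone,\ttwo]$ lies entirely in the regime where $i$ is still growing. (For the inequality to be consistent with $\tone < \ttwo$ and with the positivity of the right-hand side, the hypothesis should be read as $\rho_1 > \rho_2$ rather than $\rho_2 > \rho_1$.)
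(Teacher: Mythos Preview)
Your proposal is correct and takes essentially the same approach as the paper: both arguments lower-bound $-\tfrac{ds}{dt} = \beta \tfrac{s}{N} i$ on $[\tone,\ttwo]$ by $\beta\rho_2\, i(\tone)$ using $s\geq \rho_2 N$ and the monotonicity of $i$, and then compare with $s(\tone)-s(\ttwo) = N(\rho_1-\rho_2)$. The only cosmetic difference is that the paper phrases the last step via the mean value theorem while you integrate the pointwise bound directly; your observation that the hypothesis should read $\rho_1>\rho_2$ is also correct.
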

\begin{myproof}[Proof of Lemma~\ref{lemma2}]
The difference in $s$ between $\tone$ and $\ttwo$ is $s(\tone) - s(\ttwo) = N(\rho_1 - \rho_2)$. As a consequence of the mean value theorem, $\frac{s(\ttwo) - s(\tone)}{\ttwo - \tone} \leq \max_{t \in [\tone, \ttwo]} \{ \frac{ds}{dt} \}$. Using these two expressions,
\begin{align*}
\frac{N(\rho_1 - \rho_2)}{\ttwo - \tone} &\geq \min\left\{ -\frac{ds}{dt} \right\} = \min\left\{\beta \frac{s(t)}{N} i(t) : t \in [\tone, \ttwo]\right\} \geq \beta \rho_2 i(\tone)
\end{align*}
The desired expression follows from rearranging terms.
\end{myproof}

\begin{lemma} \label{lemma3}
For any $\rho \leq \min\{\frac{\gamma}{\beta}, 1/2\}$,
$t_\rho \leq \frac{1}{\beta \rho - \gamma}\log \frac{\nu_2}{i(0)} N$, for $\nu_2 = \frac{2(\beta - \gamma)}{\beta}$.
\end{lemma}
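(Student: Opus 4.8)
\textbf{Proof plan for Lemma~\ref{lemma3}.} The plan is to convert a crude exponential lower bound on the infected population into an upper bound on $t_\rho$ by integrating the cumulative-count dynamics. Throughout, I work on the interval $[0,t_\rho]$ and use $\beta\rho-\gamma>0$ (the regime in which the bound is applied, e.g. $\rho=\rho_1$ in the proof of \cref{thm:deterministic-time} via \cref{prop:T2}). Since $s$ is non-increasing, $s(t)/N\ge\rho$ on $[0,t_\rho]$ by definition of $t_\rho$, so $\tfrac{di}{dt}=(\beta\tfrac{s}{N}-\gamma)i\ge(\beta\rho-\gamma)i$; integrating $\tfrac{d}{dt}\log i\ge\beta\rho-\gamma$ gives $i(t)\ge i(0)\exp\{(\beta\rho-\gamma)t\}$ for all $t\in[0,t_\rho]$.

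Next I would plug this into the cumulative dynamics: $\tfrac{dc}{dt}=-\tfrac{ds}{dt}=\beta\tfrac{s}{N}i\ge\beta\rho\, i(0)\exp\{(\beta\rho-\gamma)t\}$ on $[0,t_\rho]$. Integrating over $[0,t_\rho]$, using $c(t_\rho)=N(1-\rho)$ and $c(0)\ge0$, yields $N(1-\rho)\ge\frac{\beta\rho\, i(0)}{\beta\rho-\gamma}\bigl(\exp\{(\beta\rho-\gamma)t_\rho\}-1\bigr)$, which rearranges to $\exp\{(\beta\rho-\gamma)t_\rho\}\le 1+\frac{(\beta\rho-\gamma)(1-\rho)}{\beta\rho}\cdot\frac{N}{i(0)}$.

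It then remains to simplify the constant. Elementary algebra shows $\frac{(\beta\rho-\gamma)(1-\rho)}{\beta\rho}\le\frac{\beta-\gamma}{\beta}$: clearing denominators, this is equivalent to $\beta\rho^2-2\gamma\rho+\gamma\ge0$, which holds for every $\rho$ since the quadratic in $\rho$ has positive leading coefficient $\beta$ and discriminant $4\gamma(\gamma-\beta)<0$ (recall $\beta>\gamma$). Moreover $i(0)\le c(0)=O(\log N)$ while $\frac{\beta-\gamma}{\beta}$ is a fixed positive constant, so $\frac{(\beta-\gamma)N}{\beta i(0)}\ge1$ for all large $N$, and the additive $1$ is absorbed: $\exp\{(\beta\rho-\gamma)t_\rho\}\le\frac{2(\beta-\gamma)}{\beta}\cdot\frac{N}{i(0)}=\frac{\nu_2 N}{i(0)}$. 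Taking logarithms and dividing by $\beta\rho-\gamma>0$ gives $t_\rho\le\frac{1}{\beta\rho-\gamma}\log\frac{\nu_2 N}{i(0)}$, as claimed.

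Every step is elementary, so there is no serious obstacle; the only points requiring care are (i) that the comparison $i(t)\ge i(0)e^{(\beta\rho-\gamma)t}$ is used only on $[0,t_\rho]$ where $s/N\ge\rho$ is guaranteed, (ii) the constant inequality $\beta\rho^2-2\gamma\rho+\gamma\ge0$, and (iii) taking $N$ large enough that the ``$+1$'' term is dominated by $\frac{(\beta-\gamma)N}{\beta i(0)}$ — exactly the asymptotic regime in which this lemma is invoked.
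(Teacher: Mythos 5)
Your proposal is correct and follows essentially the same route as the paper: both establish the exponential lower bound $i(t)\ge i(0)e^{(\beta\rho-\gamma)t}$ on $[0,t_\rho]$ from $s/N\ge\rho$, integrate the resulting bound on $-ds/dt$ (you phrase it via $dc/dt$, but $c=N-s$ so it is the identical computation), and rearrange, with the factor $2$ in $\nu_2$ absorbing the additive $1$. Your explicit verification of $\beta\rho^2-2\gamma\rho+\gamma\ge0$ and of the largeness condition needed to absorb the $+1$ is if anything more careful than the paper's terse final step.
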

The proof of this lemma follows the exact same procedure as the proof of Proposition~\ref{prop:T1}.

\begin{myproof}[Proof of Lemma~\ref{lemma3}]
We proceed in the same way as the proof of Proposition~\ref{prop:T1} except in this case we will lower bound $s(0)-s(t)$. We achieve this by letting $\ti$ be defined to grow slower than $i$, so it is used as a lower bound.
Define $\ti(t)$ such that $\ti(0) = i(0)$ and $\frac{d\ti}{dt} = (\beta \rho - \gamma) \ti$, implying
\begin{align*}
\ti(t) = i(0) \exp\{ (\beta \rho - \gamma) t\}.
\end{align*}
Since $\frac{d\ti}{dt} \leq \frac{di}{dt}$ when , $\ti(t) \leq i(t)$ for all $t < \ttwo$. In addition, when $t < \ttwo$, $\frac{s}{N} \geq \frac{\gamma}{\beta} \geq \rho$. Then, for $t < \ttwo$,
\begin{align*}
\frac{ds}{dt} = - \beta \frac{s}{N} i
              &\leq - \beta \rho \ti.
\end{align*}
Hence we can write
\begin{align*}
s(t)
&\leq s(0) +  \int_{0}^{t} - \beta \rho \ti(t') dt' \\
&=s(0)- \beta \rho i(0)\int_{0}^{t}  \exp\{ (\beta \rho - \gamma) t'\} dt' \\
&= s(0)- \frac{\beta \rho i(0)}{\beta \rho - \gamma}( \exp\{ (\beta \rho - \gamma) t \} -1)  \\
\end{align*}

Since $s(t_\rho) = \rho N$,
\begin{align*}
\rho N &\leq  s(0) - \frac{\beta \rho i(0)}{\beta \rho - \gamma}( \exp\{ (\beta \rho - \gamma) t_\rho \} -1).
\end{align*}
Solving for $t_\rho$ results in
\begin{align*}
t_\rho
 \leq \frac{\log\left(\frac{\beta \rho - \gamma}{\beta \rho i(0)} (s(0) - \rho N) + 1 \right)}{\beta \rho - \gamma} \leq \frac{1}{\beta \rho - \gamma} \log \Big( \frac{\nu_2}{i(0)} N \Big )
\end{align*}
where $\nu_2 = \frac{2(\beta-\gamma)}{\beta}$, using the fact that $\rho \leq 1/2$. 
\end{myproof}

\begin{myproof}[Proof of Proposition~\ref{prop:T2}]
Using the results from Lemmas~\ref{lemma1}-\ref{lemma3},
\begin{align*}
\ttwo
&= \tone + (\ttwo - \tone) \\
&\leq \frac{1}{\beta \rho_1 - \gamma} \log \Big ( \frac{\nu_2}{i(0)} N \Big )
+ \frac{N(\rho_1-\rho_2)}{\beta \rho_2 i(\tone)} \\
&\leq \frac{1}{\beta \rho_1 - \gamma} \log \Big ( \frac{\nu_2}{i(0)} N \Big ) + \frac{N(\rho_1-\rho_2)}{\frac{\rho_2}{\rho_1} N (1-\rho_1) (\beta \rho_1 - \gamma) - \frac{\beta \rho_2}{2} c(0)} \\
&= \frac{1}{\beta \rho_1 - \gamma} \log \Big ( \frac{\nu_2}{i(0)} N \Big ) + \frac{C}{1-\rho_1},
\end{align*}
where $C = \frac{\rho_1 - \rho_2}{\frac{\rho_2}{\rho_1}(\beta \rho_1 - \gamma)
- \frac{\beta \rho_2}{2} \frac{c(0)}{N(1-\rho_1)}}$. Note that, as required in the statement, $C = O(1)$. Indeed,
\begin{align*}
    C =
    \frac{(\rho_1 - \rho_2)}{\frac{\rho_2}{\rho_1}(\beta \rho_1 - \gamma)
    - \frac{\beta \rho_2}{2}\frac{c(0)}{N(1-\rho_1)}} =
    \frac{1 - \rho_2 - \frac{1}{\log\log N}}{\beta \rho_2
    - \frac{\gamma \rho_2}{1 - \frac{1}{\log \log N}}
    - \frac{\beta \rho_2}{2}\frac{c(0) \log\log N}{N}},
\end{align*}
and so, as $N$ grows large, $C$ tends to $(1-\rho_2)/\rho_2( \beta  - \gamma)$ (recall that $c(0) = O(\log\log N)$).

\end{myproof}

\subsection{Proof of Proposition~\ref{thm:decision}} \label{sec:app:decision}
For a given $N$, let $\beta=1, \gamma = 1/2, f_0 = N, f_1 = c$.\footnote{We select $f_{1}$ in a way such that $f_1 c_{\infty} = N$, taking into account that $c_{\infty}$ is linear in $N$ in the deterministic SIR model.} We construct the following two instances: $\mathcal{M}_1 = (f_0, f_1, \beta, \gamma, N_1), \mathcal{M}_2 = (f_0, f_1, \beta, \gamma, N_2)$ where $N_{1} = N + N^{2/3}, N_{2} = N - N^{2/3}$. 

Intuitively, we need at least $m = N^{2/3}$ samples to distinguish between $\mathcal{M}_1$ and $\mathcal{M}_2$, which is precisely why a lower bound on the regret will be incurred. To be precise, let the policy $\pi_0$ be the policy that chooses to implement the drastic intervention at the beginning ($m=0$) since any intervention after $m=0$ will be worse. Then, we have
\begin{align*}
    \mathrm{cost}^{\pi_0}(\mathcal{M}_1) := N, \quad 
    \mathrm{cost}^{\pi_0}(\mathcal{M}_2) := N.
\end{align*}
Let the policy $\pi_1$ be the policy that does not implement the intervention at all. By our construction, we have
\begin{align*}
    \mathrm{cost}^{\pi_1}(\mathcal{M}_1) := N + N^{2/3}, \quad 
    \mathrm{cost}^{\pi_1}(\mathcal{M}_2) := N - N^{2/3}.
\end{align*}
The optimal cost for these two problem instances is given by
\begin{align*}
    \mathrm{cost}^{*}(\mathcal{M}_1) &:= N\\
    \mathrm{cost}^{*}(\mathcal{M}_2) &:= N - N^{2/3}.
\end{align*}

On the other hand, for any policy $\pi$, consider the probability of choosing to implement the drastic intervention given $m=N^{2/3}$ observations. Let
\begin{align*}
    p_1 := \text{Prob}(\pi(O_{m}) = \text{using drastic intervention}), \quad O_{m} \sim \mathcal{M}_1\\
    p_2 := \text{Prob}(\pi(O_{m}) = \text{using drastic intervention}), \quad O_{m} \sim \mathcal{M}_2.
\end{align*}
It is easy to verify that $|p_{1}-p_{2}| \leq D_{\mathrm{TV}}(O_{m}^{1}, O_{m}^2)$, where $O_{m}^{1} := O_{m}$ with $O_{m} \sim \mathcal{M}_{1}$ and $O_{m}^{2} := O_{m}$ with $O_{m} \sim \mathcal{M}_{2}$ and $D_{\mathrm{TV}}$ is the total variation distance. By Pinsker's inequality, $$
D_{\mathrm{TV}}(O_{m}^{1}, O_{m}^2)^2 \leq \frac{1}{2} D_{\mathrm{KL}}(O_{m}^1, O_{m}^2).
$$
Further, by the first-order approximation of KL divergence using Fisher information, we have $$
D_{\mathrm{KL}}(O_{m}^1, O_{m}^2) \lesssim (N_{1}-N_2)^2 J_{O_{m}}(N) = (N^{2/3})^2 \frac{m^3}{N^4} = \left(\frac{1}{N^{1/3}}\right)^2.
$$
This then implies 
\begin{align*}
    |p_1-p_2| = O(1/N^{1/3}).
\end{align*}
On the other hand, it is clear that in order to make the regret of $\mathcal{M}_1$ and $\mathcal{M}_2$ both less than $o(N^{2/3})$, $p_1$ must be close to $1$ and $p_{2}$ must be close to $0$. However, this contradicts the fact that $|p_1-p_2| = O(1/N^{1/3})$. Therefore, for any policy $\pi$, we have: 
$$
    \sup_{\mathcal{M} \in \{\mathcal{M}_1, \mathcal{M}_2\}}{\rm{regret}}^{\pi}(\mathcal{M}) = \Omega(N^{2/3}).
    $$
This completes the proof.

\subsection{Proof of Proposition~\ref{thm:sero}} \label{sec:app:sero}

Note that, conditioned on $N$, $O_{m}$ and $\tilde{O}_{m}$ are independent. Thus,
\begin{align*}
    J_{O_{m}\cup \tilde{O}_{m}}(N) = J_{O_{m}}(N) + J_{\tilde{O}_{m}}(N).
\end{align*}
Note that $J_{O_{m}}(N) = \Theta\left(\frac{m^{3}}{N^4}\right)$ has been calculated in the main theorem. It is sufficient to consider $J_{\tilde{O}_{m}}(N)$, which is
\begin{align*}
    J_{\tilde{O}_{m}}(N) = K J_{\mathrm{Ber}(\kappa_{m})}(N)
\end{align*}
since $X_{k}$ are independent from each other. Note that for any function $\eta(N)$, we have
\begin{align*}
    J_{\mathrm{Ber}(\eta)}(\eta(N)) 
    &= \frac{1}{\eta(1-\eta)}\left(\frac{d\eta(N)}{dN}\right)^2.
\end{align*}
Using $\eta(N) = E[C_{m}]/N$, we have
\begin{align*}
    J_{\mathrm{Ber}(\eta)}(\eta(N)) 
    &= \frac{N^2}{E[C_{m}](N-E[C_{m}])}\frac{E[C_{m}]^2}{N^4}\\
    &= \frac{E[C_{m}]}{N^2(N-E[C_{m}])}. 
\end{align*}
Note that $E[C_{m}] = \Theta(m)$ and $m = o(N).$ Therefore, 
\begin{align*}
    J_{\tilde{O}_{m}}(N) = K\cdot J_{\mathrm{Ber}(\kappa_{m})}(N)  =  \Theta\left(\frac{K m}{N^3}\right)
\end{align*}
which completes the proof. 


  \section{Datasets}
\label{sec:datasets}

Here we provide details on the datasets used in Section~\ref{sec:experiments}.

\subsection{Amazon product reviews}

For the Bass model, we use the Amazon product dataset of \cite{ni2019justifying}, which contains product reviews for Amazon products over more than twenty years. We take these reviews as a proxy for sales. Products in Amazon's electronics category typically have review trajectories well-approximated by the Bass model, marked by slow initial adoption and a long tail of sales towards the end of the product lifecycle -- see Figure~\ref{fig:amazon-trajectories} for examples of such trajectories. For our experiments, we randomly selected 100 products with over four years of reviews, and over 100 reviews by the fourth year. Review counts are taken at a weekly granularity. Here we use $N_{\max} = 1e5$ -- an order of magnitude larger than any of the true product sales numbers in the dataset.

\begin{figure}[htbp]
  \centering
\includegraphics[width=0.5\linewidth]{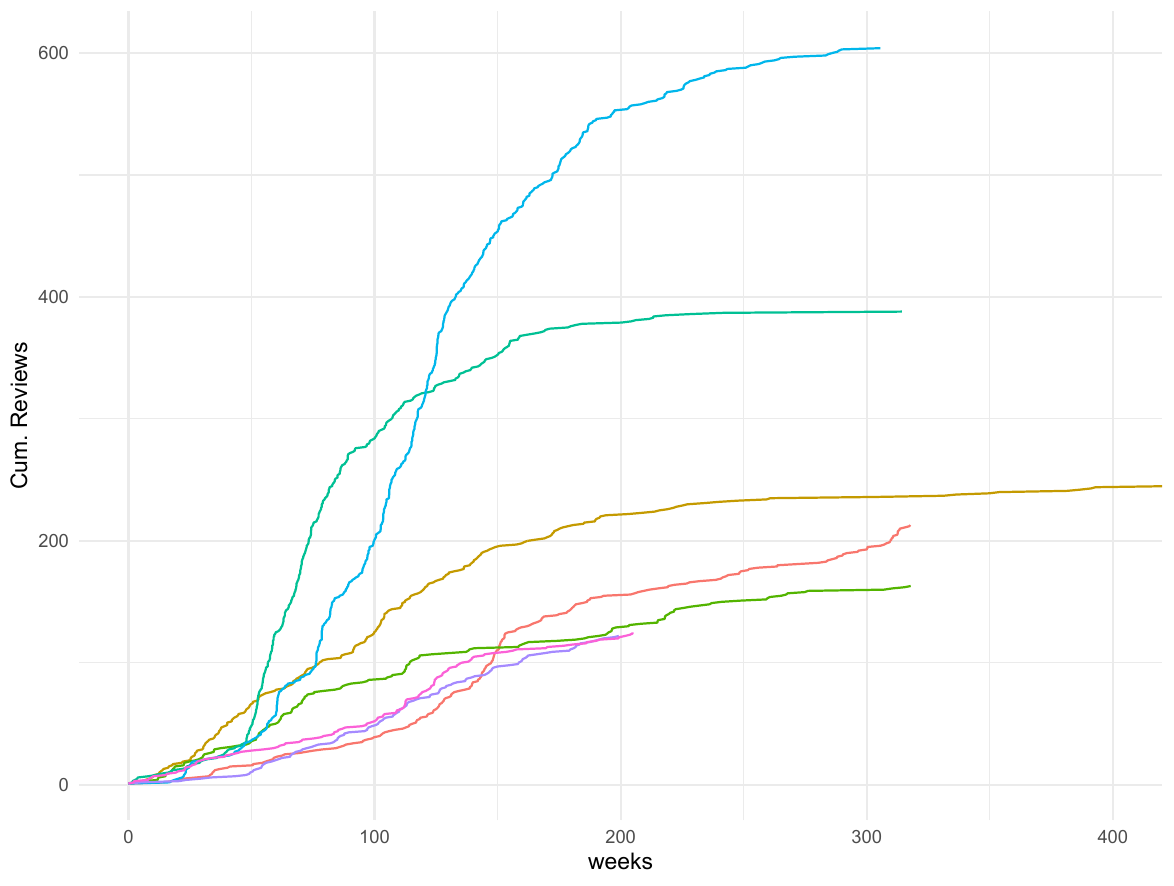}
  \caption{Cumulative weekly product reviews for randomly selected products from our subset of the Amazon dataset.}
  \label{fig:amazon-trajectories}
\end{figure}

\subsection{CDC ILINet influenza database}

For the SIR model, we use the CDC's ILINet database of patient visits for flu-like illnesses in the United States, broken down by Department of Health and Human Services region. Each instance in the dataset consists of weekly patient visits in a given region, over the course of one year. Each year starts in September, at the low point of the flu season. We use data from 2010 through 2019 for each of 10 regions, for 100 instances total. As the dataset only includes cumulative infections $C_{i}[t]$, rather than observations of infection and recoveries $I_{i}[t], R_{i}[t]$, we simulate these based on the dynamics \eqref{eq:discrete-dynamics}.

Here, we take $\gamma = 0.24$ as in \cite{chowell2008seasonal}, and $a$ is assumed to be 0. We take $N_{\max}$ to be the total patient population (including for non-flu illnesses) in the dataset.

\subsection{COVID-19 Datasets}
\label{sec:app:covariates}

For observed COVID-19 cases, we use publicly available case data from the ongoing COVID-19 epidemic provided by \cite{dong2020interactive}. We aggregate data into sub-state regions, corresponding broadly to public health service areas. The median state has seven regions. Here we take $\gamma = 1/4$.

The dataset contains static demographic covariates and time-varying mobility features that affect the disease transmission rate. The dynamic covariates proxy mobility by estimating the daily fraction of people staying at home relative to a region-specific benchmark of activity in early March before social distancing measures were put in place. We also include a regional binary indicator of the days when the fraction of people staying home exceeds the benchmark by 0.2 or more.

These data are provided by Safegraph, a data company that aggregates anonymized location data from numerous applications in order to provide insights about physical places. To enhance privacy, SafeGraph excludes census block group information if fewer than five devices visited an establishment in a month from a given census block group. Documentation can be found at \cite{Safegraph2020}.

The static covariates capture standard demographic features of a region that influence variation in infection rates. These features fall into several categories:

\begin{itemize}
	\item Fraction of individuals that live in close proximity or provide personal care to relatives in other generations. These covariates are reported by age group by state from survey responses conducted by \cite{UMichHRS}.
	\item Family size from U.S. Census data, aggregated and cleaned by \cite{Claritas}.
	\item Fraction of the population living in group quarters, including colleges, group homes, military quarters, and nursing homes (U.S. Census via \cite{Claritas}).
	\item Population-weighted urban status (US Census via \cite{Claritas})
	\item Prevalence of comorbidities, such as cardiovascular disease and hypertension (\cite{CDCAtlas})
	\item Measures of social vulnerability and poverty (U.S. Census via \cite{Claritas}; \cite{CDC_SVI})
	\item Age, race and occupation distributions (U.S. Census via \cite{Claritas})
\end{itemize}

\section{Detailed description of the COVID-19 model}
\label{sec:prior}

\subsection{Approximating the arrival process with latent state}
Recall the stochastic SIR process, ${ (S(t), I(t), R(t)) : t \geq 0 }$, a multi-variate counting process determined by parameters $(N, \beta, \gamma )$. We now allow $\beta$ to be time-varying, yielding a counting process with jumps $C_k - C_{k-1} \sim \mathrm{Bern}\left\{ \beta S_{k-1} /( \beta_k S_{k-1} + \gamma N I(t))\right\}$.

We obtain discrete-time diffusion processes, $\{(S_i[t], I_i[t], R_i[t]): t \in \mathbb{N}\}$ for instances $i \in \mathcal{I}$ by considering the Euler-approximation to the stochastic diffusion process \eqref{eq:deftk} (e.g. \cite{jacod2005approximate}). Specifically, let $\Delta I[t] = I[t] - I[t - 1]$, and define $\Delta S[t]$ and $\Delta R[t]$ analogously. A discrete-time approximation to the SIR process is then given by:

\begin{equation}
  \label{eq:discrete-diffusion}
\begin{aligned}
  \Delta S_{i}[t+1]  &=  - \beta_{i}[t] ({S_{i}[t]}/{N_{i}}) I_{i}[t]  + \nu^S_{i,t}\\
  \Delta I_{i}[t+1] &= \beta_{i}[t] ({S_{i}[t]}/{ N_{i}}) I_{i}[t] - \gamma I_{i}[t]  + \nu^I_{i,t}\\
  \Delta R_{i}[t+1] &= \gamma I_{i}[t] + \nu^R_{i,t}
\end{aligned}
\end{equation}

where $\{ \nu^S_{i,t} \}, \{ \nu^I_{i,t} \}, \{ \nu^R_{i,t} \}$ are appropriately defined martingale difference sequences.

In the real world, the SIR model is a latent process -- we never directly observe any of the state variables $S_{i}[t], I_{i}[t], R_{i}[t]$. Instead, we observe $C_{i}[t] = I_{i}[t] + R_{i}[t] = N_i - S_i[t]$.  The MLE problem for parameters $(N, \beta)$ is simply $\max_{(\beta, N)} \sum_{i,t} \log \mathbb{P}\left( C_{i}[t] |  \beta, N\right)$.

This is a difficult non-linear filtering problem (and an interesting direction for research). We therefore consider an approximation: Denote by $\{(s_i[t], i_i[t], r_i[t]): t \in \mathbb{N}\}$ the deterministic process obtained by ignoring the martingale difference terms in the definition of the discrete time SIR process. We consider the approximation $C_i[t] =  N_i - S_i[t] \sim (N_i - s_i[t]) \omega_{i}[t]$, where $\omega_{i}[t]$ is log-normally distributed with mean $1$ and variance $\exp(\sigma^2) - 1$.

Under this approximation, we have the log likelihood function

\begin{equation}
  \label{eq:cum-llh}
  \log p(C_{i}[t] | N, \beta) = \left( \log C_i[t] - \log \left(N_i - s_i[t]\right) \right)^2
\end{equation}

\subsection{Two-Stage Estimation of the SIR model}

We parameterize our estimates of $N$ as $\hat{N}_i(\phi, \delta) = \exp(\phi ^\top Z_{i} + \delta_{i}) P_{i}$, where $Z_{i}$ are non-time-varying, region-specific covariates, $P_i$ is the population of region $i$, $\phi$ is a vector of fixed effects, and $\delta_{i} \sim \mathcal{N}(0, \sigma^{2}_{\delta})$ are region-specific random effects.

Demographic and mobility factors also influence the reproduction rate of the disease. To model these effects, we estimate $\beta_{i}[t]$ as a mixed effects model incorporating covariates  $\beta_{i}[t] = \exp( X_i[t] ^\top \theta) + \epsilon_{i}$, where $\theta$ is a vector of fixed effects, and $\epsilon_{i} \sim \mathcal{N}(0, \sigma^{2}_{\epsilon})$ is a vector of random effects.

Given observations up to time $T$, we then estimate the model parameters $(\theta, \phi, \delta, \epsilon)$ in two stages:

\begin{enumerate}
  \item Estimate the peak parameters $\hat{\phi}, \hat{\delta}$ via MLE, for the regions $i \in Q[t]$:
    $$\hat{\phi}, \hat{\delta} = \arg \max_{\phi, \delta} \left\{\max_{\theta, \epsilon} \left\{ \sum_{i \in Q[t]} \sum_{t \in [T]}\log p \left(C_{i}[t] \,\big\vert\, \beta_{i}(\theta, \epsilon), \hat{N}_{i}(\phi, \delta) \right)  + \log p(\epsilon , \delta) \right\} \right\}$$
    where $p$ is the likelihood defined in \eqref{eq:cum-llh}. We let $\hat{\delta}_{i} = 0$ for  $i \notin Q[t]$.
  \item Estimate the remaining parameters over all regions $i \in \mathcal{I}$:
    \begin{equation}
      \label{eq:llh2}
      \hat{\theta}, \hat{\epsilon} = \arg \max_{\theta, \epsilon} \left\{ \sum_{i \in \mathcal{I}} \sum_{t\in [T]} \log p \left(C_{i}[t] \,\big\vert\, \beta_{i}(\theta, \epsilon), \hat{N}_{i}(\hat{\phi}, \hat{\delta}), \right) + \log p(\epsilon, \delta) \right\}
    \end{equation}
\end{enumerate}

We note that \eqref{eq:llh2} is differentiable with respect to the parameters ($\theta, \epsilon, \phi, \delta$), and we solve it (or a weighted version) using Adam \citep{kingma2014adam}.\footnote{Adam was run for 20k iterations, with learning rate tuned over a coarse grid. A weighted version of the loss function in \eqref{eq:llh2} with weights for $(i,t)$th observation set to $C_i[t]$ worked well.}

To identify the set $Q[t]$ of regions for which the variance of $\hat{N}$ may be small, we simply look for regions that have passed their peak rate of new infections. Concretely, we define $Q[t]$ as:
\begin{equation}
\label{eq:peaks}
Q[t] = \{ i \in \mathcal{I}:  C_{i}[t] - C_{i}[t - 1] \leq \gamma_1 \max_{\tau \leq t} \left(C_{i}[\tau] - C_{i}[\tau - 1]\right) \},
\end{equation}
where $\gamma_1 \in (0, 1)$ is a hyperparameter.

\subsection{Performance relative to other models}

To contextualize the quality of the Two-Stage model, we compare our analyzed models to the widely used IHME model \cite{ihme}. We note that there exist comparable models that may serve as stronger baselines; we include these results merely to demonstrate that the Two-Stage model yields high-quality predictions, comparable to widely-cited models in the literature.

Figure~\ref{fig:ihme} compares state-level\footnote{Due to IHME only providing state-level predictions. Additionally IHME only offers deaths predictions for these vintages; we show WMAPE on deaths for IHME and WMAPE on infections for MLE and Two-Stage.} WMAPE for MLE, Two-Stage and IHME models, for vintages stretching back 28 days. The IHME model up to this date is, in effect, an SI model with carefully tuned parameters. We report published IHME forecasts; 10 vintages of that model were reported between April 21 and May 21.
{\it Two Stage} dominates IHME across all model vintages.

\begin{figure}[H]
  \centering
  \includegraphics[scale=0.45]{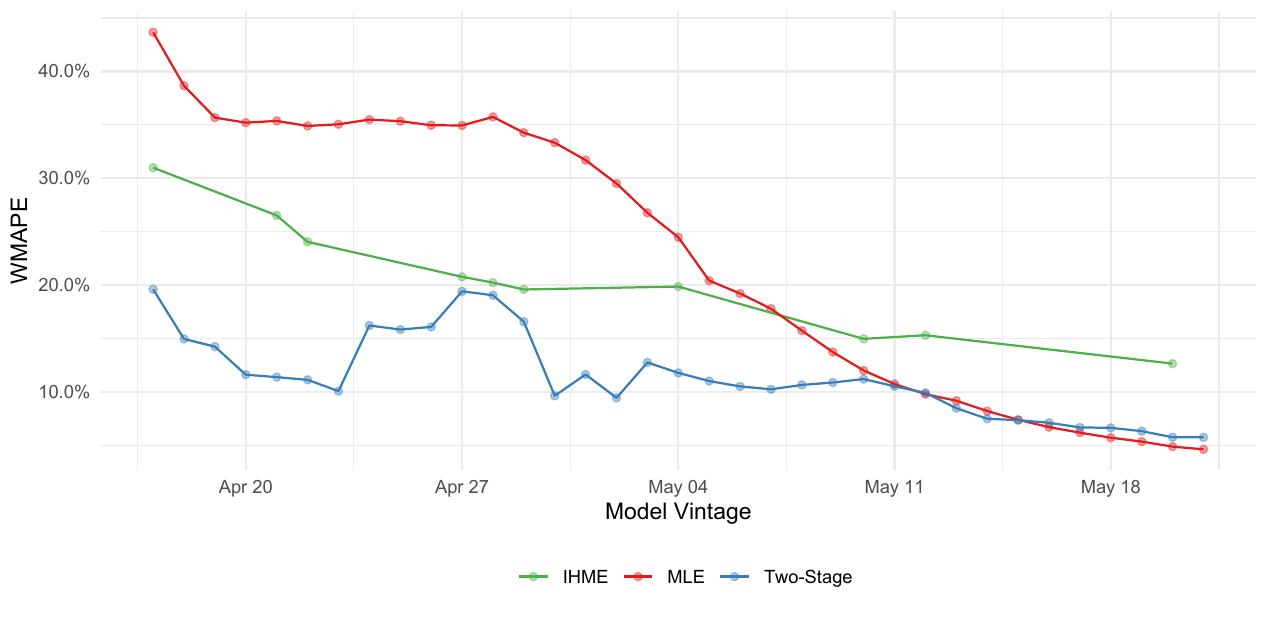}
  \caption{WMAPE for predicting state-level cumulative cases on May 21, 2020, comparing MLE and the Two-Stage approach against IHME.}
  \label{fig:ihme}
\end{figure}

\end{document}